    \theoremstyle{plain}
    \newtheorem{theorem}{Theorem}
    \newtheorem{lemma}{Lemma}
    \newtheorem{corollary}{Corollary}
    \newtheorem{proposition}{Proposition}
    \theoremstyle{definition}
    \newtheorem{definition}{Definition}
\newcommand
\def\citet{\citep} 
\newcommand{\citeyearp}[1]{[\citeyear{#1}]}
\newcommand{\EKhref}[2]{
    URL: \href{#1}{\nolinkurl{#2}}
    }
\newcommand{\RR}{{\mathbb R}}
\newcommand{\CC}{{\mathbb C}}
\newcommand{\FF}{{\mathbb F}}
\DeclareMathOperator\gsize{gsize}
\DeclareMathOperator\PER{PER}
\DeclareMathOperator\DETfam{DET}
\DeclareMathOperator\per{per}
\DeclareMathOperator\domdef{def}
\DeclareMathOperator\BoolP{BP}
\DeclareMathOperator\Mon{Mon}
\chardef\ttlb="7B 
\chardef\ttrb="7D 
\chardef\ttti="7E
\begin{document}

\title{
Symmetric Determinantal Representation of Formulas\\
and Weakly Skew Circuits
}

\author{
Bruno Grenet$^{\textrm{*}}$
\and Erich L. Kaltofen$^{\textrm{\dag}}$
\and Pascal Koiran$^{\textrm{*}}$
\and Natacha Portier$^{\textrm{*\ddag}}$
}

\date{}

\maketitle

\vspace{3em}

\begin{center}
\framebox[1.1\width]{Rapport de Recherche RRLIP2010-24}
\end{center}

\vspace{2em}

\footnotetext{
$^{\textrm{*}}$LIP, UMR 5668, ENS de Lyon -- cnrs -- UCBL -- INRIA,
\'Ecole Normale Sup\'erieure de Lyon, Université de Lyon
and Department of Computer Science, University of Toronto \\
{\ttfamily \{Bruno.Grenet,Pascal.Koiran,Natacha.Portier\}@ens-lyon.fr}}

\footnotetext{
$^{\textrm{\dag}}$ 
Dept.\ of Mathematics, North Carolina State University,
Raleigh, North Carolina 27695-8205, USA\\
{\ttfamily kaltofen@math.ncsu.edu};
\url{http://www.kaltofen.us}\\
This material is based on work supported in part
by the National Science Foundation under Grants
CCF-0830347 and CCF-0514585.}

\footnotetext{
$^{\textrm{\ddag}}$ partially funded by European Community under contract PIOF-GA-2009-236197 of the 7th PCRD.}

\begin{abstract}
We deploy algebraic complexity theoretic techniques to construct
symmetric determinantal representations of formulas and weakly skew
circuits.  Our representations  produce
matrices of much smaller dimensions than those given in the convex
geometry literature when applied to polynomials having a concise
representation (as a sum of monomials, or more generally as an arithmetic 
formula or a weakly skew circuit).
These representations are valid in any field of characteristic different 
from 2. In characteristic 2 we are led to an almost complete solution
to a question of  Bürgisser on the $\VNP$-completeness 
of the partial permanent.
In particular, we show that the  partial permanent cannot be $\VNP$-complete
in a finite field of characteristic 2 unless the polynomial hierarchy collapses.
\end{abstract}

\section{Introduction} 

\subsection{Motivation} 
\label{sec:motiv}

A linear matrix expression
(symmetric linear matrix form, 
affine symmetric matrix pencil) 
is a symmetric matrix with the
entries being linear forms in the variables $x_1,\ldots,x_n$ and
real number coefficients:
\begin{equation}\label{eq:LME}
A(x_1,\ldots,x_n) = A_0 + x_1 A_1 + \cdots + x_n A_n,
\text{\quad $A_i$ symmetric in $\RR^{t\times t}$}.
\end{equation}
A linear matrix inequality (LMI) restricts to those values $\xi_i\in \RR$ of the $x_i$
such that $A(\xi_1,\ldots,\xi_n)\succeq 0$, i.e., is positive semidefinite.
The set of all such values defines a spectrahedron.

A \emph{real zero polynomial} is a polynomial $p$ with real coefficients such that for every $x\in\RR^n$ and every $\mu\in\CC$, $p(\mu x)=0$ implies $\mu\in\RR$. The Lax conjecture and generalized Lax conjecture seek for representations of real zero polynomials $f(x_1$, $\ldots$, $x_n)$ 
as $f = \det(A)$ with $A$ as in  \eqref{eq:LME} and $A_0 \succeq 0$. This is in fact an equivalent formulation of the original Lax conjecture which was stated in terms of hyperbolic polynomials (see~\citep{LPR05} for this equivalence). Furthermore, the matrices are required to have dimension $d$ where $d$ is the degree of the polynomial. For $n=2$ such representations always exist while a counting argument shows that this is impossible for $n>2$ \citep{HV06} (actually, 
\citep{LPR05} give the first proof of the Lax conjecture in its original form based on the results of 
\citep{HV06}). Two relaxations have been suggested to 
evade
this counting argument: At first it was suggested to remove the dimension constraint and seek for bigger matrices, and this was further relaxed by seeking for representations of some power of the input polynomial. Counterexamples to both relaxations have recently been constructed \citep{Branden2010}. 

Another relaxation is to drop the condition $A_0 \succeq 0$ and represent any $f$ as $\det(A)$ \citep{HMcCV06,Quarez08}.  However, the purely algebraic construction of \citep{Quarez08} leads to exponential matrix dimensions $t$. Here we continue the line of work initiated in \citep{HMcCV06,Quarez08} but we proceed differently by symmetrizing the complexity theoretic construction by Valiant \citeyearp{Val79}.  Our construction yields smaller dimensional matrices not only for polynomials represented as sums of monomials but also for polynomials represented by formulas and weakly skew circuits \citep{MP08,KaKoi08}. Even though in the most general case the bounds we obtained are slightly 
worse than Quarez's \citeyearp{Quarez08}, in a lot of interesting cases such as polynomials with a polynomial size formula or weakly-skew circuit, or in the case of the permanent, our constructions yield much smaller matrices (see Section~\ref{sec:comp}). 

Our constructions are valid for any field of characteristic different from $2$.
For fields of characteristic $2$, 
it can be shown that some polynomials (such as e.g. the polynomial $xy+z$)
cannot be represented as determinants of symmetric matrices~\citep{GMT}.
Note as a result that the 2-dimensional permanent $xw+yz$ 
cannot be ``symmetrized''
over characteristic~$2$ with any dimension.
It would be interesting to exactly characterize which polynomials admit such a
representation in characteristic $2$.
For the polynomial $x+y$, we have 
$$
  x+y = \det(
  \begin{bmatrix}
  0 & x & 0 & y & -1\\
  x & 0 & 1 & 0 & 0 \\
  0 & 1 & 0 & -1 & 0 \\
  y & 0 & -1 & 0 & 1/2\\
  -1 & 0 & 0 & 1/2 & 0
  \end{bmatrix}
  )
  = \det(
  \begin{bmatrix}
  x & 0 & 0 & 1\\
  0 & y & 0 & 1\\
  0 & 0 & 1 & 0\\
  1 & 1 & 0 & 0
  \end{bmatrix}
  ),
$$
where the first matrix is derived from our construction, but the second
is valid over any commutative ring.
It is easily shown 
that for every
polynomial $p$, its square $p^2$ admits a symmetric determinantal representation in
characteristic $2$.
This is related to a question of B\"urgisser
\citeyearp{Burgisser}: Is the partial permanent $\VNP$-complete over fields of
characteristic $2$? We give an almost complete negative answer to this
question.

Our results give as a by-product an interesting result which was not known to the authors' knowledge: Let $A$ be an $(n\times n)$ matrix with indeterminate coefficients (ranging over a field of characteristic different from $2$), then there exists a symmetric matrix $B$ of dimensions 
$O(n^5)$ which entries are the indeterminates from $A$ and constants from the field such that $\det A=\det B$. This relies on the existence of a size-$O(n^5)$ weakly-skew circuit to compute the determinant of an $(n\times n)$ matrix \citep{Ber84,MP08}, and this weakly-skew circuit can be represented by a determinant of a symmetric matrix as proved in this paper. 
The dimensions 
of $B$ can be reduced to $O(n^4)$ if we replace the weakly skew
circuits from \citep{Ber84,MP08} by the  skew circuits of size $O(n^4)$
constructed by Mahajan and Vinay~\citeyearp{MV97}. These authors construct an arithmetic
branching program for the determinant with $O(n^4)$ edges,\footnote{This bound can be found on p.11 of their paper.}
and the arithmetic branching program can
be evaluated by a skew
circuit of size $O(n^4)$.
After learning of our result, Meena Mahajan 
and Prajakta Nimbhorkar have noticed that the 
arithmetic branching program for the determinant
can be transformed directly into a symmetric determinant of dimensions 
$O(n^3)$
with techniques similar to the ones used 
in this paper. 
We give a detailed proof in Subsection~\ref{sec:Meena}.

We add that the assymptotically smallest known division-free algebraic circuits for the $n\times n$ determinant
polynomial have size $O(n^{2.70})$ \citep{Ka92:issac,KaVi04:2697263}.  The circuits actually can
compute the characteristic polynomial and the adjoint and are based on algebraic rather than combinatorial
techniques.  Weakly skew circuits of such size appear not to be known.

\vspace{1ex}
\noindent{\bfseries Organization.} 
Section~\ref{sec:known} is devoted to an introduction to the algebraic complexity theoretic used in our constructions, as well as a reminder of the existing related constructions in algebraic complexity. Section~\ref{sec:formulas} deals with symmetric representations of formulas while Section~\ref{sec:ws-circuits} focuses on weakly-skew circuits. Table~\ref{tab:summary} page~\pageref{tab:summary} gives an overview of all the different constructions used in this paper. Section~\ref{sec:comp} then proceeds to the comparisons between the results obtained so far and Quarez's~\citeyearp{Quarez08}. The special case of fields of characteristic $2$ is studied in Section~\ref{sec:char2}.\\
A shorter version of this paper \citep{GKKP11} has been published in Proceedings of STACS 2011. 
It contains material from Section~\ref{sec:ws-circuits} and Section~\ref{sec:char2}.

\vspace{1ex}
\noindent{\bfseries Acknowledgments.}
We learned of the symmetric representation problem from Markus Schweighofer's
ISSAC 2009 Tutorial\\ 
\url{http://www.math.uni-konstanz.de/\~{}schweigh/presentations/dcssblmi.pdf}.

We thank Meena Mahajan for pointing out~\citep{MV97}, sketching 
the 
construction of a symmetric determinant of dimensions 
$O(n^3)$ from a
determinant of dimensions 
$n$ 
and reading our proof of it.

\subsection{Known results and definitions} 
\label{sec:known}

In his seminal paper Valiant \citeyearp{Val79} expressed the polynomial computed by an arithmetic formula as the determinant of a matrix whose entries are constants or variables. If we define the \emph{skinny size} $e$ of the formula as its number of arithmetic operations then the dimensions of the matrix are 
at most $e+2$. The proof uses a weighted digraph construction where the formula is encoded into paths from a source vertex to a target, sometimes known as an Algebraic or Arithmetic Branching Program \citep{N91,BG99}.
This theorem shows that every polynomial with a sub-exponential size formula can be expressed as a determinant with sub-exponential dimensions, 
enhancing the prominence of linear algebra. A slight variation of the theorem is also used to prove the universality of the permanent for formulas which is one of the steps in the proof of its $\VNP$-completeness. In a tutorial, von zur Gathen \citeyearp{vzG87} gives another way to express a formula as a determinant: his proof does not use digraphs and his bound is $2e+2$. Refining his techniques, Liu and Regan \citeyearp{LR06} gave a construction leading to an upper bound of $e+1$ in 
a slightly more powerful model: 
multiplications by constant are free and do not count into the size of the formula. 

Our purpose here is to express a formula as a determinant of a symmetric matrix. Multiplications by constant are also given for free. Our construction uses paths in graphs, similar to the paths in digraphs in Valiant's original proof. In fact, this original construction appears to have a little flaw in it. Interestingly enough, this flaw has never been mentioned in the literature to the authors' knowledge. A slight change in the proof is given in \citep[Exercise 21.7 (p570)]{BurgisserClausenShokrollahi} that settles a part of the problem. And the same flaw appears in the proof of the universality of the permanent in \citep{Burgisser}. When adding two formulas, the resulting digraph can have two arcs between the source and the target, which can lead to the sum of two variables being an entry of the matrix, and this is not allowed as we seek for symmetric matrices where each entry is either a constant or a variable. 
The first idea to correct the proof is to keep the same parity for all $s$-$t$-paths 
as in Valiant's original proof, adding two new vertices and replacing one of the arcs by a length-three path. This method is very simple but its disadvantage is that it increases the dimensions 
of the final matrix to $2e+3$. In the symmetric case we will use a 
$-1$ coefficient to correct the parity differences between paths instead of adding new vertices. Using this technique in the non-symmetric case allows us to prove Valiant's theorem with $(e+1)$ instead of $(e+2)$. Our technique also gives for free multiplications by constants as in \citep{LR06}. It uses digraphs and is to our opinion more intuitive than direct work on matrices.

In \citep{Tod92,MP08}, results of the same flavor were proved for a more general class of circuits, namely the \emph{weakly-skew} circuits. Malod and Portier~\citeyearp{MP08} can deduce from those results a fairly simple proof of the $\VQP$-completeness of the determinant (under $qp$-projection). Moreover, they define a new class $\VP_{ws}$ of polynomials represented by polynomial-size weakly-skew circuits (with no explicit restriction on the degree of the polynomials) for which the determinant is complete under $p$-projection. A formula is a circuit in which every vertex has out-degree $1$ (but the output). This means in particular that the underlying digraph is a tree. A weakly-skew circuit is a kind of generalization of a formula, with a less constrained 
structure on the underlying digraph. For an arithmetic circuit, the only restriction on the digraph is the absence of directed cycles (that is the underlying digraph is a directed acyclic graph). A circuit is said weakly-skew if every multiplication gate $\alpha$ has the following property: the sub-circuit associated with one of its arguments $\beta$ is connected to the rest of the circuit only by the arrow going from $\beta$ to $\alpha$. This means that the underlying digraph is disconnected as soon as the multiplication gate $\alpha$ is removed. In a sense, one of the arguments of the multiplication gate was specifically computed for this gate.

Toda \citeyearp{Tod92} proved that the polynomial computed by a weakly-skew circuit of skinny size $e$ can be represented by the determinant of a matrix of dimensions 
$(2e+2)$. This result was improved by Malod and Portier \citeyearp{MP08}: The construction leads to a matrix of dimensions 
$(m+1)$ where $m$ is the \emph{fat size} of the circuit (\emph{i.e.} its total number of gates, including the input gates). Note that for a circuit in general and for a weakly-skew circuit in particular $m\le 2e+1$. The latter construction uses negated variables in the matrix. It is actually possible to get rid of them \citep{KaKoi08}. Although the skinny size is well suited for the formulas, the fat size appears more appropriate for weakly-skew circuits. In Section~\ref{sec:ws-circuits}, we symmetrize this construction so that a polynomial expressed by a weakly-skew circuit equals the determinant of a symmetric matrix. Our construction yields a symmetric matrix of dimensions $(2m+1)$. 
In fact, this can be refined as well as the non-symmetric construction. An even more appropriate size for a weakly-skew circuit is $(e+i)$ where $e$ is the skinny size and $i$ the number of inputs labelled by a variable (clearly $e+i\le m$). We can show that the bounds are still valid if we replace $m$ by $(e+i)$ and even when multiplications by constants are free as in \citep{LR06} (see Section~\ref{sec:green-size}). 

Let us now give some formal definitions of the arithmetic circuits and related notions.

\begin{definition}
An \emph{arithmetic circuit} is a directed acyclic graph with vertices of in-degree $0$ or $2$ and exactly one vertex of out-degree $0$. Vertices of in-degree $0$ are called \emph{inputs} and labelled by a constant or a variable. The other vertices, of in-degree $2$, are labeled by $\times$ or $+$ and called \emph{computation gates}. 
The vertex of out-degree $0$ is called the \emph{output}. The vertices of a circuit are commonly called \emph{gates} and its arcs \emph{arrows}. 
\end{definition}

An arithmetic circuit with constant inputs in a field $k$ and variables in a set $\bar x$ naturally computes a polynomial $f\in k[\bar x]$. 

\begin{definition}
If $\alpha$ is a gate of a circuit $C$, the \emph{sub-circuit associated to $\alpha$} is the subgraph of $C$ made of all the gates $\beta$ such that there exists a oriented path from $\beta$ to $\alpha$ in $C$, including $\alpha$. A gate $\alpha$ receiving arrows from $\beta$ and $\gamma$ is said to be \emph{disjoint} if the sub-circuits associated to $\beta$ and $\gamma$ are disjoint from one another. The gates $\beta$ and $\gamma$ are called the \emph{arguments} of $\alpha$.
\end{definition}

\begin{definition}
An arithmetic circuit is said \emph{weakly-skew} if for any multiplication gate $\alpha$, the sub-circuit associated to one of its arguments $\beta$ is only connected to the rest of the circuit by the arrow going from $\beta$ to $\alpha$: it is called the \emph{closed} sub-circuit of $\alpha$. 
A gate which does not belong to a closed sub-circuit of $C$ is said to be \emph{reusable} in $C$. 

A \emph{formula} is an arithmetic circuit in which all the gates are disjoint.
\end{definition}

The reusability of a gate depends of course on the considered circuit $C$. For instance, in Fig.~\ref{fig:circuit-weak-form}\subref{fig:ex-weak}, the weakly-skew circuit 
has two closed sub-circuits. The input $z$ is in the right closed sub-circuit and is therefore not reusable. But inside this closed sub-circuit, it is reusable, and actually used as argument to the summation gate twice. Figures~\ref{fig:circuit-weak-form}\subref{fig:ex-circuit} and \subref{fig:ex-form} 
are respectively an equivalent arithmetic circuit and an equivalent formula, that is the two circuits and the formula compute the polynomial $(x+y)^2+2yz$. 

Let us remark a fact that will be useful later: all the multiplication gates of a weakly-skew circuit are disjoint  (but this is not a sufficient condition).
\begin{figure}[tbp]
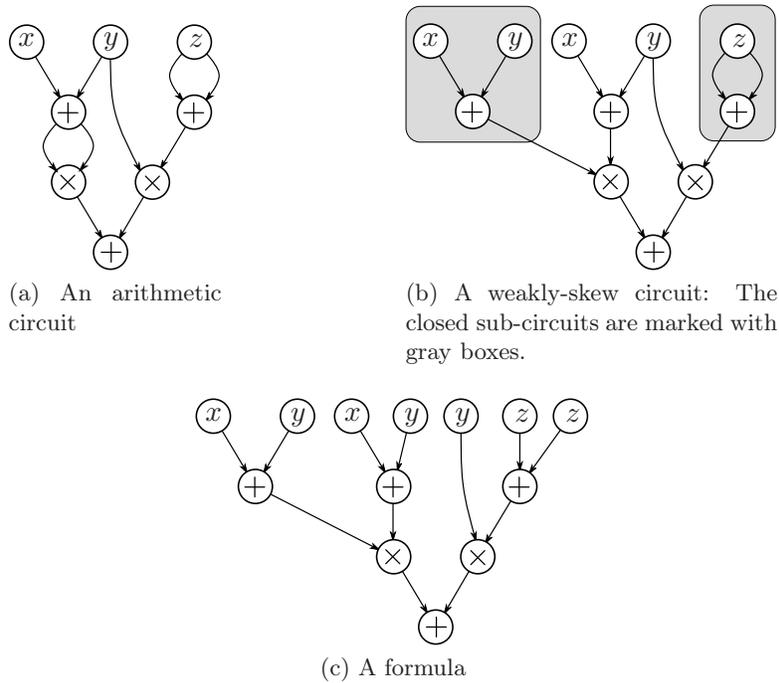

\centering
\subfloat[An arithmetic circuit]{\label{fig:ex-circuit}\input{ex-circuit.svg_tex}}\hfil
\subfloat[A weakly-skew circuit: The closed sub-circuits are marked with gray boxes.]{\label{fig:ex-weak}\input{ex-weak.svg_tex}}\\
\subfloat[A formula]{\label{fig:ex-form}\input{ex-form.svg_tex}}
\caption{An arithmetic circuit, a weakly-skew circuit and a formula computing the same polynomial $(x+y)^2+2yz$.}
\label{fig:circuit-weak-form}
\end{figure}

In our constructions, we shall use \emph{graphs} and \emph{digraphs}. In particular, the improved construction based on Valiant's represents formulas by paths in a digraph. On the other hand, to obtain symmetric determinantal representations the digraphs have to be symmetric. These correspond to graphs. 
In order to avoid any confusion between directed and undirected graphs, we shall exclusively use the term graph for undirected ones, and otherwise use the term digraph. It is well-known that cycle covers in digraphs are in one-to-one correspondence with permutations of the vertices and therefore that the permanent of the adjacency matrix of a digraph can be defined in terms of cycle covers of the digraph. 
Let us now give some definitions for those facts, and see how it can be extended to graphs.

\begin{definition}
A \emph{cycle cover} of a digraph $G=(V,A)$ is a set of cycles such that each vertex appears in exactly one cycle. The \emph{weight} of a cycle cover is defined to be the product of the weights of the arcs used in the cover. Let the \emph{sign} of a vertex cover be the sign of the corresponding permutation of the vertices, that is $(-1)^N$ where $N$ is the number of even cycles. Finally, let the \emph{signed weight} of a cycle cover be the product of its weight and sign.

For a graph $G=(V,E)$, let $G^d=(V,A)$ be the corresponding symmetric digraph. Then a cycle cover of $G$ is a cycle cover of $G^d$, and the definitions of weight and sign are extended to this case. In particular, if there is a cycle cover of $G$ with a cycle $C=(u_1,\dots,u_k)$, then a new cycle cover is defined if $C$ is replaced by the cycle $(u_k,\dots,u_1)$. Those two cycle covers are considered as different cycle covers of $G$.
\end{definition}

\begin{figure}[tbp]
\centering
\includegraphics[scale=1.2]{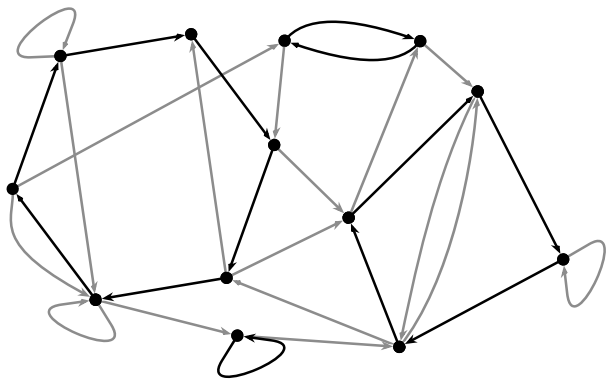}
\caption{A graph with a cycle cover (the arcs in the cover are in black).}
\label{fig:cycle-cover}
\end{figure}

\begin{definition}
Let $G$ be a digraph. Its \emph{adjacency matrix} is the $(n\times n)$ matrix $A$ such that $A_{i,j}$ is equal to the weight of the arc from $i$ to $j$ ($A_{i,j}=0$ is there is no such arc). The definition is extended to the case of graphs, seen as symmetric digraphs. In particular, the adjacency matrix of a graph is symmetric.
\end{definition}

\begin{lemma}\label{lemma:perm-det}
Let $G$ be a (di)graph, and $A$ its adjacency matrix. Then the permanent of $A$ equals the sum of the weights of all the cycle covers of $G$, and the determinant of $A$ is equal to the sum of the signed weights of all the cycle covers of $G$.
\end{lemma}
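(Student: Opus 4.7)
The plan is to reduce everything to the classical fact that there is a weight-preserving bijection between the cycle covers of a (di)graph on vertex set $V = \{1,\dots,n\}$ and the permutations in $S_n$. I would start from the definitions
\[
\per(A) = \sum_{\sigma \in S_n} \prod_{i=1}^n A_{i,\sigma(i)}, \qquad
\det(A) = \sum_{\sigma \in S_n} \operatorname{sgn}(\sigma) \prod_{i=1}^n A_{i,\sigma(i)},
\]
and observe that given a digraph $G = (V,A)$ with adjacency matrix $A$, a permutation $\sigma \in S_n$ contributes a nonzero monomial precisely when, for every $i$, the entry $A_{i,\sigma(i)}$ is nonzero, i.e.\ when there is an arc from $i$ to $\sigma(i)$ in $G$. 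The disjoint cycle decomposition of $\sigma$ then gives a collection of directed cycles covering every vertex exactly once, i.e.\ a cycle cover of $G$; conversely, any cycle cover corresponds to the permutation sending each vertex to its successor along its cycle. (Fixed points of $\sigma$ are loops in $G$ and count as length-$1$ cycles, which is compatible with the definitions.)

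Next I would match the monomial weights: the product $\prod_i A_{i,\sigma(i)}$ is exactly the product of the weights of all the arcs of the cycle cover, so it coincides with the weight of the corresponding cover. Summing over $\sigma$ therefore gives the permanent formula immediately.

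For the determinant, the only extra ingredient is the sign. I would use the standard fact that a $k$-cycle, viewed as a permutation, has sign $(-1)^{k-1}$, so that a cycle contributes $-1$ to $\operatorname{sgn}(\sigma)$ iff its length is even. Multiplying these signs over the cycles in the decomposition of $\sigma$ yields $\operatorname{sgn}(\sigma) = (-1)^{N}$, where $N$ is the number of even-length cycles; this is precisely the sign assigned to the cycle cover in the definition. Combining with the weight computation from the previous paragraph gives the determinant formula.

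Finally, for the undirected case I would simply invoke the convention already set up in the excerpt: a graph $G$ is identified with its symmetric double $G^d$, and a cycle cover of $G$ is by definition a cycle cover of $G^d$, with weights and signs computed in $G^d$. Since the adjacency matrix of $G$ is the adjacency matrix of $G^d$, the two identities follow at once from the digraph case. The one subtle point worth emphasising (rather than a real obstacle) is that a cycle of length $k \geq 3$ and its reverse are counted as two distinct cycle covers of $G$, which is exactly what is needed so that the bijection with $S_n$ remains intact; this matches the final sentence of the cycle-cover definition.
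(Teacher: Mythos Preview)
Your proof is correct and follows essentially the same approach as the paper: both set up the bijection between cycle covers and permutations of $\{1,\dots,n\}$, identify the weight of a cover with the monomial $\prod_i A_{i,\sigma(i)}$, and (for the determinant) match the cover's sign $(-1)^N$ with $\operatorname{sgn}(\sigma)$. Your write-up is simply more explicit than the paper's, which dispatches the lemma in a few sentences and leaves the undirected case implicit.
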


\begin{proof}
The cycle covers are obviously in one-to-one correspondence with the permutations of the set of vertices, and the sign of a cycle cover is defined to match the sign of the corresponding permutation. Suppose that the vertices of $V$ are $\{1,\dots,n\}$ and let $A_{i,j}$ be the weight of the arc $(i,j)$ in $G$. Let $C$ a cycle cover and $\sigma$ the corresponding permutation. Then it is clear that the weight of $C$ is $A_{1,\sigma(1)}\cdots A_{n,\sigma(n)}$, hence the result.
\end{proof}

The validity of this proof for graphs follows from the definition of the cycle covers of a graph in terms of the cycle covers of the corresponding symmetric digraph. In the following, the notion of perfect matching is used. A \emph{perfect matching} in a graph $G$ is a set $M$ of edges of $G$ such that every vertex is incident to exactly one edge of $M$. The weight of a perfect matching is defined in this as the weight of the corresponding cycle cover (with length-$2$ cycles). This means that this is the product of the weights of the arcs it uses, or equivalently it is the square of the product of the weights of the edges it uses. Note that this is the square of the usual definition.

A \emph{path} $P$ in a digraph is a subset of vertices $\{u_1,\dots,u_k\}$ such that for $1\le i\le k-1$, there exists an arc from $u_i$ to $u_{i+1}$ with nonzero weight. The size $|P|$ of such a path is $k$.

\section{Formulas} 
\label{sec:formulas}

\subsection{Non-symmetric case} 
\label{sec:valiant}

In this section, as in Sections~\ref{sec:sym-form} and \ref{sec:ws-circuits}, a field $k$ of characteristic different from $2$ is fixed and the constant inputs of the formulas and the weakly-skew circuits are taken from $k$. The variables are supposed to belong to a countable set $\bar x=\{x_1,x_2,\dots\}$. Following \citep{LR06}, we define a formula size that does not take into account multiplications by constants.

\begin{definition}\label{def:green-size}
Consider formulas with inputs being variables or constants from $k$. The green size $\gsize(\varphi)$ of a formula $\varphi$ is defined inductively as follows:
\begin{itemize}
\item The green size of a constant or a variable is $0$;
\item If $c$ is a constant then the green size of $c \times \varphi$ is equal to the green size of $\varphi$;
\item If $\varphi_1$ and $\varphi_2$ are formulas, then $\gsize(\varphi_1+\varphi_2)=\gsize(\varphi_1)+\gsize(\varphi_2)+1$.
\item If $\varphi_1$ and $\varphi_2$ are non-constant formulas, then $\gsize(\varphi_1\times\varphi_2)=\gsize(\varphi_1)+\gsize(\varphi_2)+1$
\end{itemize}
\end{definition}

An even smaller size can be defined by deciding that every variable-free formula has size zero and Theorem~\ref{liu-regan} can easily be extended to this case. 
A formal definition of this size is given is Section \ref{sec:green-size} in the context of weakly-skew circuits.

\begin{theorem}[\citep{LR06}]
\label{liu-regan}
For every formula $\varphi$ of green size $e$ with at least one addition there is a square matrix $A$ of dimensions 
$e+1$ whose entries are inputs of the formula and elements of $\{0,1,-1,1/2\}$ such that $\varphi=\det(A)$.
\end{theorem}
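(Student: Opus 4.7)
The plan is to proceed by structural induction on $\varphi$, following the Valiant-style encoding of a formula as a weighted digraph whose adjacency matrix has determinant equal to the formula (via Lemma~\ref{lemma:perm-det}). For each sub-formula $\psi$ of green size $e'$, I would associate a weighted digraph $G_\psi$ on $e' + 1$ vertices, carrying a distinguished structure (a single ``special'' vertex, or a source-target pair) that governs how it combines with other such digraphs under $+$, $\times$, and constant multiplication. The final matrix $A$ is the adjacency matrix of $G_\varphi$.

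The base cases are a single variable $x$ or constant $c$ (green size $0$), encoded by a one-vertex digraph with a self-loop of weight $x$ (resp.\ $c$), yielding the $1 \times 1$ matrix $[x]$. A constant multiplication $c \cdot \psi$ multiplies the weight of a single arc of $G_\psi$ by $c$, leaving the dimension unchanged in line with the invariance of green size under constant multiplications. A product $\psi_1 \cdot \psi_2$ is handled by taking the disjoint union of $G_{\psi_1}$ and $G_{\psi_2}$: the resulting adjacency matrix is block-diagonal, its determinant is $\psi_1 \cdot \psi_2$, and the vertex count $(e_1 + 1) + (e_2 + 1) = e + 1$ matches the updated green size $e = e_1 + e_2 + 1$.

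The interesting case is the sum $\psi_1 + \psi_2$. I would again work on the $(e_1 + 1) + (e_2 + 1) = e + 1$ vertices of $G_{\psi_1} \sqcup G_{\psi_2}$, but modify a small number of arcs at distinguished vertices, using only weights in $\{0, 1, -1, 1/2\}$, so that the signed cycle covers of the new graph sum to $\psi_1 + \psi_2$ rather than to $\psi_1 \psi_2$. The minimal example
\[
  x + y \;=\; \det \begin{pmatrix} x & -1 \\ y & 1 \end{pmatrix}
\]
illustrates the mechanism: one summand $x$ arises as the product of the two $1$-cycle self-loops (of weights $x$ and $1$), while the other summand $y$ arises from the $2$-cycle of weight $(-1) \cdot y$, whose even-length sign cancels the $-1$ crossing-arc weight. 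The general construction should similarly split each sum into a ``diagonal'' contribution from $\psi_1$ and a ``crossing'' contribution from $\psi_2$, using $-1$ weights to compensate for even-cycle signs and $1/2$ weights to normalize contributions that would otherwise be double-counted across composed sum operations.

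The main obstacle lies precisely in this sum case: one must verify that the simple $x + y$ recipe generalizes cleanly to arbitrary $\psi_1 + \psi_2$, carefully tracking which vertices play the role of source and target in each sub-digraph and the parity of the $s$-$t$-paths through them. This is exactly the parity flaw pointed out earlier regarding Valiant's original argument: Valiant's fix introduces an extra vertex and gives dimension $e + 2$, whereas the Liu--Regan construction corrects the same problem via signed arc weights, saving a vertex and delivering the improved bound $e + 1$. Once the sign and parity bookkeeping is in place, the induction closes and all entries of $A$ lie in the required set of inputs together with $\{0, 1, -1, 1/2\}$.
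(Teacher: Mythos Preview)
Your plan has a genuine architectural gap: the invariant you maintain is too weak, and two of your three inductive steps fail as stated.

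The paper does \emph{not} maintain $\det(A_\psi)=\psi$ through the induction. Instead (Lemma~\ref{construction-pas-sym}) it builds for each sub-formula $\psi$ a loop-free digraph on $e'+2$ vertices with distinguished vertices $s,t$ and a side constant $c_0$ such that $c_0\sum_{P}(-1)^{|P|}w(P)=\psi$, the sum ranging over $s$-$t$-paths. Multiplication concatenates paths by merging $t_1$ with $s_2$; addition merges $s_1$ with $s_2$ and adds a single arc $t_2\to t_1$ of weight $-c_2/c_1$; constant multiplication merely updates $c_0$. Only at the very end are $s$ and $t$ merged (bringing the vertex count down to $e+1$), loops are placed, and $c_0$ is absorbed into the unique constant-weight outgoing arc created by some addition gate --- this is exactly where the hypothesis ``at least one addition'' enters.

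Your scheme breaks concretely at three places. First, scaling a single arc by $c$ does not scale the determinant by $c$ unless every nonzero cycle cover uses that arc, and you give no mechanism guaranteeing this (the paper sidesteps the issue by carrying $c_0$ externally). Second, disjoint union for multiplication yields the correct determinant but destroys any $s$-$t$ backbone you would need afterwards for addition; the paper's path concatenation preserves it. Third, your $2\times2$ example for $x+y$ is not ``disjoint union plus extra arcs'': it \emph{relocates} the input $y$ from a loop to an off-diagonal entry, and when $\psi_2$ is itself a large sub-formula there is no single entry to relocate --- no local modification of the block-diagonal matrix turns its determinant $\psi_1\psi_2$ into $\psi_1+\psi_2$. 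The remedy is to drop the determinantal invariant during the induction and carry the $s$-$t$-path invariant instead, closing it up only once at the end.
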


We remark 
that if $\varphi$ has no addition it is of the form $cx_1\dots x_n$ and it has size $(n-1)$. Then a suitable matrix is the $(n+1)\times(n+1)$ diagonal matrix made of the $n$ variables and the constant $c$. Thus the dimensions of the matrix are at most 
$n+1=e+2$, and are $n=e+1$ if $c=1$. Note that this latter bound is minimal as the determinant of a $(d\times d)$ matrix is a degree-$d$ polynomial. The dimensions $(n+1)$ are 
not minimal when $c\neq 1$ as shown by the $(3\times 3)$ matrix 
\[\begin{bmatrix} 0&x&y\\ x&0&z\\ y&z&0\end{bmatrix}\]
representing $2xyz$. One can also see that the $n$ bound cannot be general as there is no $(2\times 2)$ matrix representing the polynomial $2xy$.

\begin{lemma}
\label{construction-pas-sym}
Let $\varphi$ be an arithmetic formula of green size $e$. Then there exists a constant $c_0$ and an edge-weighted digraph $G$ with at most $e+2$ vertices and two distinct vertices $s$ and $t$ such that
\[c_0\cdot\sum_{\text{$s$-$t$-path $P$}}(-1)^{|P|} \ w(P) = \varphi.\]
\end{lemma}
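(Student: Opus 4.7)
The plan is to argue by structural induction on the formula tree of $\varphi$, following the clauses of Definition~\ref{def:green-size}, and to maintain throughout the invariant that $c_0 \ne 0$. For the base cases, $\varphi$ is either a constant $c$ or a variable $x$ (both of green size $0$): I would take the digraph $G$ on two vertices $\{s,t\}$ consisting of the single arc $s\to t$ weighted by the leaf value, and set $c_0=1$. The unique $s$-$t$ path then has size $|P|=2$, so $(-1)^{|P|}w(P)=\varphi$, and the vertex count equals $2=e+2$.

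For the inductive step, I would split on the outermost operation of $\varphi$. If $\varphi=c\cdot\varphi_1$ with $c\in k$, then $\gsize(\varphi)=\gsize(\varphi_1)$, and I would recycle the inductive $G_1$ unchanged and set $c_0:=c\cdot c_0^{(1)}$. If $\varphi=\varphi_1+\varphi_2$, I would form $G$ by taking the disjoint union of $G_1$ and $G_2$ and identifying $s_1=s_2=s$, $t_1=t_2=t$, with all other vertices kept distinct. Since $V(G_1)\cap V(G_2)=\{s,t\}$ and $s$-$t$ paths are simple, any such path of $G$ stays entirely in one of the $G_i$ (it cannot cross between the two components without revisiting $s$ or $t$); hence
\[
\sum_{P} (-1)^{|P|}\,w(P) \;=\; \frac{\varphi_1}{c_1}+\frac{\varphi_2}{c_2}.
\]
Rescaling the weight of one arc of $G_2$ by the scalar $c_2/c_1$ aligns the two denominators, after which $c_0:=c_1$ does the job. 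The vertex count is $(e_1+2)+(e_2+2)-2=e+1\le e+2$.

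If $\varphi=\varphi_1\times\varphi_2$ with both factors non-constant (so $e=e_1+e_2+1$), I would concatenate $G_1$ and $G_2$ by identifying $t_1$ with $s_2$, and set $s:=s_1$, $t:=t_2$. The glue vertex is a cut vertex separating $s$ from $t$, so every $s$-$t$ path of $G$ decomposes uniquely as $P=P_1\cdot P_2$ with $|P|=|P_1|+|P_2|-1$; the identity $(-1)^{|P|}=-(-1)^{|P_1|}(-1)^{|P_2|}$ together with $w(P)=w(P_1)w(P_2)$ gives
\[
\sum_P (-1)^{|P|}\,w(P) \;=\; -\,\frac{\varphi_1}{c_1}\cdot\frac{\varphi_2}{c_2},
\]
so taking $c_0:=-c_1c_2$ yields $c_0\sum_P(-1)^{|P|}w(P)=\varphi$. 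The vertex count is $(e_1+2)+(e_2+2)-1=e+2$, which matches the claimed bound exactly.

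The main obstacle, such as it is, lies in the path-decomposition book-keeping: in the additive glue one must verify that simplicity of paths together with $V(G_1)\cap V(G_2)=\{s,t\}$ really prevents any path from mixing vertices of $G_1$ and $G_2$, and in the multiplicative glue one must check that concatenation is a bijection with the stated size identity — it is that $-1$ offset in $|P|=|P_1|+|P_2|-1$ that is responsible for the critical minus sign forcing $c_0=-c_1c_2$. Maintaining the invariant $c_0\ne 0$ is then automatic: the base cases use $c_0=1$, and each inductive combination $c\cdot c_0^{(1)}$, $c_1$, $-c_1c_2$ preserves nonvanishing. The degenerate cases in which a subformula is identically zero are trivial: represent them with the two-vertex graph having no arcs, on which the signed path sum is $0$ for any choice of nonzero $c_0$.
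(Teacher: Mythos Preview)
Your overall strategy—structural induction with a nonzero scalar $c_0$—is the right one, and your base cases, the constant-multiplication clause, and the multiplication gate are handled correctly. The gap is in the addition case. After identifying both endpoints $s_1=s_2=s$ and $t_1=t_2=t$, you propose to reconcile the constants by ``rescaling the weight of one arc of $G_2$ by $c_2/c_1$''; but there is no reason a single arc of $G_2$ should lie on \emph{every} $s_2$-$t_2$-path. Already for $\varphi_2=a+b$ your own recursion produces two parallel arcs $s_2\to t_2$, and rescaling only one of them yields $a\,c_2/c_1+b$ rather than $(a+b)\,c_2/c_1$. So this step fails as written. A secondary issue is that gluing both endpoints can create parallel arcs between $s$ and $t$; the paper explicitly avoids this because downstream the edge weights are read off as matrix entries.

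The paper handles addition differently: it merges only the sources, setting $s:=s_1=s_2$, keeps $t:=t_1$, and adds a single new arc $t_2\to t_1$ of weight $-c_2/c_1$, with $c_0:=c_1$. Then every $s$-$t$-path is either an $s_1$-$t_1$-path in $G_1$, or an $s_2$-$t_2$-path in $G_2$ followed by the new arc; the extra vertex flips the sign and the arc weight simultaneously performs the rescaling, giving $\varphi_1/c_1+\varphi_2/c_1$ as required, with vertex count $|G_1|+|G_2|-1\le e+2$. If you want to keep your parallel-merge picture, a correct fix is to rescale \emph{all} arcs of $G_2$ leaving $s_2$ by $c_2/c_1$ (each path uses exactly one such arc), but you then still need to address the possible multi-edge between $s$ and $t$.
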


\begin{proof}[Proof of Lemma~\protect\ref{construction-pas-sym}]
We prove the lemma by induction on formulas. If $\varphi$ is equal to a variable $x$ (resp. a constant $c$) then $G$ has two vertices $s$ and $t$ and an edge $(s,t)$ labelled by $x$ (resp. $c$) and the constant $c_0$ is equal to $1$. 

If $\varphi=c \times \varphi'$ let $G'$ be the digraph and $c'_0$ the constant satisfying the lemma for the formula $\varphi'$. Then obviously $G=G'$ and $c_0=c'_0 c$ satisfy the lemma for $\varphi$. 

If $\varphi=\varphi_1\times \varphi_2$, let $G_1$ and  $c_1$  (resp. $G_2$ and  $c_2$) satisfying the lemma for $\varphi_1$ (resp. $\varphi_2$). Then let $c=c_1c_2$ and $G$ be the disjoint union of $G_1$ and $G_2$, except for $t_1$ and $s_2$ which are merged (see Fig~\ref{fig:val-prod-cst}). 

    \begin{figure}[htbp]
    \begin{center}
    \input{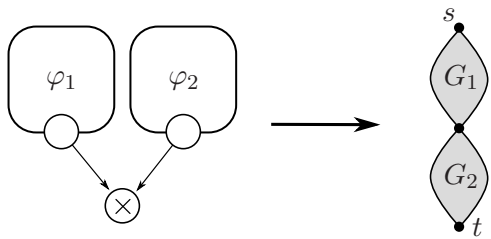}
    \caption{$G_1,c_1$ and $G_2,c_2$ are respectively associated to $\varphi_1$ and $\varphi_2$; $\varphi=\varphi_1\times \varphi_2$.}
    \label{fig:val-prod-cst}
    \end{center}
    \end{figure}
The size of $G$ is equal to $|G_1|+|G_2|-1 \leq \gsize(\varphi_1)+\gsize(\varphi_2)+3=\gsize(\varphi)+2$. A $s$-$t$-path $P$ in $G$ is a $s_1$-$t_1$-path $P_1$ in $G_1$ followed by a $s_2$-$t_2$-path $P_2$ in $G_2$ and we have $|P|=|P_1|+|P_2|-1$ and $w(P)=w(P_1)\times w(P_2)$, hence the result. 

If $\varphi=\varphi_1 + \varphi_2$, let $G_1$ and  $c_1$  (resp. $G_2$ and  $c_2$) satisfying the lemma for $\varphi_1$ (resp. $\varphi_2$). If $c_1=0$ then $\varphi$ and $\varphi_2$ compute the same polynomial and we just have to take $G=G_2$ and $c=c_2$. Suppose now $c_1 \neq 0$. Then we define $G$ as the disjoint union of $G_1$ and $G_2$, except for $s_1$ and $s_2$ which are merged, and with an edge $(t_2, t_1)$ of weight $-c_2/c_1$ (see Fig~\ref{fig:val-sum-cst}). 

    \begin{figure}[htbp]
    \begin{center}
    \input{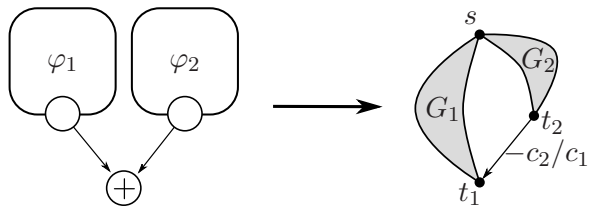}
    \caption{$G_1,c_1$ and $G_2,c_2$ are respectively associated to $\varphi_1$ and $\varphi_2$; $\varphi=\varphi_1+\varphi_2$.}
    \label{fig:val-sum-cst}
    \end{center}
    \end{figure}
The size of $G$ satisfies the same relation as in the multiplication case. Let $c_0=c_1$. A $s$-$t$-path $P$ in $G$ is a $s_1$-$t_1$-path in $G_1$ or a $s_2$-$t_2$-path $P_2$ in $G_2$ followed by the edge $(t_2, t_1)$, and in the second case we have $w(P)=w(P_2)(-c_2/c_1)$ and $|P|=|P_2|+1$, hence the result. Remark that $t_2$ has only one outgoing edge and its weight is a constant, and that this property will not be changed in the inductive construction. This property will be useful to prove the bound in the theorem.
\end{proof}

\begin{proof}[Proof of Theorem~\protect\ref{liu-regan}]
Let $\varphi$ be an arithmetic formula of green size $e$ and let $G$ and $c_0$ be given by Lemma~\ref{construction-pas-sym}. Let $\bar{G}$ be the digraph obtained from $G$ in the following way. We merge $s$ and $t$. As remarked in the proof of Lemma~\ref{construction-pas-sym} there is a vertex $v$ that has only one outgoing edge and its weight is a constant $c$ (as $\varphi$ is supposed to have at least one addition). We change its weight to $c_0 c$ and add a loop weighted by $c_0$ on $v$. We put a loop with weight $1$ on every other vertex than $v$ and $s$.    

Let $\{1, \dots, e+1\}$ be the vertices of $\bar{G}$ and $A$ its adjacency matrix. Let us have a closer look at cycle covers of $\bar{G}$. The cycles in $\bar{G}$ are cycles containing $s$ (which are in bijection with $s$-$t$-paths in $G$) and loops. In a cycle cover $C$ the vertex $s$ belongs to a cycle $S$. Its weight $w(s)$ is the weight of the corresponding $s$-$t$-path $P$ in $G$ and its cardinal is $|S|=|P|-1$. If the vertex $v$ appears in $S$ then $w(S)=c_0 w(P)$ and every other cycle in $C$ is a loop of weight $1$. Otherwise  $w(S)=w(P)$ and $C$ contains the loop $v$ of weight $c_0$. In both case $w(C)=c_0 w(P)$. Let us recall that $\operatorname{sgn}(C)$ is the signature of the underlying permutation: here it is $-1$ if $S$ is even and $1$ otherwise, and so it is equal to $(-1)^{|P|}$. Using Lemma~\ref{lemma:perm-det} we get
\[\det(A) = \sum_{\substack{\text{cycle cover}\\\text{$C$ of $\bar G$}} }  \operatorname{sgn}(C) w(C)
   = c_0 \cdot\sum_{\substack{\text{$s$-$t$-path}\\P\in G}}(-1)^{|P|} \ w(P) = \varphi.\]
\end{proof}

\subsection{Symmetric case} 
\label{sec:sym-form}

The aim of this section is to write an arithmetic formula as a determinant of a symmetric matrix, whose entries are constants or variables. 
Recall that in this section as in Section~\ref{sec:ws-circuits}, a field $k$ of characteristic different from $2$ is fixed, and the input constants are taken from this field. In the sequel, every constructed graph is undirected. 
At first, the result is proved for the skinny size of the formula. We recall that the skinny size of $\varphi$ is the number of arithmetic operators it contains.

\begin{theorem}
\label{circsym}
Let $\varphi$ be an arithmetic formula of skinny size $e$. Then there exists a matrix $A$ of dimensions 
at most $2e+3$ whose entries are inputs of the formula and elements of $\{0,1,-1,1/2\}$ such that $\varphi=\det A$.
\end{theorem}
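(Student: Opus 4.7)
I would adapt the proof of Theorem~\ref{liu-regan} to the symmetric (undirected) setting. In outline, the plan has two steps: first, by induction on the formula, construct an undirected edge-weighted graph $G_\varphi$ whose $s$-$t$-paths encode $\varphi$; second, close $G_\varphi$ into a graph $\bar G_\varphi$ whose symmetric adjacency matrix $A$ satisfies $\det A = \varphi$, with $A$ of dimensions at most $2e+3$.

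\textbf{Step 1 (graph construction).} I would establish the following symmetric analogue of Lemma~\ref{construction-pas-sym}: for every formula $\varphi$ of skinny size $e$, there exists an undirected edge-weighted graph $G_\varphi$ with at most $2e+2$ vertices, two distinguished vertices $s\neq t$, and a nonzero constant $c_0\in k$, such that
\[
\varphi \;=\; c_0\!\sum_{\text{$s$-$t$-path $P$}}(-1)^{|P|}\, w(P),
\]
with \emph{all} $s$-$t$-paths of $G_\varphi$ sharing the same length parity. The induction mirrors Lemma~\ref{construction-pas-sym}. A leaf is a two-vertex graph joined by a single edge, with $c_0=1$. For $\varphi=\varphi_1\times\varphi_2$, identify $t_1$ with $s_2$ in the disjoint union of $G_{\varphi_1}$ and $G_{\varphi_2}$; each new $s$-$t$-path is a concatenation, weights multiply, and $|P|=|P_1|+|P_2|-1$, so uniform parity is preserved and $c_0$ gets rescaled accordingly. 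For $\varphi=\varphi_1+\varphi_2$, identify $s_1$ with $s_2$ \emph{and} $t_1$ with $t_2$; since $G_{\varphi_1}$ and $G_{\varphi_2}$ then only share $s$ and $t$, a simple $s$-$t$-path must stay within one summand. Two subtleties now appear: parallel edges can occur between $s$ and $t$ (the flaw in Valiant's original proof recalled in Section~\ref{sec:known}), and the two summand parities may disagree. Both issues are handled by the simple device advertised in the introduction: replace one offending edge by a length-three sub-path through two fresh auxiliary vertices, with weights adjusted so that the path weight agrees with the original edge weight. A straightforward bookkeeping of the added vertices then gives $|V(G_\varphi)|\le 2e+2$.

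\textbf{Step 2 (matrix construction).} Following the template of the proof of Theorem~\ref{liu-regan}, I would form $\bar G_\varphi$ from $G_\varphi$ by merging $s$ with $t$ into a single vertex $v$, attaching self-loops of weight $1$ to every other vertex, and locally modifying a few weights near a designated vertex to absorb the constant $c_0$ and the factor $2$ coming from the two orientations of each undirected cycle. Letting $A$ be the symmetric adjacency matrix of $\bar G_\varphi$, the dimensions of $A$ are at most $2e+3$. By Lemma~\ref{lemma:perm-det}, $\det A$ equals the sum of signed weights of cycle covers of $\bar G_\varphi$. I would partition cycle covers according to whether the cycle through $v$ corresponds to an $s$-$t$-path of $G_\varphi$ (closed through $v$, with the remaining vertices covered by their self-loops) or not. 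The uniform-parity condition ensures that all contributions of the first kind carry the same sign, so that (after the factor-$2$ correction for the two orientations of each path-cycle) they sum to $c_0\sum_P(-1)^{|P|}w(P)=\varphi$; the contributions of the second kind either cancel in pairs or vanish by construction, as they would involve extraneous length-two cycles inside the path region in a way forbidden by the tree-like structure inherited from the formula.

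\textbf{Main obstacle.} The principal difficulty relative to Theorem~\ref{liu-regan} is that undirected graphs admit cycle covers with length-two cycles arising from single edges and, for each cycle of length at least three, two cycle covers corresponding to the two traversal orientations. Controlling these extra contributions is what forces the uniform-parity discipline in Step~1 (so that the two orientations of each path-cycle add up rather than cancel), the appearance of the constant $1/2$ among the allowed entries (to compensate the factor $2$ from the two orientations), and the jump from the $e+1$ bound of Theorem~\ref{liu-regan} to the $2e+3$ bound here, the excess being attributable to the auxiliary vertices inserted in the sum step and to the close-up overhead in the symmetric case.
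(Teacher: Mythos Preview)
Your plan has a genuine gap in Step~2. In the undirected setting, putting weight-$1$ self-loops on all vertices except the merged vertex $v$ does \emph{not} isolate the cycle covers you want. Every edge $uw$ of $G_\varphi$ gives a length-$2$ cycle of weight $w(uw)^2$ and sign $-1$; combining such $2$-cycles with loops on the remaining vertices produces a host of extra cycle covers whose contributions are nonzero polynomials in the edge weights. Your claim that these ``cancel in pairs or vanish by construction'' because of a ``tree-like structure inherited from the formula'' is not justified, and in fact fails already on tiny examples. Take $\varphi=x+y$: after your subdivision trick you have $s,t,u,w$ with edge $st$ of weight $x$ and a length-$3$ path $s\,u\,w\,t$ carrying $y$; after merging $s,t$ into $v$ and adding loops on $u,w$, the determinant picks up spurious terms like $-x\cdot w(uw)^2$, $-w(vu)^2$, $-w(vw)^2$ that do not cancel and cannot all be killed while keeping the path weight equal to $y$. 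The uniform-parity discipline you impose in Step~1 controls the sign of the \emph{path} contributions but says nothing about these edge-induced $2$-cycles.

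The paper's proof avoids self-loops altogether. The inductive Lemma (Lemma~\ref{construction}) carries a much stronger invariant than yours: for every $s$-$t$-path $P$, the complement $G\setminus P$ admits a \emph{unique} cycle cover, and that cover is a weight-$1$ perfect matching (length-$2$ cycles only). The graph is then closed not by merging $s$ and $t$, but by adding one new vertex $c$ with edges $tc$ of weight $1/2$ and $cs$ of weight $\pm 1$; a parity argument (all cycles of $G$ are even, $|\bar G|$ is odd) forces every cycle cover of $\bar G$ to contain the odd cycle $cPc$ for some $s$-$t$-path $P$, with the rest covered by the unique perfect matching. The sign then depends on $|P|\bmod 4$ (since the perfect matching contributes $(-1)^{(|G|-|P|)/2}$), which is why the inductive identity reads $\sum_P(-1)^{|P|/2+1}w(P)=\varphi$ rather than your $\sum_P(-1)^{|P|}w(P)$. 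Carrying this perfect-matching invariant through addition and multiplication is the real work of the proof, and it is what your proposal is missing.
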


This theorem is a corollary of the following lemma.

\begin{lemma}
\label{construction}
Let $\varphi$ be an arithmetic formula of skinny size $e$. Then there exists a graph $G$ with at most $2e+2$ vertices and two distinct vertices $s$ and $t$ such that
\begin{enumerate}
\item
The graph $G$ has an even number of vertices, every cycle in $G$ is even and every $s$-$t$-path has an even number of vertices.
\item
The subgraph $G\setminus \{s,t \}$ is empty if $e=0$ and for $e\geq 1$ it has only one cycle cover:
It is a perfect matching of weight $1$. For every $s$-$t$-path $P$ in $G$, the subgraph $G\setminus P$ is empty or has only one cycle cover: as above it is a perfect matching of weight $1$.
\item
The following equality holds in $G$:
\[\sum_{\text{$s$-$t$-path $P$}}(-1)^{|P|/2+1} \ w(P) = \varphi\]

\end{enumerate}
The graph $G$ is called the graph associated to $\varphi$.
\end{lemma}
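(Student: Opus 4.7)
The plan is to induct on the structure of $\varphi$, paralleling Lemma~\ref{construction-pas-sym}. The base case takes $\varphi$ to be a variable or a constant and sets $G$ to be the graph with two vertices $s,t$ joined by a single edge of weight $\varphi$; all three conditions are then immediate. For $\varphi=\varphi_1\times\varphi_2$, let $G_1,G_2$ be the graphs furnished by induction, with endpoints $s_i,t_i$. Form $G$ as the disjoint union $G_1\sqcup G_2$ augmented by one extra edge $\{t_1,s_2\}$ of weight $-1$, and declare $s=s_1$, $t=t_2$; this uses at most $(2e_1+2)+(2e_2+2)=2e+2$ vertices. For $\varphi=\varphi_1+\varphi_2$, identify $s_1=s_2=s$, introduce a fresh vertex $u$, and adjoin two edges $\{t_2,u\}$ of weight $-1$ and $\{u,t_1\}$ of weight $+1$; set $t=t_1$. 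The vertex count is at most $(2e_1+2)+(2e_2+2)-1+1=2e+2$.

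In either construction the new edges form a bridge separating $G_1$ from $G_2$, so every $s$-$t$-path decomposes uniquely: $P=P_1\cup P_2$ with $|P|=|P_1|+|P_2|$ in the multiplication case, and either $P\subseteq G_1$ or $P=P_2\cup\{u,t_1\}$ with $|P|=|P_2|+2$ in the addition case. The path-parity half of condition~1 then follows from the induction hypothesis, and every cycle of $G$ is either internal to some $G_i$ (even by induction) or, in the addition case, traverses $u$ together with an $s$-$t_1$-path in $G_1$ and an $s$-$t_2$-path in $G_2$, whose total length is even. Condition~3 is sign-and-weight bookkeeping: in the multiplication case $w(P)=-w(P_1)w(P_2)$ and $(-1)^{|P|/2+1}=-(-1)^{|P_1|/2+1}(-1)^{|P_2|/2+1}$, so the two extra signs cancel and the sum over $P$ factors as $\varphi_1\varphi_2$; in the addition case paths in $G_1$ contribute $\varphi_1$ directly, and for $P=P_2\cup\{u,t_1\}$ the identity $(-1)^{|P|/2+1}w(P)=(-1)^{|P_2|/2+1}w(P_2)$ supplies the missing $\varphi_2$.

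The main obstacle is condition~2, the uniqueness of the cycle cover of $G\setminus\{s,t\}$ and of $G\setminus P$ for every $s$-$t$-path $P$. The decisive observation is a parity argument: by the induction hypothesis each $|V(G_i)|$ is even and every cycle of $G_i$ is even, so $|V(G_i)\setminus\{s_i\}|$ and $|V(G_i)\setminus\{t_i\}|$ are odd and these subgraphs therefore admit \emph{no} cycle cover whatsoever (a cycle cover with only even cycles would contribute an even number of vertices). This rules out any putative cycle cover of $G\setminus\{s,t\}$ or $G\setminus P$ that avoids the bridge edge(s) $\{t_1,s_2\}$ or $\{t_2,u\}$: such a cover would have to supply a cycle cover of $G_i\setminus\{s_i\}$ (or $G_i\setminus\{t_i\}$), which is impossible. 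Once the bridge is forced into the cover, the induction hypothesis pins down the remaining cover uniquely as a perfect matching of weight $1$, and the weight of the bridge 2-cycle (weight $(-1)^2=1$) preserves the total weight~$1$. Without this even-vertex invariant, the construction would admit spurious cycle covers and the subsequent determinantal identity in Theorem~\ref{circsym} would fail.
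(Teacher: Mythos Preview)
Your proof is correct and follows the same structural induction as the paper; the base case and the multiplication case are identical to the paper's. The one genuine difference is your treatment of addition. The paper merges \emph{both} endpoints $s_1=s_2=s$ and $t_1=t_2=t$, which risks a double edge $st$ when both $G_i$ already have a direct edge $s_it_i$; it then handles that bad case by subdividing the edge in $G_1$ into a length-$3$ path (adding two vertices). Your construction instead merges only the sources, introduces a single fresh vertex $u$, and routes $t_2\!-\!u\!-\!t_1$ with weights $-1,+1$, setting $t=t_1$. This avoids the double-edge issue entirely and needs no case distinction, while still meeting the $2e+2$ bound; in fact it is precisely the construction the paper adopts later (Theorem~\ref{greensym}, Fig.~\ref{fig:form-sum-cst}) for the green-size refinement. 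What the paper's version buys is a slightly sharper count---size at most $2e$ rather than $2e+2$---in the generic subcase where no double edge arises, though this does not affect the stated bound.

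Two minor remarks on your write-up. In the multiplication case you say every cycle is internal to some $G_i$; strictly, the bridge $\{t_1,s_2\}$ also supports a $2$-cycle, which is of course even. And in the addition case your parity argument, while valid, is more than is needed: once $t=t_1$ is removed, the auxiliary vertex $u$ has $t_2$ as its sole remaining neighbour, so the $2$-cycle $\{u,t_2\}$ is forced by degree alone. The parity argument is, however, exactly what is required in the multiplication case to exclude covers that avoid the bridge, and you identify this correctly.
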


The first property of the lemma ensures that because of a parity argument every cycle cover of the final constructed graph $\bar{G}$ used in the proof of Theorem~\ref{circsym} (see Fig.~\ref{fig:form-loop}) includes exactly one path between $s$ and $t$. The second property ensures that the weight of the cycle cover is the weight of the  cycle involving $s$ and $t$, that is every other cycle has weight $1$, and that other cycles of the cover are of length 2. The third property gives the relation between the graph and the formula.

As in Valiant's construction for the not 
necessarily symmetric case, the formula $\varphi$ will be encoded in the weights of paths between $s$ and $t$, but in a slightly different way. In Valiant's construction, a cycle cover of the digraph is made of a cycle including a $s$-$t$-path, other cycles being loops. Moreover every $s$-$t$-path has the same parity and so every cycle cover has the same parity of odd cycles and the underlying permutation has the same signature. With this property of the digraph the determinant of its adjacency matrix is equal to its permanent up to the sign. In our construction a cycle cover of the graph is made of a cycle including a $s$-$t$-path, other cycles being length-$2$ cycles. A length-$2$ cycle has a negative signature and every $s$-$t$-path of the graph has an even cardinality, so the sign of the cycle permutation is $-1$ to the number of length $2$ cycles. This shows that the sign of the cycle permutation is a function of the length of the involved $s$-$t$-path \emph{modulo} 4. There is a way to ensure that this sign does not depend on the chosen $s$-$t$-path: replace the graph $G$ associated to a size-$0$ formula $x$ in the proof of Lemma~\ref{construction} by a $4$-vertices path with weight $x$ on its first edge, and replace weights $-1$ (Fig.~\ref{fig:form-loop}, Fig.~\ref{fig:form-sum2} and Fig.~\ref{fig:form-prod}) by weights $1$. This yields a matrix with entries in $k\cup\bar x$ whose determinant and permanent are equal to $\varphi$, but its dimensions 
can be $4e+5$. To achieve the $2e+3$ bound, we construct a matrix $A$ whose determinant can be very different from the permanent: For example, the permanent of the matrix associated to $\varphi=x+x$ is 0 when its determinant is $2x$. Nonetheless we can very easily obtain a matrix $B$ having the same dimensions as 
$A$ and such that $\operatorname{perm} B=\varphi$ by replacing every $-1$ entry in $A$ by $1$.  

\bigskip

\begin{proof}[Proof of Theorem~\protect\ref{circsym}]
Let $G$ be the graph associated to $\varphi$ and let $\bar{G}$ be the graph $G$ augmented with a new vertex $c$ and the edges $tc$ of weight $1/2$ and $cs$ of weight $(-1)^{|G|/2-1}$ (see Fig.~\ref{fig:form-loop}). 

    \begin{figure}[htbp]
    \begin{center}
    \input{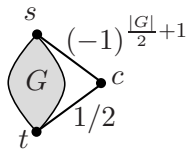}
    \caption{Construction of $\bar G$ from $G$.}
    \label{fig:form-loop}
    \end{center}
    \end{figure}

Conditions $(1)$ and $(2)$ imply that there is a bijection between paths from $s$ to $t$ or $t$ to $s$ and cycle covers in $\bar{G}$. More precisely, every cycle cover in $\bar{G}$ has a unique odd cycle and it is of the form $cPc$ where $P$ is a $s$-$t$-path or a $t$-$s$-path. Indeed, the graph $\bar{G}$ has an odd number of vertices. Suppose there is a cycle cover of $\bar{G}$ involving the length-$2$ cycle $tct$. Other cycles of this cover are cycles of $G$ and thus by $(1)$ they are all even. This is not possible as an odd set can not be partitioned into even subsets. For the same reason, there is no cycle cover of $\bar{G}$ involving the cycle $scs$. Thus every cycle cover of $\bar{G}$ has a cycle including $c$ and a path $P$ between $s$ and $t$.

Let us recall that the sign of a cycle cover is the sign of the underlying permutation, i.e. $-1$ if it has an odd number of even cycles and $1$ otherwise, and let us define the signed weight of a cycle cover as the product of its weight and sign. Let $C$ be a cycle cover of $\bar{G}$ involving the $s$-$t$-path $P$. By property $(2)$ there is only one way to complete the cover. Thus the weight of the cycle cover is the weight of $P$ multiplied by  $(1/2 \ (-1)^{|G|/2+1})$ and its sign is the sign of a perfect matching of cardinality $|G\setminus P|$, so it is $(-1)^{(|G\setminus P|)/2}$. By symmetry, the inverse cycle cover has the same signed weight. So the sum of the signed weights of all cycle covers of $\bar{G}$ is equal to twice the sum over all $s$-$t$-path $P$ of $(1/2 \ (-1)^{|P|/2+1}\ w(P))$. According to Lemma~\ref{construction} it is equal to $\varphi$. The result follows from Lemma~\ref{lemma:perm-det}.

\end{proof}

\begin{proof}[Proof of Lemma~\protect\ref{construction}]
We proceed by structural induction. In other words, we first prove the lemma for the simplest possible formula, namely $x$, and then show that the assertion of Lemma~\ref{construction} is stable under addition and multiplication. 

Let $\varphi=x$ be an arithmetic formula of size $0$. Then the graph $G$ associated to $\varphi$ by definition has two vertices $s$ and $t$ and an edge $st$ of weight $x$. 
It verifies trivially properties $(1)$ and $(2)$ and its only $s$-$t$-path is $st$ and we have: $(-1)^{2/2+1}x=\varphi$.

Let $\varphi=\varphi_1+ \varphi_2$ and $G_1$ and $G_2$ be the graphs associated to $\varphi_1$ and $\varphi_2$. First let us suppose  $s_1t_1$ or $s_2t_2$ has weight $0$. This 
means in particular that $\varphi_1$ or $\varphi_2$ is of size at least 1.
Let $s=s_1=s_2$ and $t=t_1=t_2$. Suppose $G_1\setminus \{s_1,t_1 \}$ and $G_2\setminus \{s_2,t_2 \}$ have disjoints sets of vertices and let $G=G_1\cup G_2$ (see Fig.~\ref{fig:form-sum}). Then $|G|=|G_1|+|G_2|-2\leq 2|\varphi_1|+2|\varphi_2|+2 = 2|\varphi|$.

    \begin{figure}[htbp]
    \begin{center}
    \input{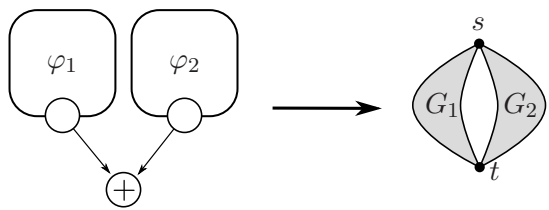}
    \caption{Graph associated to $\varphi=\varphi_1+\varphi_2$.}
    \label{fig:form-sum}
    \end{center}
    \end{figure}

If $s_1t_1$ is an edge in $G_1$ and $s_2t_2$ is an edge in $G_2$ then the preceding construction would lead to two edges between $s$ and $t$. They could be transformed into a single edge if adding the two weights, but then the weight could be a sum of two variables, and it is something that is not allowed in this context. So the graph $G_1$ is transformed into a graph $G_1'$ by adding two vertices $u$ and $v$, removing the edge $s_1t_1$ with weight $x$ and adding the edges $s_1 u$ with weight $x$, $uv$ with weight $1$ and $vt_1$ with weight $-1$ (see Fig.~\ref{fig:form-sum2}).

    \begin{figure}[htbp]
    \begin{center}
    \input{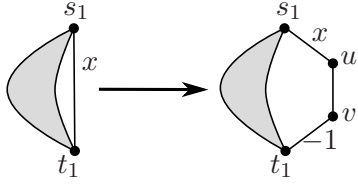}
    \caption{Transformation of $G_1$ into $G'_1$.}
    \label{fig:form-sum2}
    \end{center}
    \end{figure}
We can verify easily that $G_1'$ satisfies 
the three conditions of Lemma~\ref{construction}. In particular for the third condition, the term $x$ corresponding to the path $s_1t_1$ in $G_1$ in the sum is replaced by the term corresponding to the path $s_1uvt_1$ in $G_1'$: $-(-1)^{4/2+1}x=x$. We then construct the graph $G$ associated to $\varphi$ as above but with $G_1'$ replacing $G_1$. It size is at most $2|\varphi|+2$.

Now let us prove that the graph associated to $\varphi$ satisfies the three properties of the lemma.

\begin{enumerate}
\item
$G$ has an even number of vertices and the cardinality of every $s$-$t$-path is even. A cycle in $G$ is a cycle in $G_1$, or a cycle in $G_2$, or a path from $s$ to $t$ in $G_1$ or $G_2$ followed by path from $t$ to $s$ in $G_1$ or $G_2$, and consequently every cycle in $G$ is even.
\item
If $G_1\setminus \{s_1,t_1 \}$ and $G_2\setminus \{s_2,t_2 \}$ are non-empty they are disconnected, and a cycle cover of the subgraph $G\setminus \{s,t \}$ is constituted by a cycle cover of $G_1\setminus \{s_1,t_1 \}$ and a cycle cover of $G_2\setminus \{s_2,t_2 \}$. So $G\setminus \{s,t \}$ has only one cycle cover and it is a perfect matching of weight $1$. If $G_1\setminus \{s_1,t_1 \}$ is empty then $G\setminus \{s,t \}=G_2\setminus \{s_2,t_2 \}$ and has only one cycle cover and it is a perfect matching of weight $1$.

Let $P$ be a path between $s$ and $t$ in $G$. We can suppose wlog that the subgraph $G\setminus P$ is the union of the two graphs $G_1\setminus P$ and $G_2\setminus \{s_2,t_2 \}$, which are disconnected from one another. The property to prove is then straightforward from the induction hypothesis.
\item
A path of $G$ is a path of $G_1$ or a path of $G_2$, which proves the equality.
\end{enumerate}

Let $\varphi=\varphi_1\times \varphi_2$ and $G_1$ and $G_2$ be the graphs associated to $\varphi_1$ and $\varphi_2$. Suppose $G_1$ and $G_2$ have disjoints sets of vertices and let $G$ be $G_1\cup G_2$ with an additional edge $t_1 s_2$ of weight $-1$, and let $s=s_1$ and $t=t_2$
(see Fig.\ref{fig:form-prod}).

    \begin{figure}[htbp]
    \begin{center}
    \input{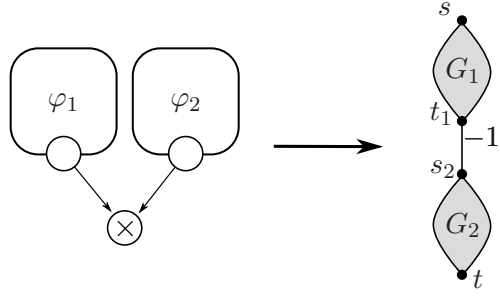}
    \caption{Graph associated to $\varphi=\varphi_1\times\varphi_2$.}
    \label{fig:form-prod}
    \end{center}
    \end{figure}
Then $|G|=|G_1|+|G_2|\leq 2|\varphi_1|+2|\varphi_2|+4 = 2|\varphi|+2$. Let us prove that $G$ satisfies the three properties of the lemma.

\begin{enumerate}
\item
$G$ has an even number of vertices and every path from $s$ to $t$ has an even cardinality. A cycle in $G$ is either a cycle in $G_1$, or a cycle in $G_2$ or the length-$2$ cycle $t_1s_2$, and consequently every cycle in $G$ is even.
\item
Let us consider a cycle cover of $G\setminus \{s,t \}$. The vertex $t_1$ can be in a cycle of $G_1$ or in the cycle $t_1s_2$. If it is in a cycle of $G_1$ then we have a cycle cover of $G_1\setminus \{s_1\}$, which is not possible because it is an odd set and all its cycles are even. Thus the cycle cover of $G\setminus \{s,t \}$ can be partitioned into $t_1s_2$ of weight $(-1)^2$, a cycle cover of $G_1\setminus \{s_1,t_1 \}$ and a cycle cover of $G_2\setminus \{s_2,t_2 \}$. Those cycle covers are unique and so there is only one cycle cover of $G\setminus\{x,y\}$ and it is a perfect matching of weight $1$.

Let $P$ be a path between $s$ and $t$ in $G$. It is a path $P_1$ from $s_1$ to $t_1$ in $G_1$ followed by $t_1s_2$ and a path $P_2$ from $s_2$ to $t_2$ in $G_2$. So $G\setminus P$ is the union of the two graphs $G_1\setminus P_1$ and $G_2\setminus P_2$, which are disconnected (if non empty) from one another. The property to prove is then straightforward from the induction hypothesis.
\item
A $s$-$t$-path $P$ in $G$ can be decomposed into three paths: a $s_1$-$t_1$-path $P_1$, $t_1s_2$ which is of weight $-1$ and a $s_2$-$t_2$-path $P_2$.

Thus  
\begin{eqnarray*}
(-1)^{\frac{|P|}{2}+1} \ w(P) &=& (-1)^{\frac{|P_1|+|P_2|}{2}+1} w(P_1) (-1) w(P_2)\\
                      &=& (-1)^{\frac{|P_1|}{2}+1} \ w(P_1) \times (-1)^{\frac{|P_2|}{2}+1} \ w(P_2)
\end{eqnarray*}  

and so

\begin{eqnarray*}
\sum_{P}(-1)^{\frac{|P|}{2}+1} \ w(P) &=& \sum_{P_1}(-1)^{\frac{|P1|}{2}+1}w(P_1) \times \sum_{P_2}(-1)^{\frac{|P_2|}{2}+1}w(P_2) \\
                                      &=& \varphi_1 \times \varphi_2 \\
                                      &=& \varphi.
\end{eqnarray*}

\end{enumerate}
\end{proof}

The upper bound $(2e+2)$ of Lemma~\ref{construction} is tight as shown by Fig.~\ref{fig:form-upper}. It can be shown easily that this construction yields a graph of size at least $|\varphi|+2$, and this lower bound is tight as shown by Fig.~\ref{fig:form-lower}.

    \begin{figure}[htbp]
    \begin{center}
    \input{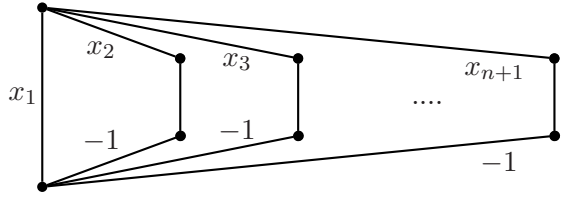}
    \caption{Graph associated to $\varphi=x_1+\cdots+x_{n+1}$: $|\varphi|=n$ and $|G|=2n+2$.}
    \label{fig:form-upper}
    \end{center}
    \end{figure}

    \begin{figure}[htbp]
    \begin{center}
    \input{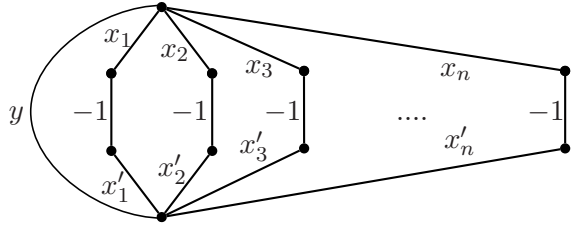}
    \caption{Graph associated to $\varphi=x_1x'_1+x_2x'_2\cdots+x_n x'_n+y$: $|\varphi|=2n$ and $|G|=2n+2$.}
    \label{fig:form-lower}
    \end{center}
    \end{figure}

In fact, as in the non-symmetric case, the skinny size can be replaced by the green size of the formula defined in Definition~\ref{def:green-size}.

\begin{theorem}
\label{greensym}
For every formula $\varphi$ of green size $e$ there is a square matrix $A$ of dimensions 
$2e+3$ whose entries are inputs of the formula and elements of $\{0,1,-1,1/2\}$ such that $\varphi=\det A$.
\end{theorem}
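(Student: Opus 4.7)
The plan is to prove Theorem~\ref{greensym} by generalizing Lemma~\ref{construction} to handle green size, then modifying the $\bar{G}$ construction from the proof of Theorem~\ref{circsym} to absorb an additional scaling constant. I will maintain an external scaling constant $c_0 \in k$ throughout the induction, which tracks the cumulative multiplicative factors coming from constant multiplications (which are free under green size).

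First I would prove a green-size analog of Lemma~\ref{construction}: for every formula $\varphi$ of green size $e$, there exist a graph $G$ with at most $2e+2$ vertices, two distinguished vertices $s$ and $t$, and a constant $c_0 \in k$, such that properties (1) and (2) of Lemma~\ref{construction} hold, together with the adjusted property
\[c_0 \sum_{\text{$s$-$t$-path } P} (-1)^{|P|/2+1}\, w(P) = \varphi.\]
The induction would follow the structure of Lemma~\ref{construction}'s proof, with an additional base case for constant inputs $\varphi = c$ (take $G$ to be the single edge $st$ of weight $1$ and set $c_0 = c$), and a new inductive case $\varphi = c \times \varphi'$ handled by keeping $G = G'$ and setting $c_0 = c \cdot c_0'$. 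This last case introduces no new vertices, reflecting the fact that constant multiplications are free under green size. The sum and product cases would mimic Lemma~\ref{construction}, except that I would use the flexibility in the weights of the auxiliary edges---the $-1$ edge of the $u,v$ trick in the sum case and the $-1$ concatenation edge $t_1 s_2$ in the product case---to absorb any scaling discrepancy between $c_0^{(1)}$ and $c_0^{(2)}$ arising from the sub-formulas. In the product case, this yields $c_0 = c_0^{(1)} c_0^{(2)}$ naturally, at no extra vertex cost.

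For the theorem itself, I would construct $\bar{G}$ from $G$ exactly as in the proof of Theorem~\ref{circsym}, except that the weight of the edge $cs$ would be set to $c_0 \cdot (-1)^{|G|/2-1}$ instead of $(-1)^{|G|/2-1}$, thereby absorbing the external constant into the adjacency matrix. The determinant computation from the proof of Theorem~\ref{circsym} then carries over essentially verbatim, because the $cs$ edge belongs to every cycle cover of $\bar{G}$ and thus its weight factors out uniformly; this yields $\det \bar{A} = c_0 \sum_P (-1)^{|P|/2+1}\, w(P) = \varphi$ by the green-size lemma.

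The hard part will be verifying the sum case of the green-size lemma when $c_0^{(1)} \ne c_0^{(2)}$, where one of the two graphs must be rescaled so that a single external constant $c_0$ suffices. The key idea is that the $u,v$ trick's three new edge weights are essentially unconstrained apart from a single product relation, so I can place the scaling factor $c_0^{(1)}/c_0^{(2)}$ (or its inverse) on one of these weights without introducing any additional vertices. The vertex bound $|G| \le 2e+2$ is preserved because each addition contributes $+1$ to the green size and thus $+2$ to the vertex bound, exactly matching the $+2$ vertices added by a single $u,v$ trick application.
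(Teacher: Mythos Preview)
Your overall strategy---carry an external constant $c_0$ through the induction and absorb it at the end into one of the two edges adjoining the extra vertex $c$---is exactly what the paper does, and the base case, the constant-multiplication case, and the product case all work as you describe (for the product nothing needs to be absorbed into the $t_1s_2$ edge; $c_0 = c_0^{(1)}c_0^{(2)}$ is enough, and indeed changing that edge weight would spoil property~(2) since it participates in the matching of $G\setminus\{s,t\}$).

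The gap is in your treatment of the sum case. The $u,v$ trick of Lemma~\ref{construction} does \emph{not} give you a hook through which every $s$-$t$-path of $G_1$ passes: it merely replaces the single direct edge $s_1t_1$ by the path $s_1uvt_1$, leaving all other $s_1$-$t_1$-paths of $G_1$ untouched. So if $G_1$ has several $s$-$t$-paths (as it will as soon as $\varphi_1$ contains an addition), tweaking the weights on $s_1u$, $uv$, $vt_1$ rescales only the contribution of that one path, not the whole sum $\sum_{P_1}(-1)^{|P_1|/2+1}w(P_1)$. Hence you cannot reconcile $c_0^{(1)}\neq c_0^{(2)}$ this way. (The same objection applies to the variant ``always invoke the $u,v$ trick'': besides requiring a direct edge to replace, adding two connector vertices between $t_2$ and $t_1$ would change path lengths by $3$ and break the parity condition~(1).)

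The paper's remedy is to change the sum construction rather than its edge weights: merge only the sources $s_1=s_2=s$, keep $t=t_1$, and add a \emph{single} new vertex $u$ with edges $t_2u$ of weight $1$ and $ut_1$ of weight $-c_2/c_1$. Now every $G_2$-path reaches $t$ through $ut_1$, so the factor $c_2/c_1$ rescales all of $G_2$ uniformly, while $G_1$-paths are untouched; one checks that $c_0=c_1$ then works, and properties~(1) and~(2) go through. The vertex count is $|G|=|G_1|+|G_2|\le 2e+2$, matching the budget. With this corrected sum step, the rest of your plan (including putting $c_0$ on the $cs$ edge rather than the $tc$ edge, which is immaterial) goes through.
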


\begin{proof}
It is sufficient to show how to have the constants for free in the construction of Lemma~\ref{construction}. 
We also proceed by structural induction. 
In fact, the construction remains almost the same but with the last property changed. For an arithmetic formula $\varphi$ of green size $e$, there exists a graph $G$ that satisfies the conditions of Lemma~\ref{construction} but the third one is replaced by the existence of a constant $c_0$ such that
\[c_0\cdot\sum_{\text{$s$-$t$-path $P$}}(-1)^{|P|/2+1} \ w(P) = \varphi.\]

Let $\varphi=x$ be an arithmetic formula of size $0$. Then the graph $G$ associated to $\varphi$ by definition has two vertices $s$ and $t$ and an edge $st$ of weight $x$. The associated constant is $c_0=1$.

Let $\varphi=c \psi$ and $G$, $c_0$ be associated to $\psi$. Then $G$, $c c_0$ is associated to $\varphi$.

Let $\varphi=\varphi_1\times \varphi_2$ and $G_1$, $c_1$ (resp. $G_2$, $c_2$) be associated to $\varphi_1$ (resp. $\varphi_2$).
The graph $G$ associated to $\varphi$ is exactly the same as in the proof of Lemma~\ref{construction} and the constant is $c_1 c_2$.

Let $\varphi=\varphi_1+ \varphi_2$ and $G_1$, $c_1$ (resp. $G_2$, $c_2$) be the graph and constant associated to $\varphi_1$ (resp. $\varphi_2$). We suppose that $G_1$ and $G_2$ have distinct sets of vertices except for $s_1=s_2$. The graph $G$ is obtained by adding a new vertex $u$, an edge $t_2u$ with weight $1$ and an edge $ut_1$ with weight $-c_2/c_1$, and the associated constant is $c_1$ (see Fig.~\ref{fig:form-sum-cst}).

    \begin{figure}[htbp]
    \begin{center}
    \input{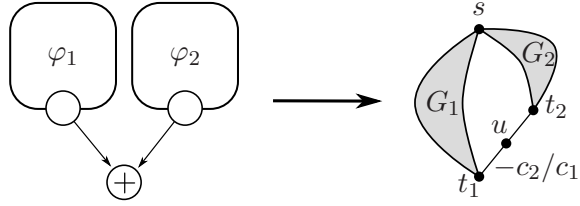}
    \caption{$\varphi=\varphi_1+ \varphi_2$; $G_1,c_1$ and $G_2,c_2$ are respectively associated to $\varphi_1$ and $\varphi_2$.}
    \label{fig:form-sum-cst}
    \end{center}
    \end{figure} 

This defines a size-$(2e+2)$ graph $G$ associated to a green size-$e$ formula $\varphi$. It remains to turn this graph into a matrix. Let $\bar{G}$ be the graph $G$ augmented with a new vertex $c$ and the edges $tc$ of weight $c_0/2$ and $cs$ of weight $(-1)^{|G|/2-1}$. The adjacency matrix $A$ of  $\bar{G}$ satisfies $\varphi=\det(A)$ and the proof is similar to the one of Theorem~\ref{circsym}.
\end{proof}

The bound obtained in Theorem~\ref{greensym} can be sharpened when $k=\mathbb R$ or $\mathbb C$. The idea is to build $\bar G$ by merging $s$ and $t$ instead of adding a new vertex. Suppose that $\varphi$ has at least one addition gate. Let $w=\sqrt{|c_0|/2}$. In the construction for this addition gate (see Fig.~\ref{fig:form-sum-cst}), multiply the weights of $t_2u$ and $ut_1$ by $w$. A cycle cover of the graph either goes through the path $t_2ut_1$, or contains the edge $ut_2$ in its perfect matching part. In both cases, its weight is multiplied by $w^2$. Now if $(-1)^{|G|/2+1}c_0/2>0$, then the graph obtained has the satisfying properties, and the new bound is $2e+1$. If it is negative, two solutions can be applied. Either $k$ is the field of complex numbers and it is sufficient to replace $w$ by $iw$ (where $i^2=-1$) to get the same bound $2e+1$. Otherwise, if $k$ is the field of real numbers, it is sufficient to add a new vertex with a loop of weight $-1$ (this corresponds to adding a new line and a new column, filled with zeroes but the diagonal element with $-1$) to get the bound $(2e+2)$.

\section{Weakly skew circuits} 
\label{sec:ws-circuits}

In this section, we extend the previous results to the case of weakly-skew circuits. Recall that those circuits are defined from arithmetic circuits by a restriction on the multiplication gate: the sub-circuit associated to one of the arguments of a multiplication gate $\alpha$ has to be closed, that is only connected to the rest of the circuit by the arrow going to $\alpha$. A gate that is not in any such closed sub-circuit is said to be \emph{reusable}.

The main difficulty to extend the results is 
the existence of several reusable gates. In the case of formulas, there is a single output. Therefore, there is a single vertex $t$ in the graph for which the sum of the weights of the $s$-$t$-paths has to equal a given expression. This is no longer the case for weakly-skew circuits. If the matrix we wish to construct is not symmetric, that is if the graph is oriented, this difficulty is overcome by ensuring that the graph is a directed acyclic graph. In that way, adding a new vertex cannot change the expressions computed at previously added vertices. But in the symmetric case, adding a new vertex, for example in the case of an addition gate, creates some new paths in the graph. Thus it changes the sum of the weights of the $s$-$t_\alpha$-paths for some vertex $t_\alpha$.

A solution to this problem is given in Lemma~\ref{lemma:sym-weak} by introducing the notion of acceptable paths: A path $P$ in a graph $G$ is said \emph{acceptable} if $G\setminus P$ admits a cycle cover. 

\subsection{Symmetric determinantal representation} 
\label{sec:fat-size}

For the weakly-skew circuits, the green size is no longer appropriate. Hence, the results of this section are expressed in terms of the fat size of the circuits: the \emph{fat size} of a circuit is its total number of gates, including the input gates. This measure of circuit size 
is refined in Section~\ref{sec:green-size}.

\begin{theorem}\label{thm:sym-weak}
Let $f$ be a polynomial computable by a weakly-skew circuit of fat size $m$. Then there exists a symmetric matrix $A$ of dimensions 
at most $2m+1$ whose entries are inputs of the circuit and elements from $\{0,1,-1,1/2\}$ such that $f=\det A$.
\end{theorem}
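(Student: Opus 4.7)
The plan is to reduce Theorem~\ref{thm:sym-weak} to a lemma in the spirit of Lemma~\ref{construction}, but adapted to circuits with reusable gates via the notion of acceptable path sketched just above the theorem. The theorem itself then follows by the same closure trick used in the proof of Theorem~\ref{circsym}.

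First I would state the key lemma (this is Lemma~\ref{lemma:sym-weak} referred to in the introduction of the section): for every weakly-skew circuit $C$ of fat size $m$, there is a weighted undirected graph $G_C$ on at most $2m$ vertices, equipped with a distinguished source $s$ and, for every reusable gate $\alpha$ of $C$, a distinguished terminal $t_\alpha$, such that
\begin{enumerate}
\item $|G_C|$ is even, every cycle in $G_C$ has even length, and for every reusable $\alpha$ every $s$-$t_\alpha$-path has an even number of vertices;
\item for every reusable gate $\alpha$ and every acceptable $s$-$t_\alpha$-path $P$, the subgraph $G_C\setminus P$ is either empty or has a unique cycle cover, which is a perfect matching of weight $1$;
\item $\sum_{P}(-1)^{|P|/2+1}\,w(P)=f_\alpha$, where the sum ranges over acceptable $s$-$t_\alpha$-paths and $f_\alpha$ is the polynomial computed at~$\alpha$.
\end{enumerate}

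Next, I would prove the lemma by structural induction on the circuit. For an input variable $x$, take $G$ on two vertices $s,t_x$ joined by an edge of weight $x$. For a multiplication gate $\alpha=\beta\times\gamma$ whose closed sub-circuit is on the $\gamma$ side, splice $G_\gamma$ into $G_\beta$ exactly as in the product step of Lemma~\ref{construction}: identify the source of $G_\gamma$ with the former terminal $t_\beta$ and add the length-$2$ sign-correction edge, then set $t_\alpha:=t_\gamma$; the terminals of reusable gates internal to the closed sub-circuit are no longer reusable from the outside and can be forgotten, while $\beta$ (if it was reusable in $C$) retains its terminal because no new vertex is added at its former location. For an addition gate $\alpha=\beta+\gamma$, glue $G_\beta$ and $G_\gamma$ at $s$ and insert a small gadget (one or two new vertices together with $t_\alpha$) joining $t_\beta$ and $t_\gamma$ to $t_\alpha$, with $\pm 1$ edge-weights chosen so that acceptable $s$-$t_\alpha$-paths are in weight-preserving, sign-preserving bijection with the disjoint union of acceptable $s$-$t_\beta$-paths and acceptable $s$-$t_\gamma$-paths, and so that the new gadget vertices come in length-$2$ cycles of weight $1$ in $G\setminus P$ whenever $P$ avoids them. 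Since each gate of $C$ contributes at most $2$ vertices to $G_C$, and inputs contribute exactly $1$ vertex each on top of the shared source, one checks $|G_C|\le 2m$.

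Finally, to deduce the theorem, apply the lemma to $C$ (whose output $o$ is reusable) to obtain $G$ with source $s$ and sink $t:=t_o$ on at most $2m$ vertices. Close it off exactly as in the proof of Theorem~\ref{circsym}: add one new vertex $c$ with edges $tc$ of weight $1/2$ and $cs$ of weight $(-1)^{|G|/2-1}$. The parity conditions in (1) force every cycle cover of the resulting graph $\bar G$ to consist of a single odd cycle through $c$ that traces an acceptable $s$-$t$-path together with the unique weight-$1$ perfect matching guaranteed by (2) on the complement. By Lemma~\ref{lemma:perm-det} and (3), the adjacency matrix $A$ of $\bar G$ is symmetric, has entries in $\{0,1,-1,1/2\}\cup\{\text{inputs of }C\}$, satisfies $\det A=f$, and has dimension at most $2m+1$.

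The main obstacle will be the addition step of the inductive construction. When both $\beta$ and $\gamma$ remain reusable after $\alpha$ is built, all three terminals $t_\beta$, $t_\gamma$, $t_\alpha$ must simultaneously satisfy conditions (2) and (3) in the enlarged graph; in particular the new gadget vertices must admit a unique perfect matching of weight $1$ in the complement of every acceptable path to $t_\beta$ or $t_\gamma$ (which bypasses the gadget) as well as in the complement of every acceptable path to $t_\alpha$ (which traverses part of it). Verifying this simultaneously for every reusable terminal, together with the even-length/even-cycle parity condition (1), is precisely what the acceptability formalism is designed to handle, and it is the delicate part of the argument that drives the factor of $2$ in the bound.
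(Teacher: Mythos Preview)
Your overall strategy is the same as the paper's: build a graph encoding the circuit via acceptable paths, then close it off to read $f$ as a symmetric determinant. Several details, however, do not survive the passage from formulas to weakly-skew circuits.

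The main gap is your addition step. You propose to ``glue $G_\beta$ and $G_\gamma$ at $s$'', but for a weakly-skew circuit there are no two separate graphs to glue: the arguments $\beta$ and $\gamma$ of an addition gate may share arbitrary sub-computations, so by induction there is a \emph{single} graph $G'$ for $C\setminus\{\alpha\}$ already containing both $t_\beta$ and $t_\gamma$. The paper handles this by structuring the induction on \emph{multiple-output} weakly-skew circuits and removing one output gate at a time; the new graph is then obtained from the single $G'$ by adding two fresh vertices $v_\alpha,t_\alpha$ with edges $t_\beta v_\alpha$, $t_\gamma v_\alpha$ (weight $1$) and $v_\alpha t_\alpha$ (weight $-1$). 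Acceptability is what saves the day: a path from $s$ to $t_\beta$ that detours through $v_\alpha$ is not acceptable because it leaves $t_\alpha$ unmatchable. You recognise this role of acceptability in your final paragraph, but your construction is still phrased as though the sub-circuits were disjoint, which is exactly what distinguishes formulas from weakly-skew circuits.

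A related issue is that your parity bookkeeping breaks. With a shared source and one vertex per input, a circuit consisting of two inputs already yields a three-vertex graph, contradicting your invariant that $|G_C|$ is even. The paper avoids this by using the \emph{odd} convention ($|G|$ odd, $s$--$t_\alpha$-paths of odd length, two new vertices per input) and, crucially, by adding the extra hypothesis that $G\setminus\{s\}$ has a unique perfect matching of weight $1$. That condition, absent from your lemma, is precisely what makes the multiplication step work: when the closed sub-circuit is spliced in at $t_\gamma$, an acceptable path to any \emph{other} reusable terminal leaves the entire closed part minus its source to be covered, and without the matching condition there is nothing to guarantee a cover exists, let alone a unique one of weight $1$. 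With the odd convention the final closure is done by a single edge from $t$ back to $s$ of weight $\tfrac12(-1)^{(|G|-1)/2}$ (no extra vertex), giving the $2m+1$ bound directly.
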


The proof relies on the following lemma. It applies to so-called \emph{multiple-output} weakly-skew circuits. This generalization just consists of 
circuits for which there exist several out-degree-$0$ gates.

\begin{lemma}\label{lemma:sym-weak}
Let $C$ be a multiple-output weakly-skew circuit of fat size $m$. There exists a graph $G$ with at most $2m+1$ vertices and a distinguished vertex $s$ such that $|G|$ is odd, every cycle in $G$ is even, and for every reusable gate $\alpha\in C$ there exists a vertex $t_\alpha\in G$ such that
\begin{enumerate}
\item Every $s$-$t_\alpha$-path (whether acceptable or not) has an odd number of vertices;
\item For every acceptable $s$-$t_\alpha$-path $P$ in $G$, the subgraph $G\setminus P$ is either empty or has a unique cycle cover, which is a perfect matching of weight $1$;
\item The following equality holds in $G$:
    \begin{equation}\label{eq:weak}
    \sum_{\substack{\text{acceptable}\\\text{$s$-$t_\alpha$-path $P$}} } (-1)^{\frac{|P|-1}{2}}w(P)=f_\alpha
    \end{equation}
    where $f_\alpha$ is the polynomial computed by the gate $\alpha$.
\end{enumerate}
Furthermore, the graph $G\setminus\{s\}$ has a unique cycle cover which is a perfect matching of weight $1$.
\end{lemma}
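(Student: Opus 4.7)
The plan is a structural induction on the fat size $m$ of $C$, processing gates in topological order. In the base case ($m=1$), $C$ is a single input gate $\alpha$ labelled by a variable $x$ (or constant $c$), and we take $G$ on three vertices $\{s,u,t_\alpha\}$ with edges $\{s,u\}$ and $\{u,t_\alpha\}$ of weights $x$ (resp.\ $c$) and $-1$. The unique $s$-$t_\alpha$-path has three vertices (odd), $G\setminus P$ is empty, $G\setminus\{s\}$ reduces to the length-$2$ cycle $u\leftrightarrow t_\alpha$ of weight $1$, and $(-1)^{(3-1)/2}\cdot x\cdot(-1)=x=f_\alpha$, so all four clauses hold.

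For the inductive step we distinguish cases by the nature of the newly added output gate $\alpha$. For a scalar multiplication $\alpha=c\cdot\beta$ we rescale a single edge weight incident to $t_\beta$ and set $t_\alpha:=t_\beta$. For an addition $\alpha=\beta+\gamma$, both arguments remain reusable and their vertices $t_\beta,t_\gamma$ already live in the graph $G'$ given by the induction hypothesis; we attach two fresh vertices $u$ and $t_\alpha$ via a weighted gadget patterned on Figure~\ref{fig:form-sum-cst}, calibrated so that every $s$-$t_\alpha$-path terminates with either $t_\beta\to u\to t_\alpha$ or $t_\gamma\to u\to t_\alpha$. For a multiplication $\alpha=\beta\cdot\gamma$ with $\gamma$ rooting a closed sub-circuit $C_\gamma$, apply the induction hypothesis separately to $C\setminus C_\gamma$ (producing $G_1$ with reusable vertex $t_\beta$) and to $C_\gamma$ viewed as a standalone circuit (producing $G_2$ with source $s_2$ and output $t_\gamma$), then glue by identifying $t_\beta$ with $s_2$ and setting $t_\alpha:=t_\gamma$. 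The factorisation $(-1)^{(|P|-1)/2}w(P)=(-1)^{(|P_1|-1)/2}w(P_1)\cdot(-1)^{(|P_2|-1)/2}w(P_2)$ for $|P|=|P_1|+|P_2|-1$ then yields $f_\alpha=f_\beta f_\gamma$ upon summation.

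The main obstacle is that introducing an addition or multiplication gadget must not spoil properties (1)--(3) for the \emph{old} reusable gates $\delta\ne\alpha$ whose vertices $t_\delta$ are already in the graph. In the multiplication case this is essentially automatic, since $G_2$ is attached to $G_1$ through a single gateway vertex: no simple path from $s$ to an old $t_\delta\in G_1\setminus\{t_\beta\}$ can enter $G_2$ and return. In the addition case spurious detours $s\to\cdots\to t_\beta\to u\to t_\gamma\to\cdots\to t_\delta$ are conceivable, and it is precisely here that the \emph{acceptability} condition earns its keep: one must show that any such detour leaves an unmatched vertex in the remnant $G\setminus P$ (either a piece of the added gadget, or, using the inductive unique-matching structure of $G'\setminus\{s\}$, an orphan in the old graph), so the detour is not acceptable and the sum in~\eqref{eq:weak} for $t_\delta$ survives intact. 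The parity statements in clause (1) are preserved by the local bookkeeping $\mathrm{odd}+2=\mathrm{odd}$ in the addition case and $\mathrm{odd}+\mathrm{odd}-1=\mathrm{odd}$ in the multiplication case; clause (2) transports the inductively given unique perfect matching onto the enlarged remnant by appending the trivial matching of the new gadget or of $G_2\setminus\{s_2\}$.

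Finally, a direct size count gives the bound: the base case uses $3=2\cdot 1+1$ vertices, scalar multiplication adds $0$, addition adds $2$, and multiplication produces $|G_1|+|G_2|-1\le(2m_1+1)+(2m_2+1)-1=2(m_1+m_2+1)-1=2m+1$ after accounting for $\alpha$ itself. The final invariant ``$G\setminus\{s\}$ has a unique cycle cover which is a perfect matching of weight~$1$'' is maintained at each step by the same local matching arguments used to verify clause~(2). Theorem~\ref{thm:sym-weak} then follows by a closing construction analogous to that of Theorem~\ref{circsym}: adjoin one further vertex linked by weighted edges to $s$ and to $t_\alpha$ for the chosen output, and apply Lemma~\ref{lemma:perm-det} to the adjacency matrix of the resulting symmetric graph.
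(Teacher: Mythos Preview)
Your approach matches the paper's: structural induction on fat size, with the same three-vertex base gadget, the same two-vertex addition gadget ($t_\beta$ and $t_\gamma$ each wired by a weight-$1$ edge to a new vertex $v_\alpha$, then $v_\alpha$ to $t_\alpha$ by a weight-$(-1)$ edge), and the same multiplication splice (identify the closed sub-circuit's source with the open argument's target, set $t_\alpha$ to the closed sub-circuit's output vertex). The key observation you correctly isolate---that a detour through the addition gadget strands the pendant vertex $t_\alpha$ and hence cannot be acceptable---is exactly the paper's argument for why the old reusable gates are unaffected.

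Three small corrections. First, drop the ``scalar multiplication'' case: in the fat-size model there is no such gate; a product $c\cdot\beta$ is an ordinary multiplication gate together with a constant input gate, both counted. (Your rescaling idea is moreover unsafe, since altering an edge at $t_\beta$ would also alter $f_\beta$ as seen by any other consumer of $\beta$.) That optimisation belongs to the green-size refinement of Section~\ref{sec:green-size}, not to this lemma. Second, you are missing the inductive case where the newly processed output $\alpha$ is itself an input gate (a disconnected input in a multiple-output circuit with $m>1$); the paper handles this by adjoining the two-vertex tail $s\text{--}v_\alpha\text{--}t_\alpha$ to the inductively built $G'$ for $C\setminus\{\alpha\}$. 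Third, in the multiplication case the induction hypothesis should be applied to $(C\setminus C_\gamma)\setminus\{\alpha\}$, not to $C\setminus C_\gamma$; with $m_1+m_2=m-1$ your size count then reads $(2m_1+1)+(2m_2+1)-1=2m-1$, not $2m+1$ (harmless for the bound, but worth fixing).
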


\begin{proof}
The graph $G$ is built by induction on the (fat) size of the circuit, the required properties being verified at each step of the induction. If $\alpha$ is a reusable gate of $C$, then $t_\alpha$ is said to be a reusable vertex of $G$.

A size-$1$ circuit is an input gate $\alpha$ with label $x$. The corresponding graph $G$ has three vertices: $s$, $t_\alpha$ and an additional vertex $v_\alpha$. There is an edge between $s$ and $v_\alpha$ of weight $x$, and an edge between $v_\alpha$ and $t_\alpha$ of weight $-1$. It is straightforward to check that $G$ satisfy the conditions of the lemma.

Let $m>1$ and suppose that the lemma holds for any multiple-output weakly-skew circuit of size less than $m$. Let $C$ be a multiple output weakly-skew circuit of size $m$, and $\alpha$ be any of its outputs.

If $\alpha$ is an input gate with label $x$, let $C'=C\setminus\{\alpha\}$ and $G'$ the corresponding graph  with a distinguished vertex $s$. The graph $G$ is obtained from $G'$ by adding two new vertices $v_\alpha$ and $t_\alpha$, an edge of weight $x$ between $s$ and $v_\alpha$ and an edge of weight $-1$ between $v_\alpha$ and $t_\alpha$ (see Fig.~\ref{fig:weak-input}). The vertex $s$ is the distinguished vertex of $G$. 

    \begin{figure}[htbp]
    \begin{center}
    \input{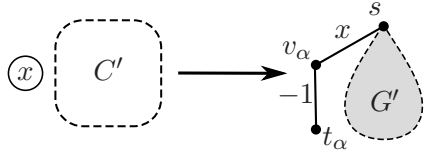}
    \caption{Induction step when $\alpha$ is an input gate.}
    \label{fig:weak-input}
    \end{center}
    \end{figure}
The size of $G$ is $|G|=|G'|+2\le (2(m-1)+1)+2=2m+1$. Thus $|G|$ is odd. A cycle in $G$ is either a cycle in $G'$ or one of the two cycles $sv_\alpha$ or $v_\alpha t_\alpha$, so every cycle in $G$ is even. The size-$3$ path from $s$ to $t_\alpha$ is acceptable (as $G'\setminus\{s\}$ has a unique cycle cover of weight $1$) and satisfies \eqref{eq:weak}. 
Now, any other reusable gate $\beta$ belongs to $C'$, so the conditions are satisfied by induction hypothesis (it is sufficient to remark that when $s$ is removed, $v_\alpha$ and $t_\alpha$ are disconnected from the rest of the circuit, and a cycle cover has to match those two vertices).

If $\alpha$ is an addition gate, let $C'=C\setminus\{\alpha\}$ and suppose that $\alpha$ receives arrows from gates $\beta$ and $\gamma$. Note that $\beta$ and $\gamma$ are reusable. Let $G'$ be the graph corresponding to $C'$, and $s$ be its distinguished vertex. $G'$ contains two reusable vertices $t_\beta$ and $t_\gamma$. The graph $G$ is obtained by adding two vertices $v_\alpha$ and $t_\alpha$, and the following edges: $t_\beta v_\alpha$ and $t_\gamma v_\alpha$ of weight $1$, and $v_\alpha t_\alpha$ of weight $-1$ (see Fig.~\ref{fig:weak-sum}). If $\beta=\gamma$, then $G'$ contains a vertex $t_\beta$, and we merge the two edges adjacent to $t_\beta$ and $t_\gamma$ into an edge $t_\beta v_\alpha$ of weight $2$.

    \begin{figure}[htbp]
    \begin{center}
    \input{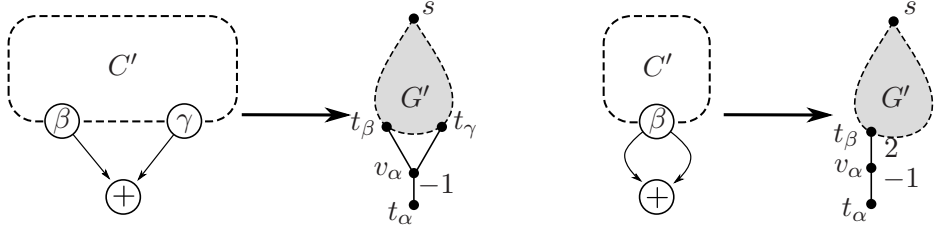}
    \caption{Induction step when $\alpha$ is an addition gate.}
    \label{fig:weak-sum}
    \end{center}
    \end{figure}
Then $|G|=|G'|+2\le 2m+1$, and $|G|$ remains odd. 

Every $s$-$t_\delta$-path for some reusable gate $\delta$ in $C'$ is even. A cycle in $G$ is either a cycle in $G'$, or the cycle $v_\alpha t_\alpha$, or is made of a 
$t_\beta$-$t_\gamma$-path $P$ in $G'$ plus the vertex $v_\alpha$. Let $P'$ be a $s$-$t_\beta$-path and $u$ the first vertex of $P'$ belonging to $P$. Then, $P'=s,\dots,u,\dots,t_\beta$ and $P''=s,\dots,u,\dots,t_\gamma$ are both path with an odd number of vertices. 
In particular the sizes of $u,\dots,t_\beta$ and $u,\dots,t_\gamma$ are of same parity. Thus $P$ is of odd size and $P\cup\{v_\alpha\}$ is an even-size cycle. Hence, every cycle in $G$ is even. An acceptable path in $G$ is either an acceptable path in $G'$ or a path from $s$ to $t_\alpha$. Indeed, the only way to cover $t_\alpha$ in a cycle cover is to match it with $v_\alpha$. Therefore, no acceptable path goes through $t_\beta$, $v_\alpha$ and $t_\gamma$. So, the reusable gates in $C'$ satisfy the conditions of the lemma by induction. Any acceptable path $P$ from $s$ to $t_\alpha$ is an acceptable path $P'$ from $s$ to $t_\beta$ or $t_\gamma$ followed by a path from $t_\beta$ or $t_\gamma$ to $t_\alpha$. Thus $|P|=|P'|+2$ is odd and $G\setminus P=G'\setminus P'$ has a unique cycle cover which is a perfect matching of weight $1$. Finally,
\begin{align*}
 &\sum_{\substack{\text{acceptable}\\\text{$s$-$t_\alpha$-path $P$}} } (-1)^{\frac{|P|-1}{2}}w(P)\\
=&\sum_{\substack{\text{acceptable}\\\text{$s$-$t_\beta$-path $P_\beta$}} } (-1)^{\frac{|P_\beta|+2-1}{2}}(-1\cdot w(P_\beta))
+ \sum_{\substack{\text{acceptable}\\\text{$s$-$t_\gamma$-path $P_\gamma$}} } (-1)^{\frac{|P_\gamma|+2-1}{2}}(-1\cdot w(P_\gamma))\\
=&\sum_{P_\beta} (-1)^{\frac{|P_\beta|-1}{2}} w(P_\beta) + \sum_{P_\gamma} (-1)^{\frac{|P_\gamma|-1}{2}} w(P_\gamma)\\
=&f_\beta+f_\gamma = f_\alpha.
\end{align*}

If $\alpha$ is a multiplication gate, $\alpha$ receives arrows from two distinct gates $\beta$ and $\gamma$. Exactly one of those gates, say $\beta$, is not reusable and removing the gate $\alpha$ yields two disjoint circuits $C_1$ and $C_2$ (say $\beta$ belongs to $C_1$ and $\gamma$ to $C_2$). Let $G_1$ and $G_2$ be the respective graphs obtained by induction from $C_1$ and $C_2$, with distinguished vertices $s_1$ and $s_2$ respectively. The graph $G$ is obtained as in Fig.~\ref{fig:weak-prod} as the union of $G_1$ and $G_2$ where $t_\gamma$ and $s_1$ are merged, the distinguished vertex $s$ of $G$ being the distinguished vertex $s_2$ of $G_2$, and $t_\alpha$ being equal to $t_\beta$. 

    \begin{figure}[htbp]
    \begin{center}
    \input{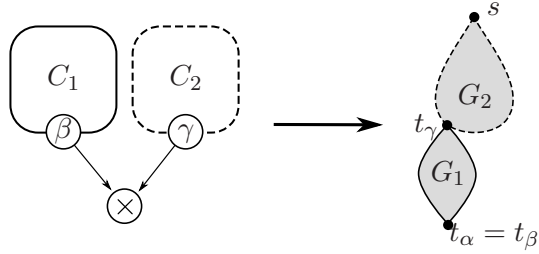}
    \caption{Induction step when $\alpha$ is a multiplication gate.}
    \label{fig:weak-prod}
    \end{center}
    \end{figure}
Then $|G|=|G_1|+|G_2|-1$, so $|G|$ is odd, and if $m_1$ and $m_2$ are the respective sizes of $C_1$ and $C_2$ ($m=m_1+m_2+1$), then $|G|\le 2m_1+1+2m_2+1-1=2m-1$. A cycle in $G$ is either a cycle in $G_1$ or a cycle in $G_2$ and is therefore even. The reusable gates of $C$ are $\alpha$ and the reusable gates of $C_2$ (by definition, $C_1$ is closed and in particular $t_\beta$ is not reusable). A path (in $G$) from $s$ to a reusable gate of $G_2$ cannot enter $G_1$ so the reusable gates of $G_2$ satisfy the first and the third conditions in the lemma. Furthermore, if such a path $P$ is removed from $G$, the only cycle cover of $G\setminus P$ has to be made of a cycle cover of $G_2\setminus P$ and a cycle cover of $G_1\setminus s_1$. Indeed, the vertex $s_1=t_\gamma$ has to be either in a cycle cover of $G_1$ or in a cycle cover of $G_2$. But $G_2\setminus(P\cup\{t_\gamma\})$ is a graph of odd size and cannot be covered by cycles of even size and $G_1$ is also of odd size. Thus, the reusable gates in $G_2$ also satisfy the second condition of the lemma.
It remains to prove that the reusable gate $\alpha$ satisfies the conditions of the lemma:
\begin{enumerate}
\item A $s$-$t_\alpha$-path $P$ is a $s$-$t_\gamma$-path $P_\gamma$ followed by a $s_1$-$t_\beta$-path $P_\beta$. Thus $|P|=|P_\gamma|+|P_\beta|-1$ as     $t_\gamma=s_1$ and $|P|$ is odd.
\item The graph $G\setminus P$ is the disjoint union of $(G_2\setminus P_\gamma)$ and $(G_1\setminus P_\beta)$, so by induction $G\setminus P$ is either     empty or has a unique cycle cover which is a perfect matching of weight $1$.
\item As $w(P)=w(P_\gamma)w(P_\beta)$, we have
    \begin{align*}
    (-1)^{\frac{|P|-1}{2}}w(P)&= (-1)^{\frac{|P_\gamma|+|P_\beta|-2}{2}}w(P_\gamma)w(P_\beta)\\
    &=(-1)^{\frac{|P_\gamma|-1}{2}}w(P_\gamma)\times (-1)^{\frac{|P_\beta|-1}{2}}w(P_\beta),
    \end{align*}
    whence
    \begin{align*}
    \sum_{P}(-1)^{\frac{|P|-1}{2}} w(P) &= \sum_{P_\gamma}(-1)^{\frac{|P_\gamma|-1}{2}}w(P_\gamma)
        \times\sum_{P_\beta}(-1)^{\frac{|P_\beta|-1}{2}}w(P_\beta) \\
    &= f_\gamma\times f_\beta \\
    &= f_\alpha.
    \end{align*}  
\end{enumerate}
Finally, the only way to cover $G\setminus\{s\}$ is to cover $G_2\setminus\{s_2\}$ on one hand and $G_1\setminus\{s_1\}$ on the other hand for parity reasons as before. The weight of this cover is the product of the weights of the covers of $G_1$ and $G_2$, that is $1$.
\end{proof}

\begin{proof}[Proof of Theorem \protect\ref{thm:sym-weak}]
Let $C$ be a weakly-skew circuit computing the polynomial $f$, and $G$ be the graph built from $C$ in Lemma~\ref{lemma:sym-weak}. The circuit $C$ has a unique output, and there exists in $G$ a vertex $t$ corresponding to this output. Let $G'$ be the graph obtained from $G$ by adding an edge between $t$ and $s$ of weight $\tfrac12(-1)^{\frac{|G|-1}{2}}$.

There is no cycle cover of $G'$ containing the $2$-cycle $st$. Indeed, $\left|G'\setminus\{s,t\}\right|$ is odd and $G$ contains only even cycles. This means that a cycle cover of $G'$ contains a cycle made of a $s$-$t$-path plus $(t,s)$ or a $t$-$s$-path plus $(s,t)$. Let $P$ be such a path. Then $G'\setminus P=G\setminus P$. Hence, by Lemma~\ref{lemma:sym-weak}, there is exactly one cycle cover of $G'\setminus P$ and it is a perfect matching of weight $1$. This means that there is a one-to-one correspondence between the cycle covers of $G'$ and the paths from $s$ to $t$ or from $t$ to $s$. There is also a one-to-one correspondence between the paths from $s$ to $t$ and the paths from $t$ to $s$. 

Let us recall that the sign of a cycle cover is the sign of the underlying permutation and its signed weight is the product of its sign and weight. Let $C$ be a cycle cover of $G'$ involving the $s$-$t$-path $P$. The previous paragraph shows that the weight of $C$ equals $\tfrac12(-1)^{\frac{|G|-1}{2}}w(P)$. As $C$ has an odd cycle and a perfect matching, its sign is $(-1)^{|G\setminus P|/2}$, that is the number of couples in the perfect matching. The inverse cycle cover $\bar C$ of $G'$ has the same signed weight as $C$. Hence the sum of the signed weights of all cycle covers of $G'$ equals twice the sum over all $s$-$t$-paths $P$ of $\tfrac12(-1)^{\frac{|G|-1}{2}}(-1)^{\frac{|G\setminus P|}{2}}w(P)=\tfrac12(-1)^{\frac{|P|-1}{2}}w(P)$. By Lemma~\ref{lemma:sym-weak}, this equals $f$ and Lemma~\ref{lemma:perm-det} concludes the proof.

\end{proof}

\subsection{Symmetric determinantal representation of the determinant}
\label{sec:Meena}

Let us denote by 
$\DET_n$ the formal determinant of the $n\times n$ matrix $(x_{i,j})$. This polynomial 
has a weakly-skew circuit of size-$O(n^5)$ (\citep{Ber84,MP08}) or even $O(n^4)$ if we use algebraic branching program 
constructed by Mahajan and Vinay~\citeyearp{MV97}.
This weakly-skew circuit can be represented by a determinant of a symmetric matrix as proved in this paper in Theorem~\ref{thm:sym-weak}.

After a talk from one of us presenting our results, Meena Mahajan and Prajakta Nimbhorkar have communicated us the following theorem, which shows that for the determinant polynomial, the symmetrization can be done more efficiently that in the general case. As this result is not published, we find interesting to give here its proof.

\begin{theorem}[Meena Mahajan and Prajakta Nimbhorkar]
For every $n$ there is a symmetric matrix $M$ of dimensions 
$4n^3+7$ and entries in $\{x_{i,j} \mid 1\leq i,j\leq n\}\cup\{0; 1; -1; 1/2 \}$ such that 
$\DET_n=\det M$.
\end{theorem}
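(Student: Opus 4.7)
The plan is to apply an analogue of the symmetrization technique of Lemma \ref{lemma:sym-weak} directly to the Mahajan--Vinay (MV) algebraic branching program for $\DET_n$, rather than first converting the ABP to a weakly-skew circuit and then invoking Theorem \ref{thm:sym-weak}. The latter route would only yield dimensions $O(n^4)$ since the MV ABP has $\Theta(n^4)$ edges, whereas a direct symmetrization preserves the $\Theta(n^3)$ vertex count.

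First, I would recall the MV construction: a layered directed acyclic graph $\Gamma$ with source $s$ and sink $t$, whose vertices are indexed by triples such as (level, current vertex, head of the current clow), giving approximately $n^3$ vertices in total, and whose edge weights lie in $\{x_{i,j}\} \cup \{-1,0,1\}$, arranged so that $\sum_{P\colon s \to t} w(P) = \DET_n$. All $s$-$t$ paths in $\Gamma$ contain the same number of vertices (depending only on $n$), which is essential for the parity arguments to come.

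Second, I would build a symmetric graph $G$ from $\Gamma$ by replacing each vertex of $\Gamma$ with a small gadget (of four vertices, which accounts for the $4n^3$ term) and each directed edge by an undirected edge whose weight equals that of the directed edge, possibly with a sign flip. The gadget at each vertex must play two roles, directly analogous to the roles of the little sub-graphs in the proof of Lemma \ref{lemma:sym-weak}: it must enforce effectively unidirectional traversal by a parity argument, so that the odd cycles in $G$ through $s$ correspond precisely to the $s$-$t$ paths of $\Gamma$; and it must supply matching partners so that, for every acceptable $s$-$t$ path $P$, the sub-graph $G \setminus P$ has a unique perfect matching of weight $1$. The constant overhead $7$ accounts for the few additional vertices needed to merge $s$ and $t$, fix the overall sign, and introduce the $1/2$ factor normalizing the two orientations of each cycle cover, exactly as in the proofs of Theorems \ref{circsym} and \ref{thm:sym-weak}.

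Finally, the identity $\DET_n = \det M$ for $M$ the symmetric adjacency matrix of $G$ follows from Lemma \ref{lemma:perm-det} by the same cycle-cover argument as in Theorem \ref{thm:sym-weak}: every cycle cover of the closed-up graph pairs an $s$-$t$ path (or its reversal) with the forced weight-$1$ perfect matching on the remaining vertices, and the signs are arranged so that each such pair contributes exactly the corresponding MV weight to $\det M$. I expect the main obstacle to be the concrete design of the four-vertex gadget: it must accommodate the potentially large in- and out-degree of internal MV vertices (which can be $\Theta(n)$) without blowing up the vertex count, must be compatible with the forced perfect matching on $G \setminus P$ for \emph{every} acceptable path $P$, and must yield a uniform sign for every $s$-$t$ path regardless of which branches of the ABP it passes through. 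Verifying the uniqueness of the weight-$1$ perfect matching in the presence of all the MV branching is the technical heart of the argument.
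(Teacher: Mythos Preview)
Your high-level strategy---symmetrize the Mahajan--Vinay branching program directly rather than first passing through a weakly-skew circuit---is exactly what the paper does. However, the mechanism you sketch is off in a way that makes the construction look harder than it is. The $4n^3$ does not come from a four-vertex gadget applied to an $\approx n^3$-vertex ABP. The MV graph (once the two sinks $t_+,t_-$ are merged into a single sink $t$ via edges of weight $1$ and $-1$) already has $2n^3+4$ vertices arranged in $n+2$ layers, and each internal vertex $u$ is split into just \emph{two} vertices $u_{\mathrm{in}},u_{\mathrm{out}}$ joined by a weight-$1$ edge, every arc $(u,v)$ becoming the undirected edge $u_{\mathrm{out}}v_{\mathrm{in}}$ of the same weight. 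This ``stretched'' graph $G_s$ has $4n^3+6$ vertices; one extra vertex $c$, carrying the edges of weights $1/2$ and $(-1)^n$ that close the path into a cycle, brings the total to $4n^3+7$.

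With this simple in/out split the obstacles you anticipate evaporate. Because $u_{\mathrm{in}}$ is adjacent only to $u_{\mathrm{out}}$ and to vertices $v_{\mathrm{out}}$ with $v$ in the \emph{previous} layer, every acceptable $s_{\mathrm{out}}$--$t_{\mathrm{in}}$ path is forced to alternate $\cdots v_{\mathrm{out}},u_{\mathrm{in}},u_{\mathrm{out}},w_{\mathrm{in}},\cdots$ and so corresponds bijectively (and weight-preservingly) to a directed $s$--$t$ path in the ABP; and for every $u$ off the path, the pair $\{u_{\mathrm{in}},u_{\mathrm{out}}\}$ is the only way to cover $u_{\mathrm{in}}$, so $G_s\setminus P$ has a unique perfect matching, of weight $1$ and sign $(-1)^n$. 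Large in/out degree at $u$ is irrelevant since the gadget is a single edge. The ``technical heart'' you were bracing for is thus a one-line observation once the correct two-vertex split is in hand; no four-vertex gadget is needed.
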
 

\begin{proof}
Construct the weighted graph $G'$ computing $\DET_n$ with the method used in Section~3 of \citep{MV97}. 
It is a directed acyclic weighted graph with three distinguished vertices $s$, $t_+$ and $t_-$.  Every weight is $0$, $1$ or a variable $x_{i,j}$. The graph satisfies 
$$\DET_n=\sum_{s-t_+-path \ P \text{ in } G'} w(P) - \sum_{s-t_--path \ P \text{ in } G'} w(P)$$
Moreover, this graph has $2n^3+3$ vertices, at most $4n^4$ edges and the following nice structure: it is made of $n+1$ layers, the first layer being $s$ and the last one being $\{t_+, t_-\}$. Every edge is from a layer $i$ to a layer $i+1$. As a consequence, every $s$-$t_+$-path has $n+1$ vertices, and so has every $s$-$t_-$-path.

From  the graph $G'$ we can easily obtain an algebraic branching program $G$ for computing $\DET_n$: add a vertex $t$, an edge $(t_+,t)$ of weight $1$ and an edge $(t_-,t)$ of weight $-1$. We could then proceed to built from this algebraic branching program a skew-circuit of size $O(n^4)$ (see for example proof of Proposition~1 in \citep{KaKoi08}) and then a symmetric determinantal representation of dimensions 
$O(n^4)$ with the method described in Theorem~\ref{thm:sym-weak}. But symmetrizing directly the algebraic branching program $G$ allows us to achieve a better bound as we are going to see.

Let $V$ be the set of vertices of $G\setminus \{s,t\}$ and $E$ be the set of edges of $G$. The symmetric weighted graph $G_s$ is defined as follows by duplicating vertices in graph $G$. The set of its vertices is $\{s_{out}, t_{in}\}\cup\{ u_{in}, u_{out} \mid u\in V \}$. The set of its edges is $ \{ u_{out} v_{in} \mid (u,v) \in E \} \cup \{ u_{in} u_{out} \mid u \in V \}$. Weights on edges are defined by $w(u_{out} v_{in})=w(u,v)$ and $w(u_{in} u_{out})=1$. The graph $G_s$ has 
$4n^3+6$ vertices arranged in 
$2n+2$ layers and satisfies the following property:
\begin{equation}\label{eq:strechedABP}
    \DET_n=\sum_{\substack{\text{acceptable}\\\text{$s_{out}$-$t_{in}$-path $P$ in $G_s$}} } w(P)
\end{equation}
Recall that a path $P$ in a graph $G_s$ is called 
\emph{acceptable} if $G_s\setminus P$ admits a cycle cover.

To prove Property~\ref{eq:strechedABP}, let us have a look at some acceptable $s_{out}$-$t_{in}$-path $P_s$ in $G_s$ and at some cycle cover $C$ of $G_s\setminus P_s$. We prove that for every $u\in V$, the vertices $u_{in}$ and $u_{out}$ are both in $P_s$ or together in a length-2 cycle of $C$. 
The first vertex of the path $P_s$ is  $s_{out}$. The second vertex is some $u_{1,in}$ where $u_1$ is a vertex of the second layer of $G$. The third vertex is $u_{1,out}$ as $u_{1,in}$ is only linked to $s_{out}$ and $u_{1,out}$. Let us now consider another vertex $v_{in}$ where $v$ belongs to the second layer of $G$. It is only linked to $s_{out}$ and $v_{out}$, and so it is not in $P_s$ but belongs to the weight $1$ and length-2 cycle $v_{in}v_{out}$ in $C$. The same reasoning applies to the following layers. Thus we just proved that there is a weight-preserving bijection between acceptable $s_{out}$-$t_{in}$-paths in $G_s$ and $s$-$t$-paths in $G$. Moreover, for every acceptable $s_{out}$-$t_{in}$-paths $P_s$ in $G_s$, the graph $G_s\setminus P$ has only one cycle cover, which is of weight $1$ and sign $(-1)^{|G_s\setminus P_s|/2}=(-1)^{2n^3-n+2}=(-1)^n$. Because of the symmetry of the graph we also have: 

\begin{equation}
    \DET_n=\sum_{\substack{\text{acceptable}\\\text{$t_{in}$-$s_{out}$-path $P$ in $G_s$}} } w(P)
\end{equation}

and thus

\begin{equation}
\label{eq:path5}
    \DET_n=\frac{1}{2} \sum_{\substack{\text{acceptable}\\\text{$s_{out}$-$t_{in}$-path $P$ in $G_s$}} } w(P)
    +\frac{1}{2} \sum_{\substack{\text{acceptable}\\\text{$t_{in}$-$s_{out}$-path $P$ in $G_s$}} } w(P)
\end{equation}

Remark that every cycle in $G_s$ is even because of its layer structure. Let $\bar{G}$ be the graph $G_s$ augmented with a new vertex $c$ and the edges $t_{in}c$ of weight $1/2$ and $cs_{out}$ of weight $(-1)^n$, and let $M$ be its adjacency matrix. 
The end of the proof is similar to the one of Theorem~\ref{circsym}. The only odd cycles in $\bar{G}$ are the ones including $c$ and a $s_{out}$-$t_{in}$-path or a $t_{in}$-$s_{out}$-path $P$. As an odd graph can not be decomposed in even cycles, every cycle decomposition of in $\bar{G}$ has one of these odd cycles. It was proven above that the rest of the graph has only one possible cycle decomposition. Thus
by \eqref{eq:path5}:

\begin{equation}
    \DET_n=\sum_{\text{cycle cover $C$ in $\bar{G}$}}  \operatorname{sgn}(C)\  w(C)
\end{equation}

According to Lemma~\ref{lemma:perm-det} we have
 
\begin{equation}
    \det(M)=\sum_{\text{cycle cover $C$ in $\bar{G}$}} \operatorname{sgn}(C)\  w(C)
\end{equation}

and thus the result
 
\begin{equation}
    \DET_n=\det(M)
\end{equation}

\end{proof}

\subsection{Minimization} 
\label{sec:green-size}

The aim of this section is to refine the bound we obtained in Section~\ref{sec:fat-size}, using the notion of green size that was defined in Section~\ref{sec:valiant} (and matches the notion of size used in \citep{LR06}). As mentioned before, one can refine this notion of green size. 
It relies on the idea already mentioned by Liu and Regan for the formulas: One can add weights on the arrows of the circuit. If there is an arrow from a gate $\alpha$ to a gate $\beta$ with weight $c$, then $\beta$ receives as argument the value $c f_\alpha$ where $f_\alpha$ is the polynomial computed by $\alpha$. Such a circuit is called a \emph{weighted circuit}. Of course, a classical circuit is a weighted one with all weights equal to $1$.

To refine the notion of green size, the idea is to avoid counting the variable-free sub-circuit. The next lemma shows that it is possible to do this in a very simple way.

\begin{lemma}\label{lemma:minimization}
If $C$ is a weighted circuit, then there exists an equivalent weighted circuit $C'$ with the same number of inputs labelled by a variable and at most the same number of computation gates such that:
\begin{enumerate}
\item An input gate is labelled either by a variable or the constant $1$, and the constant inputs have out-degree $1$;
\item An addition gate has at most one constant argument and this argument is an input gate;
\item A multiplication gate has both arguments non-constant.
\end{enumerate}
\end{lemma}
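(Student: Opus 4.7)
The plan is to apply a sequence of local rewriting transformations that each preserve the computed polynomial, leave the variable-labeled inputs untouched, and never increase the number of computation gates. The key leverage is that in a weighted circuit multiplication by a scalar is free: any constant factor at a gate can be pushed into the weight of that gate's outgoing arrow. I will use this to normalize constant labels, collapse every variable-free subcircuit into a single constant input, and then dissolve every multiplication gate that still has a constant argument.

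The first phase normalizes inputs. For each input labeled by a constant $c$, I relabel it by $1$ and multiply the weight of each of its outgoing arrows by $c$. Then, for each constant input with out-degree $k>1$, I replace it by $k$ fresh constant-$1$ inputs, one per outgoing arrow (keeping the original weight). This changes neither the number of computation gates nor the number of variable-labeled inputs. The second phase processes computation gates in topological order. Call a computation gate $\alpha$ \emph{variable-free} if no input in its sub-circuit carries a variable. For such a gate the value $f_\alpha$ is just some scalar $c_\alpha\in k$. I delete the entire sub-circuit rooted at $\alpha$ and replace it by a single constant-$1$ input, which feeds each former parent of $\alpha$ through an arrow whose weight is $c_\alpha$ times the previous arrow weight; if $\alpha$ had several parents I simply duplicate this constant input to keep out-degree $1$. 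This strictly reduces the number of computation gates and, since the deleted sub-circuit contained no variables, leaves the variable-input count unchanged.

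The third phase eliminates multiplication gates with a constant argument. After phase two, every remaining computation gate has some variable in its sub-circuit, so if a multiplication gate $\alpha=\beta\times\gamma$ has a constant argument, that argument must be a constant input — say $\gamma$, labeled $1$, with the scalar $c$ sitting on the arrow $\gamma\to\alpha$. Then $f_\alpha=c\cdot f_\beta$, so for each outgoing arrow from $\alpha$ to a gate $\delta$ of weight $w$, I add an arrow from $\beta$ to $\delta$ of weight $cw$, duplicating $\beta$'s image as needed; then I delete $\alpha$ and the input $\gamma$. Once more, one computation gate disappears and no variable input is touched. Property~(1) holds because phase one establishes it and later phases only ever create constant-$1$ inputs of out-degree $1$. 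Property~(3) is exactly what phase three enforces. For property~(2), any constant argument of an addition gate must be a constant input (phase two leaves no variable-free computation gate), and an addition gate cannot have \emph{two} constant-input arguments since then it would itself be variable-free and would have been collapsed in phase two.

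The main obstacle is essentially bookkeeping: one must check that each phase preserves the invariants established by the previous ones. Phase three does not reintroduce a variable-free computation gate because it only deletes gates and reweights existing arrows. The constant-input duplications introduced along the way never raise the computation-gate count, since they produce only input gates. Finally, the three phases terminate because phases two and three strictly decrease the number of computation gates, and phase one is applied only once at the start (and, if needed, reapplied to the freshly created constant inputs — these are already labeled $1$ with out-degree $1$ by construction, so no actual work is required).
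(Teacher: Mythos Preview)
Your three phases are precisely the paper's rewriting rules, and the invariant-checking at the end is fine. There is, however, one genuine gap in phase~3: you push the constant factor $c$ forward onto the outgoing arrows of the multiplication gate $\alpha$, but this breaks when $\alpha$ is the \emph{output} gate. In that case $\alpha$ has no outgoing arrows, so your procedure simply deletes $\alpha$ and $\gamma$ and leaves $\beta$ as the new output; the circuit now computes $f_\beta$ rather than $c\cdot f_\beta=f_\alpha$, so equivalence is lost.

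The paper patches exactly this hole with a separate fourth rule: when the output is a multiplication by a constant, push the factor \emph{backward} instead, multiplying every arrow coming \emph{into} $\beta$ by $c$, and then make $\beta$ the new output. You need to add that case (or some equivalent device) to phase~3; otherwise the argument is essentially the same as the paper's.
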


\begin{proof}
One can suppose that there exists some input gate labelled by a variable, otherwise the polynomial computed by $C$ would be constant. To obtain the three points, each of the four following rules is recursively applied to $C$. Each rule is applied as long as possible before we apply the next one. We never go back to a previous rule.
\begin{enumerate}
\item Every input gate labelled by a constant $c$ is replaced by an input gate labelled by $1$, and the weight of an arrow going from it is multiplied by $c$. If there are several arrows going from this input gate, it is duplicated so that each copy has out-degree $1$.
\item Every computation gate $\alpha$ that has both arguments constant is replaced by an input gate labelled by $1$, and the weight of every arrow going from it is multiplied by the value $\alpha$ computed. As in previous step, the new input gates are duplicated to have out-degree $1$.
\item If a multiplication gate $\alpha$ with positive out-degree has one constant argument $\beta$ labelled by $1$ and with an arrow from $\beta$ to $\alpha$ of weight $c_1$, and another argument $\gamma$, non-constant, with an arrow of weight $c_2$, then $\alpha$ and $\beta$ are deleted, and every arrow going from $\alpha$ of weight $c$ is replaced by an arrow going from $\gamma$ of weight $cc_1c_2$ (see Fig.~\ref{fig:min-prod}).
    
    \begin{figure}[htbp]
    \begin{center}
    \input{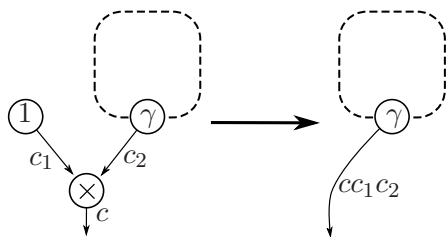}
    \caption{Minimization for a multiplication gate.}
    \label{fig:min-prod}
    \end{center}
    \end{figure}
\item If the output gate $\alpha$ is a multiplication with one constant argument $\beta$ with an arrow of weight $c_1$ going from $\beta$ to $\alpha$ and the other argument $\gamma$, non-constant, with an arrow from $\gamma$ to $\alpha$ of weight $c_2$, then $\alpha$ and $\beta$ are deleted, $\gamma$ becomes the new output gate, and the weight of every arrow coming to $\gamma$ is multiplied by $c_1c_2$ (see Fig.~\ref{fig:min-out}).
    
    \begin{figure}[htbp]
    \begin{center}
    \input{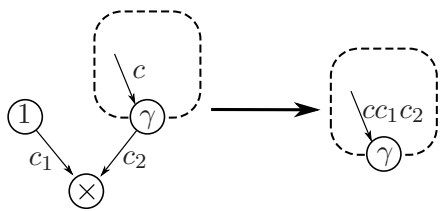}
    \caption{Minimization for the output gate.}
    \label{fig:min-out}
    \end{center}
    \end{figure}
\end{enumerate}
The first two rules ensure that all the constant input gates are labelled by $1$ and have out-degree $1$. After the second rule, each computation gate has at most one constant argument, and that it is an input gate. Then rules 3 and 4 delete all multiplication gates that have a constant argument.
\end{proof}

Note that the above lemma is valid for any kind of arithmetic circuit, and that the construction does not change the nature of the circuit. So this can be applied to a formula to get a formula, or to a weakly-skew circuit to get a weakly-skew circuit.

\begin{definition}
Let $C$ be an arithmetic circuit. Then the circuit $C'$ obtained in Lemma~\ref{lemma:minimization} is the \emph{minimized} circuit associated to $C$, and written $\min(C)$. The \emph{green size} of $C$ is equal to the skinny size of $\min(C)$, that is the number of computation gates in $\min(C)$.
\end{definition}

Note that this definition does not exactly match Definition~\ref{def:green-size} in the case of formulas, but is equivalent to the size mentioned right after the definition. 
In fact, the way of defining the green size we use here yields a smaller size. Nevertheless, it is easy to see that the results obtained in Section~\ref{sec:sym-form} remain true with this new definition.

\begin{theorem}\label{thm:weak+csts}
Let $f$ be a polynomial computable by a weighted weakly skew circuit of green size $e$ and with $i$ inputs labelled by a variable. Then there exists a symmetric matrix $A$ of dimensions 
at most $2(e+i)+1$ whose entries are inputs of the circuit and elements of $\{0,1,-1,1/2\}$ such that $f=\det A$.
\end{theorem}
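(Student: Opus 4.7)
The plan is to reduce to the setting of Theorem~\ref{thm:sym-weak} by first minimizing the circuit and then adapting the inductive construction of Lemma~\ref{lemma:sym-weak} so that each constant input surviving in the minimized circuit contributes no vertices to the graph.

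First, I would apply Lemma~\ref{lemma:minimization} to $C$, obtaining $\min(C)$ of green size $e$ with the same $i$ variable inputs. By the minimization rules, every constant input of $\min(C)$ is labelled $1$, has out-degree $1$, and feeds into a unique addition gate; every addition gate has at most one constant argument; no multiplication gate has a constant argument. Let $e_a$ denote the number of addition gates of $\min(C)$. I would then build a graph $G$ inductively on the gates of $\min(C)$, following Lemma~\ref{lemma:sym-weak} for variable inputs (which add $2$ vertices each), for multiplication gates (which only merge existing vertices and add $0$), and for additions whose two arguments are both non-constant reusable gates (which add $2$). The one new construction is for an addition $\alpha$ with a constant argument $\beta$ (label $1$, arrow weight $c_1$) and a reusable argument $\gamma$ (arrow weight $c_2$): I would add only the two new vertices $v_\alpha, t_\alpha$ together with the edges $t_\gamma - v_\alpha$ of weight $c_2$, $v_\alpha - t_\alpha$ of weight $-1$, and, crucially, a new edge $s - v_\alpha$ of weight $c_1$; nothing is created for $\beta$ itself. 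Counting then gives $|G| = 1 + 2(i + e_a) \le 2(e+i)+1$.

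Next, I would verify that the three conditions of Lemma~\ref{lemma:sym-weak} still hold. The unifying structural observation is that $G$ remains bipartite by parity of distance from $s$, with $s$ and all $t_\delta$ on one side and all $v_\delta$ on the other; every edge of the construction, including the added $s - v_\alpha$, crosses this bipartition, so $|G|$ is odd, every cycle is even, and every $s$-$t_\alpha$-path has odd vertex count. At the step introducing $\alpha$, the only $s$-$t_\alpha$-paths are $s - v_\alpha - t_\alpha$ and $s - \ldots - t_\gamma - v_\alpha - t_\alpha$; their complements equal $G' \setminus \{s\}$ and $G' \setminus P_\gamma$ respectively, both of which admit the unique perfect-matching cycle cover of weight $1$ by induction, and a direct signed-weight computation gives the sum $c_1 + c_2 f_\gamma = f_\alpha$. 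For any previously reusable vertex $t_\delta$, any path that uses the edge $s - v_\alpha$ or enters $v_\alpha$ without terminating at $t_\alpha$ strands the pendant vertex $t_\alpha$ and so cannot be acceptable; hence the acceptable $s$-$t_\delta$-paths are unchanged and the induction hypothesis transfers. I would then conclude as in Theorem~\ref{thm:sym-weak}: attach an edge between the output-vertex $t$ and $s$ of weight $\tfrac12 (-1)^{(|G|-1)/2}$; the adjacency matrix of this augmented graph is the desired symmetric matrix $A$ of dimensions at most $2(e+i)+1$ with $\det A = f$.

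The main obstacle is the verification step: confirming that introducing the edges $s - v_\alpha$ neither breaks the bipartite structure nor opens up acceptable paths to unintended reusable vertices. Once bipartiteness is secured — by checking edge by edge that each kind of edge in the modified construction crosses the parity bipartition — the parity and matching accounting mirrors the proof of Lemma~\ref{lemma:sym-weak} almost verbatim, and the pendant status of $t_\alpha$ at the moment of its introduction rules out stray acceptable paths.
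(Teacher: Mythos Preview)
Your reduction via Lemma~\ref{lemma:minimization} and the treatment of addition gates with one constant argument are essentially the paper's approach, and your gadget for that case (two new vertices $v_\alpha,t_\alpha$, edges $s\,v_\alpha$ of weight $c_1$, $t_\gamma\,v_\alpha$ of weight $c_2$, $v_\alpha\,t_\alpha$ of weight $-1$) is exactly the paper's Fig.~\ref{fig:weak-csts}. The bipartiteness observation is a clean repackaging of the parity bookkeeping.

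There is, however, a genuine gap. After minimization, $\min(C)$ is a \emph{weighted} circuit: every arrow may carry an arbitrary field constant, not only the arrows leaving constant inputs. For instance, applying Rule~3 of Lemma~\ref{lemma:minimization} to $(c\cdot x)\cdot y$ yields a single multiplication gate whose incoming arrow from $x$ has weight $c$, so that gate computes $cxy$. Your proposal says to follow Lemma~\ref{lemma:sym-weak} verbatim at multiplication gates and at additions with two non-constant arguments; but Lemma~\ref{lemma:sym-weak} is written for unweighted circuits, and the merge at a multiplication gate creates no edge on which to put the factor $c_1c_2$. Hence your invariant $\sum_P(-1)^{(|P|-1)/2}w(P)=f_\alpha$ breaks at such a gate: you obtain $f_\beta f_\gamma$ instead of $c_1c_2 f_\beta f_\gamma$.

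The paper repairs this by weakening Property~(3) to
\[
c_\alpha\cdot\sum_{\substack{\text{acceptable}\\\text{$s$-$t_\alpha$-path $P$}}} (-1)^{\frac{|P|-1}{2}}w(P)=f_\alpha
\]
with a floating constant $c_\alpha$ attached to every reusable gate. At a multiplication one simply sets $c_\alpha=c_1c_2c_\beta c_\gamma$; at an addition with two non-constant arguments one discharges the pending constants by weighting the edges $t_\beta v_\alpha$ and $t_\gamma v_\alpha$ by $c_\beta c_1$ and $c_\gamma c_2$ (so $c_\alpha=1$); in the one-constant-addition case the edge $t_\gamma v_\alpha$ must carry weight $c_2 c_\gamma$, not just $c_2$. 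The constant $c_\alpha$ attached to the output is finally absorbed into the weight of the closing edge $t\,s$. Without this device your construction returns the wrong polynomial on any minimized circuit in which a nontrivial arrow weight reaches a multiplication gate.
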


\begin{proof}
The first step is to use Lemma~\ref{lemma:minimization} to minimize the circuit. Thus in the sequel the circuit is supposed to be a minimized weighted weakly-skew circuit. It is sufficient to show how to manage the constants in the construction of Lemma~\ref{lemma:sym-weak}.

The idea is to have the same construction as in Lemma~\ref{lemma:sym-weak} but with the last property changed: for every reusable gate $\alpha$, there exists a constant $c_\alpha$ such that
\begin{equation}\label{eq:with-csts}
c_\alpha\cdot\sum_{\substack{\text{acceptable}\\\text{$s$-$t_\alpha$-path $P$}} } (-1)^{\frac{|P|-1}{2}}w(P)=f_\alpha.
\end{equation}
The changes in the construction only concern the induction steps for computation gates (that is for multiplication and addition gates).

Suppose that $\alpha$ is an addition gate with one constant argument, say $\beta$, with an arrow from $\beta$ to $\alpha$ of weight $c_1$. Suppose the second argument of $\alpha$ is a non-constant gate $\gamma$ with an arrow from $\gamma$ to $\alpha$ of weight $c_2$. By induction, there exists a graph $G_\gamma$ of size $2((e-1)+i)+1$ that satisfies the conditions. In particular, there exists a distinguished vertex $s$, and a vertex $t_\gamma$ with the required properties (let $c_\gamma$ be the associated constant). Then $G$ is obtained by adding two new vertices $v_\alpha$ and $t_\alpha$ and the following edges: an edge $t_\gamma v_\alpha$ of weight $c_2c_\gamma$, an edge $v_\alpha t_\alpha$ of weight $-1$, and an edge $sv_\alpha$ of weight $c_1$ (see Fig.~\ref{fig:weak-csts}). One can check that $G$ satisfies the required properties. In particular, $t_\alpha$ satisfies \eqref{eq:with-csts} with the constant $1$, and $|G|=|G_\gamma|+2=2((e-1)+i)+1+2=2(e+i)+1$.

    \begin{figure}[htbp]
    \begin{center}
    \input{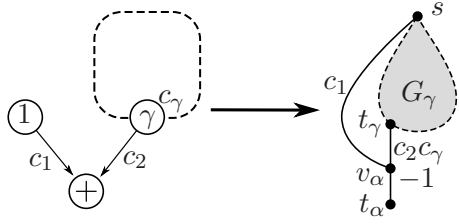}
    \caption{Graph obtained for the sum of a constant and a sub-circuit.}
    \label{fig:weak-csts}
    \end{center}
    \end{figure}

Suppose that $\alpha$ is an addition gate, receiving arrows from non-constant gates $\beta$ and $\gamma$. There exist constants $c_\beta$ and $c_\gamma$ such that \eqref{eq:with-csts} holds for $\beta$ and $\gamma$. Suppose that the arrows from $\beta$ and $\gamma$ to $\alpha$ have respective weights $c_1$ and $c_2$. The construction for the induction step in the same as in the proof of Lemma~\ref{lemma:sym-weak}, on Fig.~\ref{fig:weak-sum}, with the following changes: the edges $t_\beta v_\alpha$ and $t_\gamma v_\alpha$ are respectively weighted $c_\beta c_1$ and $c_\gamma c_2$. Note that this does not change the weight of the perfect matching as those edges never belong to those matchings. As in that case, $f_\alpha=c_1 f_\beta + c_2f_\gamma$, we obtain
\begin{align*}
 &\sum_{\substack{\text{acceptable}\\\text{$s$-$t_\alpha$-path $P$}} } (-1)^{\frac{|P|-1}{2}}w(P)\\
=&\sum_{\substack{\text{acceptable}\\\text{$s$-$t_\beta$-path $P_\beta$}} } (-1)^{\frac{|P_\beta|+2-1}{2}}(-c_1c_\beta\cdot w(P_\beta))
+ \sum_{\substack{\text{acceptable}\\\text{$s$-$t_\gamma$-path $P_\gamma$}} } (-1)^{\frac{|P_\gamma|+2-1}{2}}(-c_2c_\gamma\cdot w(P_\gamma))\\
=&c_1\cdot\biggl(c_\beta\cdot\sum_{P_\beta} (-1)^{\frac{|P_\beta|-1}{2}} w(P_\beta)\biggr)
+ c_2\cdot\biggl(c_\gamma\cdot\sum_{P_\gamma} (-1)^{\frac{|P_\gamma|-1}{2}} w(P_\gamma)\biggr)\\
=&c_1f_\beta+c_2f_\gamma = f_\alpha.
\end{align*}
Note that the constant $c_\alpha$ associated to $t_\alpha$ is equal to $1$ in that case. If $\beta=\gamma$, with the same notations as above, it is sufficient to replace the weight-$2$ edge $t_\beta v_\alpha$ by an edge of weight $2c_1c_\beta$.

In the case of a multiplication gate, the construction (shown in Fig.~\ref{fig:weak-prod}) has no available edge to put the constants. But here, if the arrows from $\beta$ and $\gamma$ to $\alpha$ are still labelled by $c_1$ and $c_2$ respectively, then $f_\alpha=c_1c_2 f_\beta f_\gamma$. Thus, the same construction is kept, and the constant $c_\alpha$ associated to $\alpha$ is defined to be $c_\alpha=c_1c_2c_\beta c_\gamma$ (where $c_\beta$ and $c_\gamma$ are respectively associated to $\beta$ and $\gamma$).

It remains to adapt the proof of Theorem~\ref{thm:sym-weak} to this case. This is easily done by multiplying the weight of the edge between $s$ and $t$ by the constant associated to the output gate.
\end{proof}

\section{Comparison with Quarez's results} 
\label{sec:comp}

In this section, a comparison between our results and those in \citep{Quarez08} is made. While Quarez builds matrices of fixed dimensions (depending only on the degree of the polynomial and its number of variables), we build matrices whose dimensions are polynomial in the size of the input formula or weakly-skew circuit. Consequently, if a polynomial can be represented as a formula or a weakly-skew circuit of small size (say polynomial in the number of variables and in the degree), then our constructions yield much smaller matrices than Quarez's. This is for example the case for the determinant polynomial (that is the determinant of a matrix of indeterminates) which is known to have a polynomial size weakly-skew circuit, or of the polynomial defined as the sum of all possible monomials of degree at most $d$ (for this, see below). On the other hand, some polynomials are not known to have such polynomial size formulas or weakly-skew circuits. A famous example among those is the permanent. We shall see that our constructions also yield better bounds in that interesting case. In the most general case though, our constructions may yield bigger matrices. The next theorem quantifies this.

\begin{theorem}\label{thm:comp}
Let $p$ be a degree-$d$ polynomial in $n$ variables over a field $k$ of characteristic different from $2$. Then $p$ admits a formula of skinny size 
\[F(n,d)\le\binom{n+d+1}{n+1}-\binom{n+d-1}{n+1}-2.\]
This yields a symmetric determinantal representation of dimensions 
\[S(n,d)\le 4\binom{n+d-1}{n}-2.\]
\end{theorem}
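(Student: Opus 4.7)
The plan is in two stages: (1) build an explicit small formula for $p$ via iterated Horner's rule, and (2) convert this formula to a symmetric matrix using Theorem~\ref{greensym}.

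\textbf{Stage 1 (Horner construction).} Proceed by induction on $n$. Write $p$ as a polynomial in $x_n$ with coefficients in $k[x_1,\ldots,x_{n-1}]$ to obtain
\[
p \;=\; p_0 + x_n\bigl(p_1 + x_n(p_2 + \cdots + x_n p_d)\bigr),
\]
where $p_k$ has degree at most $d-k$ in the first $n-1$ variables and is computed recursively. The innermost multiplication $x_n\cdot p_d$ is a constant multiplication (since $p_d$ has degree $0$) and therefore does not count in the green size; the remaining $d-1$ multiplications by $x_n$ and $d$ additions contribute $2d-1$ to the size at this recursion level. This yields the recurrence
\[
F(n,d) \;\le\; (2d-1) \;+\; \sum_{k=0}^{d} F(n-1,d-k), \qquad F(0,\cdot)=F(\cdot,0)=0.
\]

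\textbf{Stage 2 (size analysis).} Rewrite the claimed bound as
$\binom{n+d+1}{n+1}-\binom{n+d-1}{n+1}-2 = \binom{n+d}{n}+\binom{n+d-1}{n}-2$, and check by induction on $n$ that this quantity satisfies the recurrence with equality. The induction step reduces, via two applications of the hockey-stick identity $\sum_{j=0}^{d}\binom{n-1+j}{n-1}=\binom{n+d}{n}$, to an elementary binomial manipulation.

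\textbf{Stage 3 (symmetric determinantal representation).} Apply Theorem~\ref{greensym} to the Horner formula. This already produces a symmetric matrix of dimension $2F(n,d)+3$; using Pascal's rule $\binom{n+d}{n}=\binom{n+d-1}{n}+\binom{n+d-1}{n-1}$, one sees that $2F(n,d)+3 = 4\binom{n+d-1}{n}+2\binom{n+d-1}{n-1}-1$. To match the announced tighter bound $S(n,d)\le 4\binom{n+d-1}{n}-2$, exploit the \emph{chain} structure of Horner's scheme: the $d$ successive multiplications by $x_n$ at each recursion level can share the bridging vertices of the symmetric-graph gadget of Section~\ref{sec:sym-form}, saving an overhead proportional to $\binom{n+d-1}{n-1}$—precisely the residual gap.

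\textbf{Main obstacle.} Stage 2 is a routine combinatorial induction whose only subtlety is correctly crediting the free innermost constant multiplication at each recursion level. The real work is in Stage 3: a direct application of Theorem~\ref{greensym} overshoots the target by an additive term of order $\binom{n+d-1}{n-1}$ (the number of top-degree monomials, which dominates for dense polynomials), so one must revisit the gadget construction of Section~\ref{sec:sym-form} and collapse the $x_n$-chain into a single shared block, reusing the vertices that bridge consecutive addition/multiplication gadgets rather than duplicating them.
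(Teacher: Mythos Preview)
Your Stages 1 and 2 are essentially the paper's argument, differently packaged. The paper uses the two-term recurrence $P_{n,d}=x_nP_{n,d-1}+P_{n-1,d}$ (one Horner step) and derives the bound via a Pascal-triangle substitution; unrolling their recurrence in the second index gives exactly your full Horner expansion, and the hockey-stick computation you sketch recovers the same $F(n,d)$. (One small caution: your base case $F(\cdot,0)=0$ does \emph{not} satisfy $F(n,0)\le B(n,0)$ since $B(n,0)=-1$, so in the inductive sum you must treat the $j=0$ term separately; once you do, the off-by-one disappears and the bound closes.)

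Stage 3 is where you have a genuine gap. You correctly compute that a direct appeal to Theorem~\ref{greensym} overshoots $S(n,d)$ by $2\binom{n+d-1}{n-1}+1$, but your proposed fix---``collapse the $x_n$-chain into a single shared block'' by reusing bridging vertices in the graph gadget---is not a proof. Sharing the $x_n$ inputs turns the formula into a (weakly skew) circuit, and the circuit-to-symmetric-determinant construction of Section~\ref{sec:ws-circuits} has the same factor~$2$ in front, so there is no saving to be had that way; and within the formula construction of Section~\ref{sec:sym-form} it is not at all clear what ``sharing bridging vertices'' would mean without destroying the parity and unique-matching invariants of Lemma~\ref{construction}.

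The paper closes the gap by a completely different (and much simpler) device: since Quarez's matrices already allow \emph{linear forms} as entries, the comparison is made in that model, i.e., the formula is allowed inputs labelled by arbitrary linear forms at cost~$0$. In the recursion tree this makes every leaf $P_{k,1}=a_0x_0+\cdots+a_kx_k$ free, and a counting argument shows there are exactly $\binom{n+d-1}{n-1}$ saved additions. Subtracting this from $F(n,d)$ \emph{before} applying Theorem~\ref{greensym} yields $S(n,d)\le 4\binom{n+d-1}{n}-2$ directly, with no modification of the gadget needed. This is the idea you are missing.
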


\begin{proof}

Let $P_{n,d}$ a degree-$d$ polynomial in $n$ variables $\{x_1,\dots,x_n\}$. We shall build a \emph{weighted formula} in the sense of Section~\ref{sec:green-size}, that is a formula with inputs in $\{1,x_1,\dots,x_n\}$ and with weights on the wires. 
We will first give an algorithm to build such a formula, and then derive an upper bound on the size of the formula so constructed.

In order to clarify the construction, let us homogenize the polynomial $P_{n,d}$ with a new variable $x_0$. There exists two homogeneous polynomials $P_{n,d-1}$ and $P_{n-1,d}$ such that $P_{n,d-1}$ is a polynomial of degree at most $(d-1)$ in $(n+1)$ variables and $P_{n-1,d}$ is a polynomial of degree at most $d$ in variables $x_0,\dots,x_{n-1}$ which satisfy
\begin{equation} P_{n,d}=x_n\cdot P_{n,d-1}+P_{n-1,d}. \label{eq:rec}\end{equation}
Along with the equations $P_{k,1}=a_0x_0+a_1x_1+\cdots+a_kx_k$ and $P_{0,\delta}=p_0x_0^\delta$, this gives a formula for the polynomial $P_{n,d}$. Clearly, some $P_{k,\delta}$ may be the zero polynomial.

The rest of the proof is devoted to compute a bound on the size of the formula obtained by Equation~\eqref{eq:rec}. Let $F(n,d)$ denote the bound on the size of the formula computing $P_{n,d}$: $F(n,d)\le F(n-1,d)+F(n,d-1)+2$. For the base cases, $F(k,1)\le k$ for all $k$, $F(0,\delta)\le\delta-1$. Let $G(N,d)=F(N-d-1,d)+2$ (for $N>d$ and $d\ge 1$). Then $G(N,d)$ satisfies Pascal's formula
\begin{equation} G(N,d)\le G(N-1,d)+G(N-1,d-1) \label{eq:pascal} \end{equation}
and $G(\delta+1,\delta)\le \delta+1$, $G(k+2,1)\le k+2$. 
Thus $G(N,d)$ is bounded from above by the binomial coefficient $\tbinom Nd$, so we obtain 
\begin{equation} F(n,d)\le\binom{n+d+1}{d}-2. \label{eq:fnd} \end{equation}
This gives a first bound on $F(n,d)$, somewhat bigger than the one announced. 
This comes from the fact that the base case bound $F(0,\delta)\le\delta-1$ is too large: As the new variable $x_0$ is for homogenization, the actual formula is obtained by replacing it by $1$ and therefore the formula for $P_{0,\delta}=p_0x_0^\delta$ is made of a single input labelled by $1$ with the constant $p_0$ on the wire going from it. So $F(0,\delta)=0$.

This remark yields the same equation as Equation~\eqref{eq:pascal} for $G$ but with a new base case $G(\delta+1,\delta)=2$. A general form for such recurrences is  
\[ G(N,d)=\sum_{j=0}^d a_j\binom N{d-j}\]
for some $a_j$. Nevertheless, the values we get for the $a_j$ if we apply this equation to the base cases are not really explicit. Therefore, we shall proceed in a different way: the new bound for $G(N,d)$ is computed as the difference between the bigger bound $\tbinom Nd$ and the number of $P_{0,\delta}$ that were counted. In the recurrence~\eqref{eq:rec}, consider the recursion tree: Suppose that the vertex corresponding to $P_{n,d-1}$ is the left child of the vertex corresponding to $P_{n,d}$, and $P_{n-1,d}$ its right child. The root of the recursion tree corresponds to the output of the formula, and its leaves to some $P_{k,1}$ or some $P_{0,\delta}$. The quantity to count is the number of leaves corresponding to some $P_{0,\delta}$. A path from the root $P_{n,d}$ to $P_{0,\delta}$ has to decrease the first argument from $n$ to $0$ and the second from $d$ to $\delta$. In the recursion tree, this corresponds to a path going $n$ times to the right and $(d-\delta)$ times to the left. Moreover, such a path finishes by a move from $P_{1,\delta}$ to its right child $P_{0,\delta}$, as $P_{0,\delta+1}$ has no child. Let us define the set of strings $W_{i,j}$ as
\[W_{i,j}=\left\{w\in\{L,R\}^* : |w|_R=i\text{ and }|w|_L=j\right\}.\]
The cardinality of $W_{i,j}$ is $\tbinom{i+j}{i}$ as an element of this set is determined by the $i$ places for the letters $R$ in a length-$(i+j)$ word. As the path from $P_{n,d}$ to $P_{0,\delta}$ finishes by a right move, the number of $P_{0,\delta}$ occurring in the recursion tree is equal to the cardinality of $W_{n-1,d-\delta}$, that is $\tbinom{n+d-\delta-1}{n-1}$. And for each $P_{0,\delta}$, the original bound counted $(\delta-1)$ operations instead of zero. Thus, to get a tighter bound we have to subtract 
\[\sum_{\delta=1}^d (\delta-1)\binom{n+d-\delta-1}{n-1}=\sum_{j=0}^{d-1} (d-j-1)\binom{n+j-1}{j}.\]
Let $\Mon_n^j$ (resp. $\Mon_n^{\le j}$) be the set of all monomials in $n$ variables of degree $j$ (resp. at most $j$). Then $\Mon_n^j$ has cardinality $\tbinom{n+j-1}{j}$, and $(d-j-1)\tbinom{n+j-1}{j}$ is the cardinality of the set $\{x^p\Mon_n^j : 0\le p\le d-j-2\}$ where $x$ is a fresh variable. Thus, the sum over $j$ of those quantities is the cardinality of $\Mon_{n+1}^{\le d-2}$, that is $\tbinom{n+d-1}{n+1}$. This gives the first part of the theorem:
\[F(n,d)\le\binom{n+d+1}{n+1}-\binom{n+d-1}{n+1}-2.\]
In the rest of the proof, we shall give a bound on the dimensions 
of the matrix obtained by our construction of Section~\ref{sec:formulas}. 

In \citep{Quarez08}, the symmetric matrix that is built contains linear functions as entries (and not only variables and constants). Therefore, we now give a bound in that case 
to permit a tighter comparison between both methods. In other words, we 
suppose that the inputs of the formula are not only constants and variables, but also linear functions. 
This amounts to defining the size of the arithmetic formula $a_0x_0+a_1x_1+\cdots+a_kx_k$ as $0$ instead of $k$. As in the previous paragraph, a direct computation where the bounds on the base cases are changed can be done but yields non explicit formulas. Therefore, we use the same technique as before: The size of the formula when inputs can be linear functions is the difference between the size of the classical formula and the number of linear functions that appear. Those linear functions are the $P_{k,1}$ and appear as leaves in the recursion tree. A leaf labelled by $P_{k,1}$ is reachable by a path going $(n-k)$ times to the right and $(d-1)$ times to the left. As above, the path finishes by a move from $P_{k,2}$ to its left child $P_{k,1}$. Therefore the number of leaves labelled by $P_{k,1}$ is the cardinality of $W_{n-k,d-2}$, that is $\tbinom{n+d-k-2}{n-k}$. All those leaves count for $k$ additions, thus the total number of saved additions is
\[\sum_{k=1}^n k \binom{n+d-k-2}{n-k}=\sum_{j=0}^{n-1}(n-j)\binom{j+d-2}{j}.\]
The computation is now the same as above and this sum equals $\tbinom{n+d-1}{d}$. Using now Theorem~\ref{greensym}, we get a symmetric matrix of dimensions 
\[S(n,d)\le 2\left[\binom{n+d+1}{n+1}-\binom{n+d-1}{n+1}-\binom{n+d-1}{n-1}-1\right].\]
To complete the proof, it is sufficient to use Pascal's formula twice: 
\begin{align*}
\binom{n+d+1}{n+1} &= \binom{n+d}{n+1}+\binom{n+d}{n}\\
    &= \left[\binom{n+d-1}{n+1}+\binom{n+d-1}{n}\right]+\left[\binom{n+d-1}{n}+\binom{n+d-1}{n-1}\right]\\
    &= 2\binom{n+d-1}{n} + \binom{n+d-1}{n+1}+\binom{n+d-1}{n-1}.
\end{align*}

\end{proof}

Note that the bound $F(n,d)$ we obtain with this construction is only better by a linear factor in $n$ than the obvious formula consisting of a sum of 
all the monomials. Indeed, for any $j\le d$, there are at most $\tbinom{n+j-1}{j}$ monomials of degree $j$ which use $(j-1)$ multiplications, and there are at most $(\tbinom{n+d}{d}-1)$ additions. Therefore the size of the formula we get in this way is
\[\sum_{j=1}^d(j-1)\binom{n+j-1}{j}+\binom{n+d}{d}-1=n\binom{n+d}{n+1}=\frac{n(n+d)}{n+1}\binom{n+d-1}{n}.\]
The first equality comes from similar techniques as in the previous proof and the second one is a straightforward computation. This yields a matrix of dimensions 
$\tfrac{n(n+d)}{2(n+1)}S(n,d)$ approximately.

Nevertheless, this is a bound in the worst case, that is for a polynomial $M_{n,d}$ in which all the monomials of degree at most $d$ appear. But in this special case one can change this construction if the aim is to have the polynomial $M_{n,d}$ itself. Indeed, the recurrence given by Equation~\eqref{eq:rec} can be altered 
in the following manner:
\begin{align*}
M_{n,d}&=x_n M_{n,d-1}+M_{n-1,d}\\
 &=x_nM_{n,d-1}+x_{n-1}M_{n-1,d-1}+M_{n-2,d}\\
 &=x_nM_{n,d-1}+\cdots+x_0M_{0,d-1}.
\end{align*}
This gives an inductive construction of a skew circuit to compute $M_{n,d}$. At step $1$, $M_{n,1}$ is built, and it is clear that every $M_{n-k,1}$ is represented by a gate in the circuit. At step $\delta\le d$, suppose that we have a circuit such that every $M_{n-k,\delta-1}$ is represented by a gate. Then one can build a circuit with $(n+1)$ new variable inputs, $(n+1)$ multiplication gates and $n$ addition gates such that every $M_{n-k,\delta}$ is represented by a gate. At each step, the circuit size increases by $(2n+1)$ and $(n+1)$ inputs are added. As the size of the circuit for degree $1$ is $n$ with $(n+1)$ inputs, the circuit for $M_{n,d}$ has size $(2nd-n+d-1)$ and has $(n+1)d$ inputs. This yields a matrix of polynomial dimensions (in $n$ and $d$), 
much smaller than with Quarez's construction.

Let us now compare the bounds of Theorem~\ref{thm:comp} in the worst case with Quarez's. To this end let us consider a polynomial with $n$ variables and of degree $2d$. Then Quarez builds a symmetric matrix of dimensions 
$2\tbinom{n+d}{n}$ whereas our construction yields a matrix of dimensions 
$4\tbinom{n+2d-1}{n}-2$. A bound on the quotient of those quantities can be given using the inequalities (see e.g. \citep{KnuthVol1})
\[\binom{n+d}{n}\le\left(\frac{e(n+d)}{n}\right)^n\text{ and }\binom{n+2d-1}{n}\ge\left(\frac{n+2d-1}{n}\right)^n.\]
So, the quotient is bounded by
\[\left(\frac{e(n+d)}{n}\right)^n\cdot\left(\frac{n}{n+2d-1}\right)^n=e^n\cdot\left(\frac{n+d}{n+2d-1}\right)^n\le e^n.\]
This means that Quarez's construction is exponentially better in the general case even though our construction yields much smaller matrices when the polynomial has a polynomial size formula or weakly-skew circuit. 

We now compare Quarez's results and ours for the special case of the permanent. This is an important example of a polynomial for which no polynomial size circuit is known (even non weakly-skew). Nevertheless, there exist formulas for computing it of much smaller size than the bounds for the general case \citep{Ryser,Gly10}. For instance, Ryser's formula 
to compute the permanent of a matrix $M$ is
\[\per(A) = \sum_{S\subseteq\{1,\dots,n\}} (-1)^{|S|} \prod_{i=1}^n \sum_{j\notin S} M_{ij}.\]
As the sums of variables are not counted, this gives a size-$O(n2^n)$ formula, and hence yields a symmetric matrix of dimensions $O(n2^n)$ 
to represent the permanent. Let us consider the permanent of a $(2n\times 2n)$ matrix. This is a polynomial of degree $2n$ with $4n^2$ variables. Therefore, Quarez's construction yields a matrix of dimensions 
$2\tbinom{4n^2+n}{n}$. This quantity can be bounded as above and therefore we get the following bound (up to a constant factor) for the quotient:
\[\frac{\binom{4n^2+n}{n}}{n2^{2n}}\ge \frac{\left((4n^2+n)/n\right)^n}{n4^n}\ge\frac{4^nn^n}{n4^n}=n^{n-1}.\]
A more careful computation 
\emph{via} Stirling's formula 
shows that this quotient is equal to $O(n^{n-1/2}(4e)^n)$ when $n$ tends to infinity.

\section{Characteristic 2} 
\label{sec:char2}

In characteristic $2$, the constructions of Sections~\ref{sec:formulas} and \ref{sec:ws-circuits} are not valid anymore because of the coefficients $1/2$ they use. Nevertheless, for a polynomial computable by a weakly-skew circuit, it is possible to represent its square as the determinant of a symmetric matrix. On the other hand, representing the polynomial itself seems to be a challenging problem. 
For instance, it is not possible to represent the polynomial $xy+z$ this way (\cite{GMT}), but we don't have for the moment a characterisation of representable polynomials. 
Related to these problems, the $\VNP$-completeness of the partial permanent is also studied. Actually, we give an almost complete answer to an open question of B\"urgisser~\citeyearp{Burgisser} (Problem~3.1) showing that if the partial permanent is complete in finite fields of characteristic $2$, then the (boolean) polynomial hierarchy collapses. For any field of characteristic $2$ (finite or infinite), we show that the $\VNP$-completeness of this family would imply that every $\VNP$ family of polynomials has its square in $\VP_{ws}$ (\emph{i.e.} has polynomial size weakly-skew circuits). 
This also seems unlikely to happen unless $\VP_{ws}=\VNP$. 
We refer to \citep{Burgisser,MP08} for the formal definitions of the complexity classes $\VNP$ and $\VP_{ws}$.

Let $G$ be an edge-weighted graph with vertices $\{v_1,\dots,v_n\}$. Recall that the adjacency matrix $A$ of $G$ is the $(n\times n)$ symmetric matrix defined by $A_{ij}=A_{ji}=w_{ij}$ where $w_{ij}$ is the weight of the edge $v_iv_j$. Suppose now that $G$ is bipartite with two independent sets of vertices $V_r$ and $V_c$ of cardinality $m$ and $n$ respectively. Let $V_r=\{r_1,\ldots,r_m\}$ and $V_c=\{c_1,\ldots,c_n\}$. The \emph{biadjacency matrix} of $G$ (also known as the \emph{bipartite adjacency matrix}) is the $(m\times n)$ matrix $B$ such that $B_{ij}$ is the weight of the edge between $r_i$ and $c_j$. This means that the rows of $B$ are indexed by $V_r$ and its columns by $V_c$. 
For a bipartite graph $G$ of adjacency and biadjacency matrices $A$ and $B$ respectively,
\[A=\begin{bmatrix} 0&B\\B^t&0\end{bmatrix}.\]

Throughout this section, we shall use the usual definition of the weight of a partial matching: it is the product of the weights of the edges it uses.

\subsection{Symmetric determinantal representation of the square of a polynomial} 

\begin{lemma}\label{lemma:cycle-cover}
Let $G$ be an edge-weighted graph and $A$ its adjacency matrix. In characteristic $2$, the determinant of $A$ is the sum of the weights of the cycle covers with cycles of length at most $2$.
\end{lemma}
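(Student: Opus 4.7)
The plan is to invoke Lemma~\ref{lemma:perm-det}, which gives
$\det A = \sum_{C} \operatorname{sgn}(C)\, w(C)$
summed over all cycle covers $C$ of $G$ (viewed as its associated symmetric digraph). In characteristic $2$ every sign disappears, so the determinant collapses to the permanent: $\det A = \sum_{C} w(C)$. It therefore remains to show that every cycle cover that contains at least one cycle of length $\ge 3$ is cancelled.

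To this end I will define a fixed-point-free involution $\iota$ on the set of cycle covers having at least one cycle of length $\ge 3$. Given such a cover $C$, let $v$ be the vertex of smallest index that lies in a cycle of length $\ge 3$, let $\sigma = (v=u_1, u_2, \dots, u_k)$ be that cycle, and define $\iota(C)$ by replacing $\sigma$ with its reversal $\sigma' = (v=u_1, u_k, u_{k-1}, \dots, u_2)$, keeping all other cycles of $C$ untouched.

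Three verifications are then routine. First, $\iota$ preserves weight: the graph is undirected, so $A_{u_i,u_{i+1}} = A_{u_{i+1},u_i}$ for every edge of $\sigma$, which gives $w(\sigma') = w(\sigma)$. Second, $\iota$ is an involution, since the cycle through $v$ in $\iota(C)$ still has length $k \ge 3$, so a second application reselects this cycle and reverses it back. Third, $\iota$ has no fixed points: for $k \ge 3$ the two orientations $\sigma$ and $\sigma'$ induce different permutations on $\{u_1, \dots, u_k\}$, hence different cycle covers under the paper's convention, whereas for $k \le 2$ a cycle and its reversal coincide as permutations (a loop is trivially self-reversed, and the two arcs $(u,v),(v,u)$ of a $2$-cycle are the same set whichever way one reads the cycle).

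Pairing cycle covers via $\iota$ therefore groups the ones containing a long cycle into pairs of equal weight, so in characteristic $2$ each pair contributes $2\,w(C) = 0$. Only cycle covers whose cycles all have length at most $2$ survive, which is exactly the claimed identity. The sole subtlety — and the only point worth watching carefully — is the length-$2$ case: the distinction that loops and $2$-cycles are self-reverse while longer cycles are not is what makes the canonical choice of $\sigma$ via the smallest-index rule yield a true involution without collapsing onto itself.
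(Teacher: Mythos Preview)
Your proof is correct and follows essentially the same approach as the paper's: both reduce to the permanent in characteristic~$2$ and then cancel every cycle cover containing a cycle of length at least~$3$ by pairing it with the cover obtained by reversing that cycle. Your version is in fact more careful than the paper's, which does not specify a canonical choice of which long cycle to reverse and so leaves the involution implicit; your smallest-index rule makes the pairing well defined even when several long cycles are present.
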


\begin{proof}
Let us consider $G$ as a symmetric digraph (that is an edge $uv$ is seen as both arcs $(u,v)$ and $(v,u)$). In Lemma~\ref{lemma:perm-det}, the signs of the cycle covers are considered. In characteristic $2$, this is irrelevant. Therefore, the determinant of $A$ is the sum of the weights of the cycle covers of $G$.

Let $C$ be a cycle cover of $G$ containing a (directed) cycle of length at least $3$ denoted by $(v_1,v_2,\dots,v_k,v_1)$. One can change the direction of this cycle (as $G$ is symmetric) and obtain a new cycle cover $C'$ containing the same cycles as $C$, but $(v_k,v_{k-1},\ldots,v_1,v_k)$ instead of $(v_1,v_2,\dots,v_k,v_1)$. Clearly, the weights of $C$ and $C'$ are the same as the graph is symmetric. Therefore, when the determinant of $A$ is computed in characteristic $2$, the contributions of those two cycle covers to the sum cancel out. This shows that the determinant of a matrix in characteristic two is obtained as the sum of the weights of cycle covers with cycles of length $1$ (loops) or $2$.
\end{proof}

\begin{proposition}
Let $p$ be a polynomial over a field of characteristic $2$, represented by a weakly-skew circuit of fat size $m$. Then there exists a symmetric matrix $A$ of dimensions 
$(2m+2)$ such that $p^2=\det(A)$.
\end{proposition}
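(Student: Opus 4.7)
The plan is to reduce the symmetric-representation problem to the known non-symmetric one by a bipartite doubling trick, using the fact that in characteristic~$2$ the squaring map $x \mapsto x^2$ is a ring homomorphism (Frobenius) and the sign $(-1)^{m+1}$ is trivial.

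First I would invoke the Malod--Portier construction (in the negation-free version obtained via Kayal--Koiran, already cited in the paper) to produce, from the given weakly-skew circuit $C$ of fat size $m$, a (not necessarily symmetric) square matrix $M$ of dimensions $m+1$ whose entries lie in $\{0,1,-1\}$ together with inputs of $C$, such that $\det(M)=p$. This construction is valid in any field, in particular in characteristic~$2$, where $-1=1$ so the entries are in fact in $\{0,1\}\cup\{\text{inputs}\}$.

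Next I would form the symmetric block matrix
\[
A = \begin{bmatrix} 0 & M \\ M^{T} & 0 \end{bmatrix},
\]
which has dimensions $2(m+1)=2m+2$, is symmetric by construction, and has the same entry set as $M$. It remains to show $\det(A)=p^{2}$. One can give a direct proof by the permutation expansion: every permutation of $\{1,\dots,2m+2\}$ contributing a nonzero term decomposes as a pair of permutations $(\tau,\rho)$ of $\{1,\dots,m+1\}$ (matching the top rows into the right block and the bottom rows into the left block), and a short bookkeeping computation gives $\det(A)=(-1)^{m+1}\det(M)\det(M^{T})=(-1)^{m+1}\det(M)^{2}$, which in characteristic~$2$ equals $p^{2}$.

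Alternatively, and more in the spirit of the paper, I would apply Lemma~\ref{lemma:cycle-cover}: in characteristic~$2$, $\det(A)$ equals the sum of the weights of the cycle covers of the associated graph with cycles of length $\le 2$. Since $A$ has zero diagonal and a bipartite support, the only such covers are perfect matchings of the complete bipartite graph with biadjacency matrix $M$, where each edge between row~$i$ and column~$j$ contributes $M_{ij}\cdot(M^{T})_{ji}=M_{ij}^{2}$. Thus
\[
\det(A)=\sum_{\sigma\in S_{m+1}}\prod_{i}M_{i,\sigma(i)}^{2}=\Bigl(\sum_{\sigma}\prod_{i}M_{i,\sigma(i)}\Bigr)^{2}=\det(M)^{2}=p^{2},
\]
the middle equality being the Frobenius identity $(x+y)^{2}=x^{2}+y^{2}$ in characteristic~$2$ (which also makes $\det$ and $\per$ coincide). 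Either route yields the conclusion.

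There is no real obstacle here once one has the non-symmetric representation in hand; the only point to be careful about is citing the variant of Malod--Portier that avoids negations (so that the entries of $M$, and hence of $A$, come from the desired set) and checking the block-determinant sign, which conveniently disappears in characteristic~$2$.
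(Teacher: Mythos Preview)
Your proposal is correct and follows essentially the same approach as the paper: the block matrix $A=\begin{bmatrix}0&M\\M^{T}&0\end{bmatrix}$ is precisely the adjacency matrix of the bipartite graph the paper builds by splitting each vertex $v$ of the Malod--Portier digraph into $v^{s},v^{t}$ and turning each arc $(u,v)$ into the edge $\{u^{s},v^{t}\}$. Your second argument via Lemma~\ref{lemma:cycle-cover} and Frobenius is exactly the paper's computation; your first argument (the block-determinant identity $\det(A)=(-1)^{m+1}\det(M)^{2}$) is a pleasant shortcut the paper does not take, but it leads to the same place.
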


\begin{proof}
Let $C$ be a weakly-skew circuit representing a polynomial $p$ over a field of characteristic $2$. Let $M$ be the matrix obtained by Malod and Portier's construction~\citeyearp{MP08} such that $\per M=p$. Let $G$ be the digraph represented by $M$, and let $G'$ be the bipartite graph obtained from $G$ by the two following operations: Each vertex $v$ of $G$ is turned into two vertices $v^s$ and $v^t$ in $G'$, and each arc $(u,v)$ is turned into the edge $\{u^s,v^t\}$. A loop on a vertex $u$ is simply represented as the edge $\{u^s,u^t\}$. Let $A$ be the symmetric adjacency matrix of $G'$ (when the vertices are ordered $v_0^s,v_1^s,\ldots,v_m^s,v_0^t,\ldots,v_m^t$). 

It is well-known that cycle covers of $G$ and perfect matchings of $G'$ are in one-to-one correspondence. If there is a cycle cover of $G$, then each vertex $v$ belongs to a cycle, and thus has both a predecessor $v$ and a successor $w$. This means that $u^t$ and $u^s$ are matched to $v^s$ and $w^t$ respectively (if $u$ is covered by a loop, then $u^s$ and $u^t$ are matched). Conversely, suppose that $G'$ has a perfect matching. Let $u^s$ be any vertex. Then it is matched to some $v^t$. In the same way, $v^s$ is matched to some $w^t$. As the set of vertices is finite, as some point we go back to $u^t$. Thus it defines a cycle in $G$, and by doing the same process with other vertices not in this cycle this eventually defines a cycle cover in $G$.

This one-to-one correspondence shows that the determinant of $M$ equals the sum of the weights of the perfect matchings in $G'$. If a perfect matching in $G'$ is considered as a cycle cover with length-$2$ cycles, the weight of the cycle cover is the square of the weight of the perfect matching. Indeed, in the cycle cover, all the arcs of the length-$2$ cycles have to be considered, that is each edge contributes twice to the product. Lemma~\ref{lemma:cycle-cover} and the fact that there is no loop in $G'$ show that 
\[\det(A)=\sum_\mu w(\mu)^2=\Bigl(\sum_\mu w(\mu)\Bigr)^2,\]
where $\mu$ ranges over all perfect matchings of $G'$ and $w(\mu)$ is the weight of the perfect matching $\mu$. The second equality holds as the field has characteristic $2$. 

Finally, it is shown in \citep{MP08} that $p=\det(M)$, and we showed that $\det(M)=\sum_\mu w(\mu)$ and $\det(A)=\bigl(\sum_\mu w(\mu)\bigr)^2$. Therefore, $\det(A)=\det(M)^2=p^2$.
\end{proof}

This proposition raises the following question: Let $f$ be a family of polynomials such that $f^2\in\VP_{ws}$. Does $f$ belong to $\VP_{ws}$? This question is discussed with more details in the next section.

\subsection{Is the partial permanent complete in characteristic $2$?} 
\label{sec:partial-perm}

\begin{definition}
Let $X=(X_{ij})$ be an $(n\times n)$ matrix. The partial permanent of $X$, as defined by Bürgisser~\citeyearp{Burgisser}, is 
\[\per^* (X) = \sum_\pi \prod_{i\in\domdef(\pi)}X_{i\pi(i)},\]
where the sum ranges over the injective partial maps from $[n]=\{1,\dots,n\}$ to $[n]$ and $\domdef(\pi)$ is the domain of the partial map $\pi$ 
(recall that a partial map is a map from a subset of $[n]$ to $[n]$).

The family $(\PER^*_n)$ is the family of polynomials such that $\PER^*_n$ is the partial permanent of the $(n\times n)$ matrix whose coefficients are the indeterminates $X_{ij}$.
\end{definition}

\begin{lemma}
Let $G$ be the complete bipartite graph with two independent sets of vertices $V_r$ and $V_c$ such that the edge between $r_i$ and $c_j$ is labelled by $B_{ij}$ (the matrix $B$ is the biadjacency matrix of $G$). Then the partial permanent of $B$ is equal to the sum of the weights of the partial matchings of $G$.
\end{lemma}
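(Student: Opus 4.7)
The plan is to set up a weight-preserving bijection between the injective partial maps $\pi : [n] \to [n]$ appearing in the definition of $\per^*(B)$ and the partial matchings of the complete bipartite graph $G$, and then read off equality of the two sums term by term.

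First, I would describe the bijection. Given an injective partial map $\pi$ with domain $\domdef(\pi) = S \subseteq [n]$, associate to it the edge set $M_\pi = \{\,r_i c_{\pi(i)} : i \in S\,\}$. Injectivity of $\pi$ guarantees that no two edges of $M_\pi$ share a vertex $c_j$, and distinctness of the elements of $S$ guarantees the same on the $V_r$ side, so $M_\pi$ is indeed a partial matching in $G$. Conversely, any partial matching $M$ of $G$ determines an injective partial map $\pi_M$ by setting $\domdef(\pi_M) = \{\,i : r_i \text{ is covered by } M\,\}$ and $\pi_M(i) = j$ when $r_i c_j \in M$; these two assignments are mutually inverse.

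Second, I would compare the weights. Since the section explicitly reverts to the usual convention that the weight of a partial matching is the product of the weights of its edges, the weight of $M_\pi$ is exactly
\[
w(M_\pi) \;=\; \prod_{i \in S} w(r_i c_{\pi(i)}) \;=\; \prod_{i \in \domdef(\pi)} B_{i\pi(i)},
\]
which is precisely the monomial contributed by $\pi$ in the definition of $\per^*(B)$.

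Third, summing over all injective partial maps on one side and over all partial matchings of $G$ on the other yields
\[
\per^*(B) \;=\; \sum_{\pi} \prod_{i \in \domdef(\pi)} B_{i\pi(i)} \;=\; \sum_{M} w(M),
\]
which is the desired identity. There is no real obstacle here: once the bijection is recorded, the proof is a one-line bookkeeping step. The only mild subtlety worth mentioning explicitly is the convention for the empty matching / empty partial map, whose contribution to both sides is the empty product $1$, so that the base case of the bijection is also consistent.
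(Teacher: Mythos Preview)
Your proposal is correct and follows exactly the same idea as the paper's own proof, which simply observes that an injective partial map $\pi$ from $[n]$ to $[n]$ corresponds to the partial matching pairing $r_i$ with $c_{\pi(i)}$ for $i\in\domdef(\pi)$. You have merely written out this bijection and the weight comparison in more detail than the paper does.
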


A partial matching in a graph $G$ is a set of pairs of vertices connected by an edge such that no vertex appears in more than a pair. Equivalently, a partial matching can be seen as a set of disjoint edges. The weight of a partial matching is the product of the weights of its edges.

The proof of the lemma is quite straightforward as a partial injective map $\pi$ from $[n]$ to $[n]$ exactly defines a partial matching in $G$ such that for $i\in\domdef(\pi)$, $r_i$ is matched with $c_{\pi(i)}$.

\begin{lemma} \label{partial2}
Let $G$ be the complete bipartite graph with two independent sets of vertices $V_r$ and $V_c$ such that the edge between $r_i$ and $c_j$ is labelled by $B_{ij}$ (the matrix $B$ is the biadjacency matrix of $G$). Let $A$ be its adjacency matrix. Then in characteristic $2$, 
\[\det(A+I_{2n}) = (\per^*(B))^2,\]
where $I_{2n}$ is the identity matrix of dimensions 
$2n$.
\end{lemma}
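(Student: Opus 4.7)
The plan is to apply Lemma~\ref{lemma:cycle-cover} to the graph whose adjacency matrix is $A+I_{2n}$. This matrix is precisely the adjacency matrix of the bipartite graph $G$ augmented with a loop of weight $1$ on every vertex. First I would invoke Lemma~\ref{lemma:cycle-cover} to write, in characteristic $2$,
\[
\det(A+I_{2n}) = \sum_{C} w(C),
\]
where $C$ ranges over the cycle covers of this augmented graph whose cycles all have length $1$ or $2$.

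Next I would describe these cycle covers explicitly. Since $G$ is bipartite with parts $V_r$ and $V_c$, the only $2$-cycles of the symmetric digraph associated to $G$ are of the form $r_i \to c_j \to r_i$, one for each edge $r_i c_j$; the only $1$-cycles are the newly added loops. Hence each admissible cycle cover is obtained by choosing a partial matching $M$ of $G$ between $V_r$ and $V_c$ (which vertices are covered by $2$-cycles) and placing a loop on every vertex left unmatched. A $2$-cycle on the edge $r_ic_j$ contributes weight $B_{ij}\cdot B_{ij} = B_{ij}^2$, since each of the two arcs of the cycle carries the weight $B_{ij}$, while every loop contributes $1$. With the convention of the section that $w(M)$ is the product of the weights of the edges used by the partial matching, the weight of the corresponding cycle cover equals $w(M)^2$.

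Combining this with the lemma stated just above the target, which gives $\per^*(B)=\sum_{M} w(M)$ over partial matchings $M$ of $G$, I obtain
\[
\det(A+I_{2n}) \;=\; \sum_{M} w(M)^2.
\]
To finish, I would invoke the Frobenius endomorphism in characteristic $2$: squaring commutes with sums, so $\sum_{M} w(M)^2 = \bigl(\sum_M w(M)\bigr)^2 = \per^*(B)^2$, which is the desired identity. No serious obstacle is expected; the only subtlety worth guarding against is the bookkeeping between the two weight conventions (edge-product for partial matchings versus arc-product for cycle covers), which is exactly why each $2$-cycle on an edge of weight $B_{ij}$ contributes $B_{ij}^2$ rather than $B_{ij}$.
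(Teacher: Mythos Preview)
Your proposal is correct and follows essentially the same approach as the paper's own proof: both apply Lemma~\ref{lemma:cycle-cover} to the graph $G$ augmented with weight-$1$ loops, identify the surviving cycle covers with partial matchings of $G$, observe that each such cover has weight $w(M)^2$, and then use the Frobenius identity in characteristic~$2$ together with the preceding lemma expressing $\per^*(B)$ as a sum over partial matchings. Your remark on the two weight conventions is exactly the point the paper makes implicitly when it says a length-$2$ cycle contributes the square of the edge weight.
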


\begin{proof}
By Lemma~\ref{lemma:cycle-cover}, to compute a determinant in characteristic $2$, one can focus only on cycles of length at most $2$. A cycle cover with such cycles actually is a partial matching when the graph is symmetric (length-$2$ cycles define the pairs of vertices, and length-$1$ cycles are isolated vertices). Considering $G$ as a symmetric digraph, the weight of a cycle cover is equal to the product of the weights of its loops and the square of the weights of the edges it uses (a length-$2$ cycle corresponds to an edge).

Consider the graph $G'$ obtained from $G$ by adding weight-$1$ loops on all its vertices. In other words, $G'$ is the graph whose adjacency matrix is $A+I_{2n}$. By the previous remark, and by the fact that the loops have weight $1$, the determinant of $A+I_{2n}$ is 
\[\det(A+I_{2n})=\sum_\mu w(\mu)^2=\Bigl(\sum_\mu w(\mu)\Bigr)^2\]
where $\mu$ ranges over the partial matchings of $G'$ and $w(\mu)$ is the weight of the partial matching $\mu$. The second equality is true as the characteristic of the field is $2$.

Recall now that $G$ is bipartite. Of course, the partial matchings of $G$ and $G'$ are the same. So 
\[\per^*(B)=\sum_\mu w(\mu),\]
where $\mu$ ranges over the partial matchings of $G$. This proves the lemma.
\end{proof}
An alternative proof of this lemma was suggested by an anonymous referee.
In any field, we have the polynomial identity 
$\per(A+tI_{2n})=\sum_{k=0}^{2n} c_k t^{2n-k}$, where $c_k$ is the sum 
of the permanents of all central minors of $A$ of size $k$.
In particular, we have $\det(A+I_{2n})=\sum_{k=0}^{2n} c_k$ 
in characteristic~2. A nonzero permanent of a central minor of $A$ is 
of the form 
\[\per \begin{bmatrix}
0 & M\\
M^T & 0
\end{bmatrix}
= \per(M)^2,\]
where $M$ is a square submatrix of $B$.
Hence $\det(A+I_{2n}) = \sum_{M \subseteq B} \per(M)^2$, where $M$ ranges
over all square submatrices of $B$. Since we are in characteristic~2,
this last sum is equal to $(\sum_{M \subseteq B} \per(M))^2$.
But we have 
$\per^*(B)=\sum_{M \subseteq B} \per(M)$  
by definition of the partial permanent, 
and the conclusion of the lemma follows.

Lemma~\ref{partial2} shows in particular that to compute the parity of the number of partial matchings in a bipartite graph, it is sufficient to compute a determinant (this is the case where $G$ is not edge-weighted). Therefore, this problem is solvable in polynomial time. This was already mentioned by Valiant~\citeyearp{Val05} but without any proof or reference.

\begin{theorem}
In characteristic $2$, the family $((\PER^*)^2_n)$ is in $\VP_{ws}$.
\end{theorem}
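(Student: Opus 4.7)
The plan is to derive the theorem as an immediate consequence of Lemma~\ref{partial2} together with the well-known fact that the $n \times n$ determinant polynomial lies in $\VP_{ws}$.

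First, I would apply Lemma~\ref{partial2} with $B = X_n$, the $(n\times n)$ matrix of indeterminates $(X_{ij})$ whose partial permanent is $\PER^*_n$. The lemma gives the identity, valid in characteristic~$2$,
\[
(\PER^*_n)^2 \;=\; \det\!\begin{bmatrix} I_n & X_n \\ X_n^T & I_n \end{bmatrix},
\]
since the adjacency matrix $A$ of the complete bipartite graph with biadjacency matrix $X_n$ is exactly the off-diagonal block matrix with $X_n$ and $X_n^T$, and $A+I_{2n}$ is the $(2n\times 2n)$ matrix displayed above. The key point is that every entry of this matrix is either a constant in $\{0,1\}$ or one of the indeterminates $X_{ij}$.

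Second, I would invoke the classical result that the $(2n\times 2n)$ formal determinant $\DETfam_{2n}$ is computed by a weakly-skew circuit of size polynomial in $n$: as recalled in Subsection~\ref{sec:Meena}, the Berkowitz construction \citep{Ber84,MP08} gives a weakly-skew circuit of size $O(n^5)$, and even $O(n^4)$ is available via Mahajan--Vinay. In particular $(\DETfam_n) \in \VP_{ws}$. Specializing this universal circuit by substituting the appropriate constants and variables into its $(2n)^2$ input gates yields a weakly-skew circuit of size polynomial in $n$ computing the right-hand side of the identity above, namely $(\PER^*_n)^2$.

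Finally, to conclude $((\PER^*)^2_n)\in \VP_{ws}$ formally, I would check that the substitution is a $p$-projection in the sense of \citep{Burgisser,MP08}: the entries of $A_n+I_{2n}$ are computed from the variables $(X_{ij})$ by the projection that sends each input of the universal determinant circuit to a variable or to a constant from $\{0,1\}$, and the transformation is uniform and polynomial-time in $n$. Since $\VP_{ws}$ is closed under $p$-projections, the family $((\PER^*)^2_n)$ lies in $\VP_{ws}$. There is essentially no obstacle here: the whole argument rests on Lemma~\ref{partial2}, which has already done the combinatorial work of equating the square of the partial permanent with the determinant of a matrix with very simple entries; after that, only the standard fact that the determinant is weakly-skew-computable is needed.
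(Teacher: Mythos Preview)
Your proposal is correct and follows essentially the same approach as the paper: both use Lemma~\ref{partial2} to exhibit $(\PER^*_n)^2$ as a $p$-projection of $\DETfam_{2n}$ and then invoke that $\DETfam$ lies in $\VP_{ws}$. Your write-up is simply a more detailed unpacking of the paper's one-line argument.
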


\begin{proof}\sloppy
The previous lemma shows that the polynomial $(\PER^*)^2_n$ is a $p$-projection of $\DETfam_{2n}$ in characteristic $2$. Thus, $((\PER^*)^2_n)$ is in $\VP_{ws}$.
\end{proof}

Suppose that $(\PER^*_n)$ is $\VNP$-complete. Then every $\VNP$ family $(f_n)$ is a $p$-projection of $(\PER^*_n)$, and thus $(f_n^2)$ is a $p$-projection of $((\PER^*)^2_n)$. Let $\VNP^2=\{(f_n^2) : (f_n)\in\VNP\}$ be the class of \emph{squares of $\VNP$ families}. This implies the following corollary of the theorem:

\begin{corollary}\label{cor:square}
In any field of characteristic $2$, if $(\PER^*_n)$ is $\VNP$-complete, then $\VNP^2\subseteq \VP_{ws}$.
\end{corollary}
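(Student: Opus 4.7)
The plan is to chain together three ingredients: $\VNP$-completeness of $(\PER^*_n)$, the immediately preceding theorem giving $((\PER^*)^2_n) \in \VP_{ws}$, and the closure of $\VP_{ws}$ under $p$-projection. Concretely, I would start by fixing an arbitrary family $(f_n) \in \VNP$ and aim to show $(f_n^2) \in \VP_{ws}$, since by definition every element of $\VNP^2$ has this form.

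By the assumed $\VNP$-completeness of $(\PER^*_n)$, there is a polynomial $q$ and, for each $n$, an assignment of the indeterminates of $\PER^*_{q(n)}$ to variables of $f_n$ and field constants such that $f_n$ is obtained as this $p$-projection. The key small observation is that $p$-projections are preserved under squaring: if $f_n(\bar x) = \PER^*_{q(n)}(\sigma_n(\bar x))$ for a substitution $\sigma_n$ whose entries are variables or constants, then applying the same substitution to the family $((\PER^*)^2_n)$ yields $\PER^*_{q(n)}(\sigma_n(\bar x))^2 = f_n^2(\bar x)$. Hence $(f_n^2)$ is a $p$-projection of $((\PER^*)^2_n)$.

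Now I invoke the theorem preceding the corollary, which states that $((\PER^*)^2_n) \in \VP_{ws}$ in characteristic $2$. Since $\VP_{ws}$ is closed under $p$-projection (this is part of its definition; see \citep{MP08}), it follows that $(f_n^2) \in \VP_{ws}$. As $(f_n)$ was an arbitrary element of $\VNP$, we conclude $\VNP^2 \subseteq \VP_{ws}$.

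There is no serious obstacle here: the corollary is essentially a formal consequence of the preceding theorem together with the definitions of completeness and of $\VP_{ws}$. The only point that deserves an explicit line in the write-up is the verification that squaring commutes with taking $p$-projections, which is immediate from the substitution being a morphism of polynomial rings. The argument goes through verbatim over any field of characteristic $2$ because both the preceding theorem and the closure property of $\VP_{ws}$ are valid there.
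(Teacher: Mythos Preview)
Your proposal is correct and matches the paper's own argument essentially verbatim: the paper also observes that a $p$-projection from $(\PER^*_n)$ to $(f_n)$ immediately yields a $p$-projection from $((\PER^*)^2_n)$ to $(f_n^2)$, and then invokes the preceding theorem to conclude. There is nothing to add.
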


This situation is unlikely to happen. In particular, it would be interesting to investigate whether this inclusion implies that $\VP_{ws}=\VNP$ in characteristic $2$. Let us now give another consequence of $(\PER^*_n)$ being $\VNP$-complete. This only holds for finite fields of characteristic $2$ but may give a stronger evidence that $(\PER^*_n)$ is unlikely to be $\VNP$-complete.

\begin{theorem}
If the partial permanent family is $\VNP$-complete in a finite field of characteristic $2$, then $\parity\Ppoly=\NC^2/\poly$, and the polynomial hierarchy collapses to the second level.
\end{theorem}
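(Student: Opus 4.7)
My plan is to translate the algebraic consequence $\VNP^2 \subseteq \VP_{ws}$ given by Corollary~\ref{cor:square} into a statement about Boolean complexity, via a Frobenius trick combined with Bürgisser's correspondence between algebraic complexity classes over finite fields and standard counting/circuit classes. Assume $(\PER^*_n)$ is $\VNP$-complete in some finite field $\FF_{2^k}$ of characteristic~$2$. By Corollary~\ref{cor:square} we have $\VNP^2 \subseteq \VP_{ws}$ over $\FF_{2^k}$. The key algebraic identity is that $a^{2^k}=a$ for every $a \in \FF_{2^k}$, so a polynomial $f_n$ and its $2^k$-th power $f_n^{2^k}$ define the same function on $\FF_{2^k}^n$. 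Using $\VP_{ws} \subseteq \VNP$, I would iterate the hypothesis: from $(f_n)\in \VNP$ we obtain $(f_n^{2})\in \VP_{ws}\subseteq\VNP$, then $(f_n^{4})\in\VP_{ws}$, and after $k$ iterations $(f_n^{2^k})\in\VP_{ws}$, a family that evaluates identically to $(f_n)$ on $\FF_{2^k}^n$.

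Next, I would invoke Bürgisser's bridge between algebraic and Boolean complexity (\citep{Burgisser}) over $\FF_{2^k}$, with field elements encoded by bit-strings of length~$k$: evaluation of $\VNP$-families yields exactly $\parity\Ppoly$ (the sum over exponentially many terms being realised bit-wise as $k$ parities), while evaluation of $\VP_{ws}$-families lies in $\NC^2/\poly$ since any $\VP_{ws}$-family is a $p$-projection of the determinant, which over any fixed finite field is in $\NC^2$. Combining this with the Frobenius iteration: every language in $\parity\Ppoly$ is the evaluation of some $(f_n)\in\VNP$, which coincides with the evaluation of $(f_n^{2^k})\in\VP_{ws}$, hence lies in $\NC^2/\poly$. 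Since $\NC^2/\poly \subseteq \Ppoly \subseteq \parity\Ppoly$ holds trivially, we obtain the first conclusion $\parity\Ppoly = \NC^2/\poly$.

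For the collapse of $\PH$, the rest is standard complexity theory. From $\parity\Ppoly = \NC^2/\poly \subseteq \Ppoly$ we deduce $\parity\mathsf{P} \subseteq \Ppoly$. Toda's theorem gives $\PH \subseteq \BPP^{\parity\mathsf{P}}$; replacing the oracle by its polynomial advice yields $\PH \subseteq \BPP/\poly$, and by Adleman's theorem $\BPP \subseteq \Ppoly$, hence $\PH \subseteq \Ppoly$. Karp-Lipton's theorem then collapses $\PH$ to its second level $\Sigma_2^p = \Pi_2^p$.

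The main obstacle is making the Bürgisser-style algebraic-to-Boolean correspondence precise for $\VNP(\FF_{2^k})$ and $\VP_{ws}(\FF_{2^k})$ when $k > 1$, including the binary encoding of field elements and the fact that $\FF_{2^k}$-summation over an exponential index set is computable in $\parity\Ppoly$ componentwise. Once these correspondences are in place, the Frobenius iteration and the subsequent classical consequences of $\parity\mathsf{P}\subseteq\Ppoly$ finish the argument routinely.
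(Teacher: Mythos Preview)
Your plan is correct and close in spirit to the paper's argument, but the two proofs differ in their packaging. For the first conclusion, the paper does not iterate the squaring map; instead it observes directly that the \emph{boolean part} of a family $(f_n)$ and of $(f_n^2)$ coincide (since $0^2=0$ and $1^2=1$), so that $\BoolP(\VNP)\subseteq\BoolP(\VNP^2)\subseteq\BoolP(\VP_{ws})\subseteq\BoolP(\VP)$, and then invokes Bürgisser's identifications $\BoolP(\VNP)=\parity\Ppoly$ and $\BoolP(\VP)\subseteq\NC^2/\poly$ over finite fields of characteristic~$2$. Your Frobenius-iteration approach achieves the same effect and makes the case $\FF_{2^k}$ with $k>1$ more explicit, at the price of $k$ successive applications of Corollary~\ref{cor:square}. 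For the collapse of $\PH$, the paper takes a shorter path: it cites the non-uniform Valiant--Vazirani theorem $\NP/\poly\subseteq\parity\Ppoly$ (Bürgisser, Theorem~4.10), which together with $\parity\Ppoly=\NC^2/\poly$ yields $\NP\subseteq\Ppoly$ and hence Karp--Lipton immediately, bypassing Toda's theorem, Adleman's theorem, and the oracle-replacement step you sketch.
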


The proof of this theorem uses the \emph{boolean parts} of Valiant's complexity classes defined in \citep{Burgisser}. In the context of finite fields of characteristic $2$, the boolean part of a family $(f_n)$ of polynomials with coefficients in the ground field $\FF_2$ is the function $bp_f:\{0,1\}^*\to\{0,1\}$ such that for $x\in\{0,1\}^n$, $bp_f(x)=f_n(x) \pmod 2$. 
The boolean part $\BoolP(C)$ of a Valiant's class $C$ is the set of boolean parts of all $f\in C$.

\begin{proof}

Let $(f_n)$ be a $\VNP$ family and $(\varphi_n)$ its boolean part. As $\varphi_n(x)\in\{0,1\}$ for all $x\in\{0,1\}^n$, $(\varphi_n)$ is the boolean part of $(f_n^2)$ too. This shows that $\BoolP(\VNP)\subseteq\BoolP(\VNP^2)$. 
By Corollary~\ref{cor:square}, $\VNP^2\subseteq\VP_{ws}\subseteq\VP$.  Thus, $\BoolP(\VNP)\subseteq\BoolP(\VNP^2)\subseteq\BoolP(\VP)$ and as $\VP\subseteq\VNP$
\[\BoolP(\VP)=\BoolP(\VNP).\]

B\"urgisser~\citeyearp{Burgisser} shows that in a finite field of characteristic $2$, $\parity\Ppoly=\BoolP(\VNP)$, and $\BoolP(\VP)\subseteq\NC^2/\poly$. Hence, $\parity\Ppoly\subseteq\NC^2/\poly$. Moreover, $\NC^2/\poly\subseteq\Ppoly\subseteq\parity\Ppoly$ hence we conclude that
\[\parity\Ppoly=\NC^2/\poly.\]

The collapse of the polynomial hierarchy follows from a non uniform version of the Valiant-Vazirani Theorem~\citeyearp{ValVaz86}: {Theorem~4.10} in \citep{Burgisser} states that $\NP/\poly\subseteq\parity\Ppoly$. Therefore, 
\[\NC^2/\poly\subseteq\NP/\poly\subseteq\parity\Ppoly=\NC^2/\poly.\]
In particular, $\Ppoly=\NP/\poly$ and Karp and Lipton \citeyearp{KL82} showed that this implies the collapse of the polynomial hierarchy to the second level.
\end{proof}

Since the submission of this paper, B\"urgisser's open problem has been completely settled. Guillaume Malod~\citeyearp{Mal11} has proved, using clow sequences \emph{\`a la} Mahajan and Vinay~\citeyearp{MV97}, that $\PER^*\in\VP_{ws}$. Stefan Mengel subsequently noticed that the result can be derived from a result of Valiant on Pfaffian Sums~\citep{Val02}, see also \citep{GLV11}.

\section{Conclusion} 

Figure~\ref{fig:graphs} shows 
the graphs obtained from the weakly-skew circuit and the formula of Fig.~\ref{fig:circuit-weak-form}\subref{fig:ex-weak} and \subref{fig:ex-form} 
for a field of characteristic different from $2$, and Table~\ref{tab:summary} recalls all the constructions used in this paper.

    \begin{figure}[htbp]
    \begin{center}
    \input{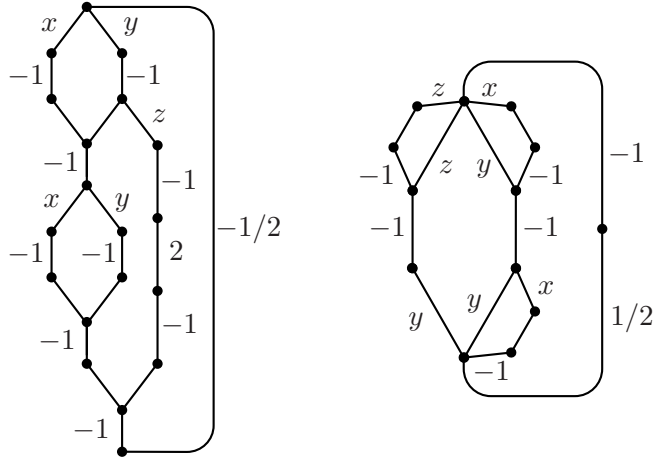}
    \caption{Graphs obtained from the weakly-skew circuit and the formula  given in Fig.~\ref{fig:circuit-weak-form}\subref{fig:ex-weak} and \subref{fig:ex-form}.}
    \label{fig:graphs}
    \end{center}
    \end{figure}

Table~\ref{tab:comp} compares the results obtained, in this paper and in previous ones. The bounds are given for a formula of green size $e$ and for a weakly-skew circuit of green size $e$ with $i$ input gates labelled by a variable.

\begin{table}[htbp]
\begin{minipage}[t]{\textwidth}
\renewcommand{\footnoterule}{}
\begin{center}
\begin{tabular}[c]{|l||c|c|}
\hline
 & Non-symmetric & Symmetric \\
 & matrix        & matrix    \\\hline
Formula & $e+1$ & $2e+1$\footnote{The bound is achieved if and only if the entries can be complex numbers. Else, the bound is $2e+2$.}\\\hline
Weakly-skew circuit & $(e+i)+1$ & $2(e+i)+1$\\\hline
\end{tabular}
\end{center}
\end{minipage}
\caption{Bounds for determinantal representations of formulas and weakly-skew circuits.
The bounds for symmetric representations are new, and the bound for a non-symmetric representation 
of a weakly-skew circuit is a slight improvement of known bounds.}
\label{tab:comp}
\end{table}

\begin{landscape}
\newcommand\parbox[c]{2}{\input{tab-[.svg_tex}} ]{\parbox[c]{#2}{\input{tab-#1.svg_tex}} } 
\begin{table}[p]\centering
\begin{tabular}{|p{15ex}||c||c|c||c|c|} \hline
& Valiant's construction & Formulas & Formulas       & Weakly skew & Weakly skew circuits  \\
& with constants         &          & with constants & circuits    & with constants \\
&(Section~\ref{sec:valiant})&(Section~\ref{sec:sym-form})&(Section~\ref{sec:sym-form})&(Section~\ref{sec:fat-size})&(Section~\ref{sec:green-size})\\
    \hline\hline
\centering Input gate & \parbox[c]{16pt}{\input{tab-val-base.svg_tex}}  & \parbox[c]{16pt}{\input{tab-form-base.svg_tex}}  & \parbox[c]{16pt}{\input{tab-form-base.svg_tex}}  & \parbox[c]{89.6pt}{\input{tab-weak-base.svg_tex}}  & \parbox[c]{89.6pt}{\input{tab-weak-base.svg_tex}}  \\
           & constant $1$ & no constant & constant $1$ & no constant & constants $1$ \\\hline
\centering Addition gate & \parbox[c]{68pt}{\input{tab-val-sum.svg_tex}} &\parbox[c]{49.6pt}{\input{tab-form-sum.svg_tex}} &\parbox[c]{61.6pt}{\input{tab-form-sum-cst.svg_tex}} &\parbox[c]{95.2pt}{\input{tab-weak-sum.svg_tex}} &\parbox[c]{119.2pt}{\input{tab-weak-sum-csts.svg_tex}} \\
              & constant $c_1$ & no constant & constant $c_1$ & no constant & constants $1$ \\\hline
\centering Multiplication gate & \parbox[c]{22.4pt}{\input{tab-val-prod.svg_tex}}  & \parbox[c]{27.2pt}{\input{tab-form-prod.svg_tex}}  & \parbox[c]{27.2pt}{\input{tab-form-prod.svg_tex}}  & \parbox[c]{44pt}{\input{tab-weak-prod.svg_tex}}  & \parbox[c]{44pt}{\input{tab-weak-prod.svg_tex}} \\
                    & constant $c_1c_2$ & no constant & constant $c_1c_2$ & no constant & constant $c_1c_2c_\beta c_\gamma$ \\\hline
\end{tabular}
\caption{Summary of the constructions} 
\label{tab:summary}
\end{table}
\end{landscape}

The $(e+1)$ bound for the representation of a formula by a (non-symmetric) matrix determinant was given in \citep{LR06} by a method purely based on matrices. We show in Section~\ref{sec:valiant} that this bound can also be obtained directly from Valiant's original proof when we remove the little flaw it contains.
The $(e+i+1)$ bound for the representation of a polynomial computed by a weakly-skew circuit can be obtained from the $(m+1)$ bound (where $m$ is the fat size of the circuit) obtained in \citep{MP08} if we use our minimization lemma (Lemma~\ref{lemma:minimization}) as well as a similar trick as in the proof of Theorem~\ref{thm:weak+csts}. Both bounds for the symmetric cases are given in this paper.

A formula is a special case of weakly-skew circuit. If our construction for weakly-skew circuits is applied to a formula, this yields a matrix that can be as large as twice the size of the matrix obtained with the specific constructions for the formulas. In the converse way, one could turn a weakly-skew circuit into a formula and then apply the construction for the formula. Yet, turning a weakly-skew circuit into a formula of polynomial size is not known to be possible. In fact, this would give a polynomial size formula for the determinant, and hence a parallel time 
upper bound of $O(\log n)$. So far, the best upper bound is Csansky's famous $O(\log^2 n)$ upper bound~\citeyearp{Csa76}.

All of these results are valid for any field of characteristic different from $2$. We showed that there are some important differences 
for the complexity of polynomials over fields of characteristic $2$.
The question of characterizing which polynomials can be represented as determinants of symmetric matrices is quite intriguing and remains open.

\bibliographystyle{myplainnat}
\bibliography{biblio}

\end{document}